\newtheorem{theorem}{Theorem}[section]
\newtheorem{lemma}[theorem]{Lemma}
\newtheorem{proposition}[theorem]{Proposition}
\newtheorem{definition}[theorem]{Definition}
\newcommand{\paren}[1]{\left(#1\right)}
\newcommand{\D}[2]{\frac{d#1}{d#2}}
\newcommand{\DD}[3]{\frac{d^{#1}{#2}}{d{#3}^{#1}}}
\newcommand{\PD}[2]{\frac{\partial#1}{\partial#2}}
\newcommand{\at}[2]{\left. #1 \right|_{#2}}
\newcommand{\mb}[1]{\mathbf{#1}}
\newcommand{\bm}[1]{\boldsymbol{#1}}
\newcommand{\abs}[1]{\left\lvert #1 \right\rvert}
\newcommand{\ip}[2]{\left\langle #1,#2 \right\rangle}
\newcommand{\wh}[1]{\widehat{#1}}
\newcommand{\wt}[1]{\widetilde{#1}}
\numberwithin{equation}{section}
\author{Yoichiro Mori\\
School of Mathematics\\
University of Minnesota\\
206 Church St SE, Minneapolis MN, 55455}
\title{Mathematical Properties 
of Pump-Leak Models of Cell Volume Control and Electrolyte Balance}
\date{Oct. 9, 2011}
\begin{document}

\maketitle

\begin{abstract}
Homeostatic control of cell volume and intracellular electrolyte content is a fundamental problem in physiology and is central to the functioning of epithelial systems. These physiological processes are modeled using pump-leak models, a system of differential algebraic equations that describes the balance of ions and water flowing across the cell membrane. Despite their widespread use, very little is known about their mathematical properties. Here, we establish analytical results on the existence and stability of steady states for a general class of pump-leak models. We treat two cases. When the ion channel currents have a linear current-voltage relationship, we show that there is at most one steady state, and that the steady state is globally asymptotically stable. If there are no steady states, the cell volume tends to infinity with time. When minimal assumptions are placed on the properties of ion channel currents, we show that there is an asymptotically stable steady state so long as the pump current is not too large. The key analytical tool is a free energy relation satisfied by a general class of pump-leak models, which can be used as a Lyapunov function to study stability.\newline
{\em Key Words}: Cell Volume Control, Electrolyte Balance, Free Energy,
Lyapunov Function, Differential Algebraic System.
\end{abstract}

\section{Introduction}

Cells contain a large number of organic molecules
that do not leak out through the cell membrane. 
The presence of organic molecules and 
their attendant counterions results in excess intracellular 
osmotic pressure. The plasma membrane is not mechanically strong 
enough to withstand significant differences in osmotic pressure,
and thus the cell will tend to swell and burst. 
Plant cells and bacteria have a mechanically rigid 
cell wall to guard against this tendency.
Animal cells maintain their cell volume with ionic pumps and 
ionic channels that together regulate the ionic composition 
of the cytosol \citep{hoffmann2009physiology,evans2009osmotic,boron2008medical}. 
 
This ``pump-leak'' mechanism is typically modeled in the following fashion
\citep{KS,hoppensteadt2002modeling}.
Consider a cell of volume $v$ and let 
$[\cdot]_{\rm i,e}$ be the intracellular and extracellular ionic concentrations
respectively. We only consider the ions Na$^+$, K$^+$ and Cl$^-$.
Let the cell be in a large and well-stirred extracellular 
bath so that $[\cdot]_{\rm e}$ can be assumed constant.
We have the following balance equations for the three ionic species.
\begin{subequations}\label{NaKCl}
\begin{align}
\D{}{t}\paren{Fv[\text{Na}^+]_{\rm i}}
&=-g_{\rm Na}\paren{\phi-\frac{RT}{F}\ln \paren{\frac{[\text{Na}^+]_{\rm e}}{[\text{Na}^+]_{\rm i}}}}-3\alpha F,\label{Na}\\
\D{}{t}\paren{Fv[\text{K}^+]_{\rm i}}
&=-g_{\rm K}\paren{\phi-\frac{RT}{F}\ln \paren{\frac{[\text{K}^+]_{\rm e}}{[\text{K}^+]_{\rm i}}}}+2\alpha F,\label{K}\\
\D{}{t}\paren{-Fv[\text{Cl}^-]_{\rm i}}
&=-g_{\rm Cl}\paren{\phi+\frac{RT}{F}\ln \paren{\frac{[\text{Cl}^-]_{\rm e}}{[\text{Cl}^-]_{\rm i}}}}.\label{Cl}
\end{align}
Here, $\phi$ is the membrane potential,  
$g_{\cdot}$ are the ion channel conductances for each species of ion, 
$\alpha$ is the strength 
of the pump current, $F$ is the Faraday constant, and 
$RT$ is the ideal gas constant times absolute temperature.
The pump current for Na$^+$ and K$^+$ has a ratio of $3:2$ reflecting
the stoichiometry of the Na-K ATPase (in \citep{hoppensteadt2002modeling}, 
this ratio is set to $1:1$ for simplicity).
The above balance laws are supplemented by the following:
\begin{align}
\nonumber 0=&[\text{Na}^+]_{\rm i}+[\text{K}^+]_{\rm i}-[\text{Cl}^{-}]_{\rm i}+\frac{zA}{v}\\
=&[\text{Na}^+]_{\rm e}+[\text{K}^+]_{\rm e}-[\text{Cl}^{-}]_{\rm e},\label{cmv}\\
\nonumber \D{v}{t}
=&\zeta RT\Bigl([\text{Na}^+]_{\rm i}+[\text{K}^+]_{\rm i}+[\text{Cl}^{-}]_{\rm i}+\frac{A}{v}\\
&-([\text{Na}^+]_{\rm e}+[\text{K}^+]_{\rm e}+[\text{Cl}^{-}]_{\rm e})\Bigr).\label{w}
\end{align}
\end{subequations}
Equation \eqref{cmv} is the electroneutrality condition, where 
$A$ is the total amount of organic molecules in the cell and $z$ the average 
charge on one organic molecule. We have assumed that there are no
organic molecules outside the cell and that they do not pass through 
the membrane. Equation \eqref{w} says that water 
flows into or out of the cell according to the osmotic pressure difference
across the membrane. Here, $\zeta$ is the membrane permeability to water flow.
System \eqref{NaKCl} forms a system of differential algebraic equations.
We would like to see under what condition the above system possesses 
a stable steady state, representing a cell with a stable cell volume 
and ionic composition.

Models of this type were first 
introduced in \citep{tosteson1960regulation,tosteson1964regulation} 
and have since been extended and modified by 
several authors to study cell volume control
\citep{jakobsson1980interactions,lew1991mathematical,
hernandez_modeling_1998,armstrong2003k}.
Pump-leak models are widely used in epithelial physiology.
Epithelial cells need to maintain their 
cell volume and ionic composition in the face of widely varying 
extracellular ionic and osmotic conditions. There is a large body 
of mathematical modeling work in the context of renal physiology 
(see \citep{weinstein1994mathematical,weinstein2003mathematical} 
for review). Mathematical modeling studies of other systems 
using pump-leak models include \citep{larsen2002analysis,
fischbarg2005mathematical,yi2003mathematical}.

Despite their widespread use and fundamental physiological importance, 
there seem to be very few analytical results 
regarding the behavior of pump-leak models. 
As for system \eqref{NaKCl}, \citep{KS} shows the following.
Assuming $z\leq -1$, there is a unique steady state with 
a finite positive cell volume if and only if:
\begin{equation}\label{NaKClcond}
\frac{[{\rm Na}^+]_{\rm e}\exp(-3\alpha F/(g_{\rm Na}RT/F))+
[{\rm K}^+]_{\rm e}\exp(2\alpha F/(g_{\rm K}RT/F))}
{[{\rm Na}^+]_{\rm e}+[{\rm K}^+]_{\rm e}}<1.
\end{equation}
This means that if condition
\begin{equation}
3[\text{Na}^+]_{\rm e}/g_{\rm Na}>2[\text{K}^+]_{\rm e}/g_{\rm K}, \label{32NaK}
\end{equation}
is satisfied, system \eqref{NaKCl}
possesses a steady state for sufficiently small $\alpha>0$.

To the best of the author's knowledge, there are
no analytical results on the stability of 
these steady states. In
\citep{weinstein1997dynamics} the author studies 
an epithelial model of greater complexity than \eqref{NaKCl}. 
The author obtains an algebraic expression for the linearized matrix
around steady state and numerically studies its eigenvalues for 
physiological parameter values. The computation of this linearization 
is complicated by the presence of an algebraic constraint
(the electroneutrality condition). Some authors 
have considered simpler {\em non}-electrolyte models of (epithelial) 
cell volume control and solute transport.
Analytical results for such models can be found in
\citep{weinstein1992analysis,hernandez2003stability,hernandez2007general,
benson2010general}. 

The goal of this paper is to establish analytical results on the 
existence and stability of steady states for a large class of 
pump-leak models that includes \eqref{NaKCl} and other representative 
models as special cases. In Section \ref{modelformulation}, we 
introduce the general class of pump-leak models that we shall treat 
in this paper. We consider $N$-species of ions subject to the 
electroneutrality constraint. Cell volume is controlled by 
the transmembrane osmotic pressure difference. 
The key observation of this Section is that the system of equations 
satisfies a free energy identity. This identity and its variations
will be the main tool in studying the stability of steady states.
The presence of a free energetic structure in pump-leak models leads 
us to natural structure conditions to be imposed on current-voltage 
relationships for ionic channel currents. 

In Section \ref{lin}, we study the case when the ionic channel current
(or passive ionic flux) has a linear current-voltage relationship,
the pump currents are constant and water flow is linearly proportional 
to the transmembrane osmotic pressure difference. System \eqref{NaKCl} is an 
example of such a system. We first establish the necessary and sufficient 
condition under which the system possesses a unique steady state. 
This condition, when applied to system \eqref{NaKCl}, reduces 
to \eqref{NaKClcond} (in fact, our conclusion is slightly stronger; 
we shall see that the restriction $z\leq -1$ 
is not needed in \eqref{NaKClcond}).
We then prove that this steady state 
is globally asymptotically stable. 
If a steady state does not exist, the cell volume $v$ tends to infinity
as time $t\to \infty$ for any initial condition. 
The main tool in proving these statements 
is a modified version of the free energy 
introduced in Section \ref{modelformulation}. 
This modified free energy $\wt{G}$ satisfies $d\wt{G}/dt=-J, J\geq 0$, 
thus defining a Lyapunov function.
Asymptotic stability of steady states follows 
by an examination of $\wt{G}$. To prove the global statements,
we also make use of the fact that $J$ is a Lyapunov function.
This is a consequence of the fact that, in a suitable set 
of variables, the system is a gradient flow of the convex function $\wt{G}$
with respect to a suitable metric.
We thus have a clear dichotomy;
if the system has a steady state, it is globally asymptotically stable
and if not, the cell bursts.
In Section \ref{epimod}, we discuss a simple epithelial model
in which the cell is in contact with a mucosal and serosal bath.
When the current voltage relationships for the ionic channels 
are all linear, the same stability results hold 
for this simple epithelial model. 

In many modeling studies using the pump-leak model, 
the current-voltage relation 
for the ionic channel current is not linear. 
The popular Goldman current voltage relation is one such example. 
The goal of Section \ref{general} is to establish a result on the 
existence and stability of steady states with minimal assumptions 
on the current-voltage relation. Indeed, all we assume here are 
properties required on thermodynamic grounds.
We first discuss solvability of the differential algebraic system. 
Solvability is not entirely trivial given
the algebraic constraint of electroneutrality. 
We then show that for a general class of pump-leak models,
there is a steady state for sufficiently small positive pump rates 
so long as a generalization of condition \eqref{32NaK} is satisfied.
We then show that this steady state is asymptotically stable. 
Our main tool here is again the modified or relative free energy. 
The main difficulty in establishing stability is 
that the current voltage relation cannot in general be written as 
a function of the chemical potential jump only (this difficulty 
is not present when the current voltage relation is linear). 
We shall see that this effect is small when the pump rate is sufficiently 
small, which allows us to establish asymptotic stability of the 
steady state. 

\section{Model Formulation and the Free Energy Identity}
\label{modelformulation}

Consider $N$ species of ion and let 
$c_k, k=1,\cdots, N$ be the intracellular
concentrations of the $k$-th species of ion. 
We let $c_k^{\rm e}$ be the extracellular concentrations 
of these ions, which are assumed to be positive and 
constant independent of time. 
Let $v$ be the volume of the cell. The balance equation 
for the ions can be written as follows:
\begin{equation}\label{ions}
\D{}{t}(v c_k)
=-j_k(\phi,\mb{c},\mb{c}^{\rm e})-p_k(\phi,\mb{c},\mb{c}^{\rm e}), \; 
k=1,\cdots,N.
\end{equation}
Here, we have written the transmembrane flux as the sum 
of the passive flux $j_k$ and the active flux $p_k$.
The active flux, typically generated by ionic pumps, 
requires energy expenditure whereas the passive flux,
carried by ionic channels and transporters, does not. 
The flux functions
$j_k$ and $p_k$ depend on the transmembrane potential $\phi$ 
(intracellular potential minus the extracellular potential)
as well as the vector of intracellular and extracellular
concentrations $\mb{c}=(c_1,\cdots,c_N)^T$ and 
$\mb{c}^{\rm e}=(c_1^{\rm e},\cdots,c_N^{\rm e})^T$, 
where $\cdot^T$ denotes the transpose.
We assume that $j_k$ and $p_k$ are $C^1$ functions of their arguments.
Since the $\mb{c}^{\rm e}$ are assumed constant, the $\mb{c}^{\rm e}$
only appear as parameters of the above differential equation. 
The dependence on $\mb{c}^{\rm e}$ will thus often not be shown explicitly.

The functional form of the active flux function $p_k$ can 
be arbitrary, but for $j_k$, we must 
impose structure conditions so that it represents a passive flux. 
Commonly used examples of $j_k$ are:
\begin{align}\label{jkL}
j_k^{\rm L}&=G_k\paren{RT\ln \paren{\frac{c_k}{c_k^{\rm e}}}+Fz_k\phi}
=G_k\mu_k,\; G_k>0,\\
\nonumber j_k^{\rm GHK}&=P_k\frac{Fz_k\phi}{RT}
\frac{c_k\exp\paren{\frac{Fz_k\phi}{RT}}-c_k^{\rm e}}
{\exp\paren{\frac{Fz_k\phi}{RT}}-1}\\
&=P_kc_k^{\rm e}\frac{Fz_k\phi/RT}
{\exp\paren{\frac{Fz_k\phi}{RT}}-1}
\paren{\exp\paren{\frac{\mu_k}{RT}}-1}\label{jkGHK}, \; P_k>0.
\end{align}
where $RT$ is the ideal gas constant times absolute temperature, 
$F$ is the Faraday constant and 
$z_k$ is the valence of the $k$-th species 
of ion (e.g., $1$ for Na$^+$, $-1$ for Cl$^-$ and so on).
We assume that there is at least one ionic species for which $z_k\neq 0$.
In the above, $\mu_k$ is the chemical potential of the $k$-th intracellular 
ion measured with respect to the extracellular bath:
\begin{equation}
\mu_k=RT\ln \paren{\frac{c_k}{c_k^{\rm e}}}+Fz_k\phi.\label{muk}
\end{equation}
Expression $j_k^{\rm L}$ is linear in $\mu_k$. If we multiply this 
by $Fz_k$ to change units from ion flux into electric current,
we obtain a linear current voltage relationship for this ionic current.
This was used in \eqref{NaKCl}. Expression $j_k^{\rm GHK}$ can be derived 
by assuming a constant electric field across the ion channel, 
and is known as the Goldman-Hodgkin-Katz current formula.
An important feature of both $j_k^{\rm L}$ and $j_k^{\rm GHK}$
is that they are increasing functions of $\mu_k$ for fixed $\phi$
and that it is $0$ when $\mu_k$ is $0$. 
In the general case, we require $j_k$ to satisfy a somewhat weaker version
of these properties, which we shall discuss later in relation 
to Proposition \ref{FEE}. 

Observe that $c_k$ can be expressed
in terms of $\mu_k$ and $\phi$ through \eqref{muk}. 
We shall often find it useful to view $j_k$ as a
function of $\phi$ and $\bm{\mu}=(\mu_1,\cdots,\mu_N)^T$ instead of 
$\phi$ and $\mb{c}$. We shall write this as
$j_k(\phi,\bm{\mu})$ in a slight abuse of notation.
We note that the dependence of $j_k$ on $\mu_l, l\neq k$
expresses the possibility that the flow of the $k$-th ion may 
be driven by the chemical potential gradient of the $l$-th ion.  
This is the case with many ionic transporters in which the 
flow of one species of ion is coupled to another. Indeed, 
many models of ionic transporter currents have this feature 
\citep{weinstein1983nonequilibrium,strieter_volume-activated_1990}. 

Equations \eqref{ions} are supplemented by:
\begin{align}
0&=\sum_{k=1}^N Fz_kc_k+\frac{FzA}{v}
=\sum_{k=1}^N Fz_kc_k^{\rm e},\label{EN}\\
\D{v}{t}&=-j_{\rm w}(\mb{c},\mb{c}^{\rm e},v).
\label{vol}
\end{align}
In \eqref{EN},
$A>0$ is the total amount of organic molecules inside the cell,
and $z$ is the average valence of intracellular organic molecules. 
Equation \eqref{EN} states that both the intracellular and 
extracellular concentrations satisfy the electroneutrality 
constraint. Electroneutrality of the extracellular space, together 
with the requirement that $z_k\neq 0$ for at least one $k$, requires that 
there must be at least two ionic species. 
In \eqref{vol}, $j_{\rm w}$ is the passive transmembrane water flux.
An example of $j_{\rm w}$ is:
\begin{equation}\label{jwexp}
j_{\rm w}=-\zeta\pi_{\rm w}, 
\; \pi_{\rm w}=RT\paren{\sum_{k=1}^N c_k^{\rm e}
-\paren{\sum_{k=1}^N c_k+\frac{A}{v}}}
\end{equation}
where $\zeta>0$ is the hydraulic conductivity of water through the membrane.
This simple prescription is what is 
used in \eqref{NaKCl} and many other models of 
cell volume and electrolyte control.
In this example, water flow is proportional to $\pi_{\rm w}$, the 
osmotic pressure difference across the membrane,  
whose expression is given by the van't Hoff law. 
We shall assume that $j_{\rm w}$ is a $C^1$ function only of $\pi_{\rm w}$.
The important property of \eqref{jwexp} is that 
$j_{\rm w}=0$ when $\pi_{\rm w}=0$ 
and that it is increasing in $\pi_{\rm w}$.
This property will be discussed further in relation 
to Proposition \ref{FEE}.

We seek solutions to the differential algebraic system 
\eqref{ions}, \eqref{EN} and \eqref{vol} given initial values for
$\mb{c}=(c_1,\cdots, c_N)^T, v$ and $\phi$ 
that satisfy the algebraic constraint \eqref{EN}.
We require that $c_k>0, k=1,\cdots, N$ and $v>0$ for all time. 
The membrane potential $\phi$ evolves so that the electroneutrality 
constraint \eqref{EN} is satisfied at each instant. Multiplying 
\eqref{ions} by $Fz_k$ and summing in $k$, we have:
\begin{equation}
I(\phi,\mb{c})=\sum_{k=1}^N Fz_k(j_k(\phi,\mb{c})+p_k(\phi,\mb{c}))=0,
\label{ENI}
\end{equation}
where we used \eqref{EN} to conclude that $I$, 
the total transmembrane electric current, must be $0$. 
This gives us an algebraic equation for 
$\phi$ that must be solved at each instant. The solvability of
this equation is a necessary condition for the initial value problem 
to be solvable. For certain specific functional 
forms of $j_k$ and $p_k$, like those of \eqref{jkL} and \eqref{jkGHK}
with $p_k$ constant,
the solvability of \eqref{ENI} is immediate. 
We shall discuss the general case in Section \ref{general}.

One way to avoid the above difficulty arising from the algebraic 
constraint 
is to replace \eqref{EN} with the following relation for 
intracellular concentrations:
\begin{equation}
C_{\rm m}\phi=F\paren{v\sum_{k=1}^N z_kc_k+zA},\label{cmphi}
\end{equation}
where $C_{\rm m}$ is the total membrane capacitance. This
states that the total charge inside the cell is equal to the 
charge stored on the membrane capacitor. 
We note that
\eqref{cmphi} has its own difficulties as a biophysical model. If the 
concentrations are defined as the total amount of intracellular 
ions divided by volume, \eqref{cmphi} is indeed correct. 
If we define $c_k$ to be the ionic concentration away from 
the surface charge layer (on the order of Debye length $\approx 1$nm 
in width), we must introduce surface ionic densities as was done in 
\citep{CAMCoS}.
As we shall see shortly, the left hand side of \eqref{cmphi} is 
often negligibly small and the electroneutrality condition \eqref{EN}
can thus be seen as a perturbative limit of condition \eqref{cmphi}. 
We shall discuss this point further after we make our system 
dimensionless. 

Scale ionic concentration, 
volume and membrane potential as follows 
and introduce the primed dimensionless variables:
\begin{equation}
c_k=c_0c_k', \; c_k^{\rm e}=c_0c_k^{\rm e\prime}, \; v=v_0v', \; \phi=\frac{RT}{F}\phi',
\end{equation} 
where $c_0$ and $v_0$ are the typical concentrations and volumes 
respectively. Equations \eqref{ions}, \eqref{EN} and \eqref{vol} become:
\begin{subequations}\label{system}
\begin{align}
\D{(v'c_k')}{t'}&=-j_k'(\phi',\bm{\mu}')-p_k'(\phi',\mb{c}'),\quad
\mu_k'=\ln \paren{\frac{c_k'}{c_k^{\rm e\prime}}}+z_k\phi', \label{ionsdless}\\
0&=\sum_{k=1}^N z_kc_k'+z\frac{A'}{v'}=\sum_{k=1}^N z_kc_k^{\rm e\prime}, \quad
A'=\frac{A}{c_0v_0},\label{ENdless}\\
\D{v'}{t'}&=-j_{\rm w}'(\pi_{\rm w}'),\quad  
\pi_{\rm w}'=\sum_{k=1}^N c_k^{\rm e\prime}-\paren{\sum_{k=1}^N c_k'+\frac{A'}{v'}},\label{voldless}
\end{align}
\end{subequations}
where $\bm{\mu}'=(\mu_1',\cdots,\mu_N')^T$ and $\mb{c}'=(c_1',\cdots,c_N')^T$.
Time $t$ and the flux functions $j_k, p_k$ and $j_{\rm w}$ 
are suitably rescaled to yield their respective primed variables.
We note that it is possible to further reduce the number of
constants, for example, by taking $c_0$ to be the total extracellular 
concentration. We shall not pursue this here, since it leads to some 
difficulty in understanding the physical meaning of each term in 
the resulting dimensionless system.

Equation \eqref{cmphi} yields:
\begin{equation}
\epsilon \phi'=\paren{v'\sum_{k=1}^N z_kc_k'+zA'},\quad 
\epsilon=\frac{C_{\rm m}RT/F}{Fc_0v_0}\label{cmphidless}
\end{equation}
where $\epsilon$ is a dimensionless parameter expressing the 
ratio between the amount of ions 
contributing to the surface charge and 
the absolute amount of charge in the cytosolic bulk. 
This quantity is typically very small (about $10^{-7}$) and 
we thus expect that it is an excellent approximation to let 
the left hand side of \eqref{cmphidless} be $0$ and adopt condition 
\eqref{EN} (or its dimensionless version \eqref{ENdless}), 
if the membrane potential does not vary too rapidly.
Most, if not all modeling studies of cellular electrolyte 
and water balance use the electroneutrality constraint \eqref{EN} or
\eqref{ENdless}
and we shall treat this case only. 
%% We point out that most of the 
%% results to follow can be proved even when \eqref{cmphi} (or equivalently 
%% \eqref{cmphidless}) 
%% is used in place of \eqref{EN} or \eqref{ENdless}.

We shall henceforth deal almost exclusively with the dimensionless 
system. To avoid cluttered notation, we remove the primes from 
the dimensionless variables. 

An important property of the above system is that it possesses 
a natural energy function.
\begin{proposition}\label{FEE}
Let $v$ and $c_k, k=1,\cdots, N$ satisfy system \eqref{system}.
Then, the following equality holds:
\begin{equation}\label{FE}
\D{G}{t}=-\sum_{k=1}^N \mu_k(j_k+p_k)-\pi_{\rm w}j_{\rm w},
\end{equation}
where 
\begin{equation}\label{sigdef}
G=v\sigma, \quad 
\sigma=\sum_{k=1}^N \paren{c_k\paren{\ln \paren{\frac{c_k}{c_k^{\rm e}}}-1}+c_k^{\rm e}}
+ \frac{A}{v}\paren{\ln \paren{\frac{A}{v}}-1}.
\end{equation}
\end{proposition}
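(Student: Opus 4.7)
The plan is a direct calculation of $dG/dt$ followed by a single algebraic reorganization that uses the zero-total-current condition \eqref{ENI}. I would proceed as follows.

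First, rewrite $G$ in terms of the extensive ion contents $u_k=vc_k$ so that differentiation is painless. Since $vc_k(\ln(c_k/c_k^{\rm e})-1)=u_k(\ln(c_k/c_k^{\rm e})-1)$, applying the product rule and using $v\dot c_k=\dot u_k-\dot v c_k$, the inner derivative collapses to
\begin{equation*}
\frac{d}{dt}\bigl[u_k(\ln(c_k/c_k^{\rm e})-1)\bigr]=\dot u_k\ln(c_k/c_k^{\rm e})-\dot v\, c_k,
\end{equation*}
where a key cancellation occurs between the ``$-1$'' term and the $u_k\dot c_k/c_k=v\dot c_k$ piece. The remaining pieces of $G$ are handled directly: the $vc_k^{\rm e}$ term contributes $c_k^{\rm e}\dot v$, and $A(\ln(A/v)-1)$ contributes $-A\dot v/v$.

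Collecting everything, the $\dot v$ terms assemble into exactly the osmotic pressure difference:
\begin{equation*}
\frac{dG}{dt}=\sum_{k=1}^N \dot u_k\ln(c_k/c_k^{\rm e})+\dot v\Bigl(\sum_{k=1}^N c_k^{\rm e}-\sum_{k=1}^N c_k-\frac{A}{v}\Bigr)=\sum_{k=1}^N \dot u_k\ln(c_k/c_k^{\rm e})+\dot v\,\pi_{\rm w}.
\end{equation*}
Substituting $\dot u_k=-(j_k+p_k)$ from \eqref{ionsdless} and $\dot v=-j_{\rm w}$ from \eqref{voldless} handles the second term immediately, giving the $-\pi_{\rm w}j_{\rm w}$ contribution.

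The only remaining step is to replace $\ln(c_k/c_k^{\rm e})$ by $\mu_k=\ln(c_k/c_k^{\rm e})+z_k\phi$. The discrepancy is $\phi\sum_k z_k(j_k+p_k)$, which vanishes by \eqref{ENI}; this is where electroneutrality enters in an essential way, and it is really the only subtle point in the argument. This yields $\sum_k\dot u_k\ln(c_k/c_k^{\rm e})=-\sum_k\mu_k(j_k+p_k)$ and completes the identity \eqref{FE}. There is no genuine obstacle here — the proof is a bookkeeping computation whose only non-routine moment is recognizing that the ``missing'' $z_k\phi$ term in the chemical potential is free of charge precisely because the total transmembrane current is forced to zero by the algebraic constraint.
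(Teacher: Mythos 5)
Your proof is correct and is essentially the paper's own argument run in the opposite direction: the paper multiplies the balance laws by the chemical potentials and identifies the result with $d(v\sigma)/dt-\pi_{\rm w}\,dv/dt$ via the Euler-type relation $\pi_{\rm w}=\sigma-\sum_k c_k\partial\sigma/\partial c_k-c_A\partial\sigma/\partial c_A$, whereas you differentiate $G$ directly and watch the $\dot v$ terms assemble into $\pi_{\rm w}$. Both hinge on the same two facts you isolate correctly, namely that cancellation and the vanishing of $\phi\sum_k z_k(j_k+p_k)$ forced by electroneutrality.
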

Identity \eqref{FE} is not entirely new. 
In \citep{sauer1973nonequilibrium,fromter1974electro,weinstein1983nonequilibrium}
the authors argue on thermodynamic grounds that 
free energy dissipation and input for an epithelial system should 
be given by the right hand side of \eqref{FE}. 
The new observation here is that, by appropriately defining a 
free energy function (that is to say, the left hand side of \eqref{FE}), 
this thermodynamic property can be turned into a mathematical
statement about system \eqref{system}. 
We also point out that a similar identity valid for a 
system of partial differential equations describing 
electrodiffusion and osmosis was proved in \citep{mori_liu_eis}.
In subsequent sections 
we will use this as a tool to study stability of steady states. 

\begin{proof}[Proof of Proposition \ref{FEE}]
View $\sigma$ as a function of $c_k$ and $c_A=A/v$. 
Note that:
\begin{equation}
\D{}{t}(vc_A)=\D{A}{t}=0,\label{vat}
\end{equation}
since $A$ is constant. Define the chemical potential of 
intracellular organic molecules as:
\begin{equation}
\mu_A=\ln c_A+z\phi=\PD{\sigma}{c_A}+z\phi.
\end{equation}
Note that:
\begin{equation}
\mu_k=\PD{\sigma}{c_k}+z_k\phi.
\end{equation}
Multiply \eqref{ions} by $\mu_k$, 
multiply \eqref{vat} by $\mu_A$ and take the sum. The left hand 
side yields:
\begin{equation}
\begin{split}
&\sum_{k=1}^N\mu_k\D{}{t}(vc_k)+\mu_A\D{}{t}(vc_A)\\
=&\sum_{k=1}^N\paren{\PD{\sigma}{c_k}+z_k\phi}\D{}{t}(vc_k)
+\paren{\PD{\sigma}{c_A}+z\phi}\D{}{t}(vc_A)\\
=&\D{}{t}(v\sigma)+
\paren{\sum_{k=1}^Nc_k\PD{\sigma}{c_k}+c_A\PD{\sigma}{c_A}-\sigma}\D{v}{t}
+\phi \D{}{t}\paren{v\sum_{k=1}^N z_kc_k+zvc_A}\\
=&\D{}{t}(v\sigma)-\pi_{\rm w}\D{v}{t},
\end{split}
\end{equation}
where we used \eqref{EN} and \eqref{sigdef} 
in the third equality. 
We thus have:
\begin{equation}
\D{}{t}(v\sigma)-\pi_{\rm w}\D{v}{t}
=-\sum_{k=1}^N \mu_k(j_k+p_k).
\end{equation}
Equation \eqref{FE} thus follows from \eqref{vol}.
\end{proof}
The function $\sigma$ should be interpreted as the free energy 
per unit volume of intracellular electrolyte solution. 
A key fact that was used in the above proof is the identity:
\begin{equation}
\pi_{\rm w}=\sigma-\paren{\sum_{k=1}^Nc_k\PD{\sigma}{c_k}+c_A\PD{\sigma}{c_A}}.
\end{equation}
This relation, connecting the free energy with osmotic pressure, 
is well-known in physical chemistry \citep{doi1996introduction}.

When $p_k=0, \; k=1,\cdots,N$ in \eqref{FE}, there are no active currents and we have:
\begin{equation}
\D{G}{t}=-\sum_{k=1}^N \mu_kj_k-\pi_{\rm w}j_{\rm w}.
\label{vsigpassive}
\end{equation}
Given the interpretation of $G$ as the total 
free energy of the system, the second law of thermodynamics requires 
that $G$ be decreasing in time. 
The negativity of \eqref{vsigpassive} when $\bm{\mu}\neq \mb{0}$
and $\pi_{\rm w}\neq 0$ is equivalent to the statement that 
the following conditions be satisfied:
\begin{align}
\sum_{k=1}^N \mu_kj_k(\phi,\bm{\mu}) &> 0 \text{ for all } \phi 
\text{ and } \bm{\mu}\neq \mb{0},
\label{jkgencond}\\
\pi_{\rm w}j_{\rm w}(\pi_{\rm w})&>0 \text{ if } \pi_{\rm w}\neq 0,
\label{jwgencond}
\end{align}
Condition \eqref{jkgencond}, together with 
continuity of $j_k$ and $j_{\rm w}$ with respect to its arguments 
immediately implies that:
\begin{equation}
j_k(\phi,\bm{\mu}=\mb{0})=0,\; k=1,\cdots,N,\quad  
\; j_{\rm w}(\pi_{\rm w}=0)=0,\label{noflux}
\end{equation}
where
the conditions on $j_k$ is to be satisfied for all $\phi$.
Taking the derivative of the above expression for $j_k$
with respect to $\phi$, we see that: 
\begin{equation}
\PD{j_k}{\phi}(\phi,\bm{\mu}=\mb{0})=0, \; k=1,\cdots,N.\label{djkdphi}
\end{equation}
We shall find this expression useful later on.

Let us require 
that the derivative of $j_{\rm w}$ with respect to $\pi_{\rm w}$ be non-zero
at $\pi_{\rm w}=0$. This non-degeneracy condition, together with 
\eqref{jwgencond}, leads to:
\begin{equation}\label{jwinc}
\PD{j_{\rm w}}{\pi_{\rm w}}(\pi_{\rm w}=0)>0.
\end{equation}
Let $\mb{j}=(j_1,\cdots,j_N)^T$ and let $\partial\mb{j}/\partial\bm{\mu}$ be
the Jacobian matrix with respect to $\mb{\mu}$
for fixed $\phi$. That is to say, the $kl$ entry of the $N\times N$
matrix $\partial \mb{j}/\partial \bm{\mu}$ is given by 
$\partial j_k/\partial\mu_l$.
The non-degeneracy condition 
for $\mb{j}$ is that $\partial\mb{j}/\partial\bm{\mu}$ be non-singular
at $\bm{\mu}=0$. 
We require the following condition:
\begin{equation}
\PD{\mb{j}}{\bm{\mu}}(\phi,\bm{\mu}=\mb{0}) 
\text{ is symmetric positive definite for all } \phi.
\label{posdef}
\end{equation}
The symmetry of the Jacobian matrix does {\em not} follow from the
non-degeneracy condition and condition \eqref{jkgencond}. 
In fact, these two conditions imply only that:
\begin{equation}\label{weakposdef}
\frac{1}{2}\paren{
\PD{\mb{j}}{\bm{\mu}}+\paren{\PD{\mb{j}}{\bm{\mu}}}^T}(\phi,\bm{\mu}=\mb{0}) 
\text{ is symmetric positive definite for all } \phi.
\end{equation}
The symmetry of 
$\partial\mb{j}/\partial\bm{\mu}$ is required 
by the Onsager reciprocity principle 
\citep{onsager1931reciprocal,katzir1965nonequilibrium,kjelstrup2008non}.
We note that \eqref{posdef} is the same as the condition introduced in 
\citep{sauer1973nonequilibrium,fromter1974electro,
weinstein1983nonequilibrium}.
It is easy to see that both \eqref{jkL} and \eqref{jkGHK} 
satisfy \eqref{jkgencond}, \eqref{noflux} and \eqref{posdef}
and that \eqref{jwexp} satisfies \eqref{jwgencond}, \eqref{noflux}
and \eqref{jwinc}. 

Note that \eqref{jwgencond} together with \eqref{voldless} implies that 
the intracellular and extracellular osmotic pressures
must be equal at steady state. 
We are thus assuming that the membrane 
cannot generate any mechanical force to balance 
a difference in osmotic pressure. 
If the cell membrane (or its attendant structures) 
can generate some elastic force, 
it would make it easier 
for the cell to maintain its volume. 
%% Indeed, plant cells, which 
%% have a rigid cell wall, can maintain its volume without the help of 
%% active fluxes. 
%% In the Appendix, we shall discuss a mathematical statement to this effect.
%% The appropriate expression for the 
%% free energy in this case includes an membrane elastic energy term.

Our starting point in deriving the above structure conditions 
for $j_k$ and $j_{\rm w}$ was the requirement that the right hand 
side of \eqref{vsigpassive} be negative. If $j_k$ is allowed 
to depend on $\pi_{\rm w}$ and $j_{\rm w}$ on $\bm{\mu}$, a more general 
structure condition can be formulated.  
Although it should not be difficult
to extend the results to follow to this more general case,
we will not pursue this here to keep the presentation reasonably simple. 

\section{Results when the 
Flux Functions are Linear in the Chemical Potential Jump}\label{lin}

Before dealing with the general case in Section \ref{general}, we 
treat the simpler case when 
$j_k$ and $j_{\rm w}$ are linear in $\bm{\mu}$ and $\pi_{\rm w}$
respectively and $p_k$ are constants in \eqref{system}.
In this case, we obtain a more or less
complete picture of the behavior of our system.
System \eqref{system} becomes:
\begin{subequations}\label{linsys}
\begin{align}
\D{}{t}(v\mb{c})&=-L\bm{\mu}-\mb{p},\label{linsys1}\\
0&=\sum_{k=1}^N z_kc_k +\frac{zA}{v}=\sum_{k=1}^N z_kc_k^{\rm e}\label{linsysEN},\\
\D{v}{t}&=-\zeta \pi_{\rm w}\label{linsys2},
\end{align}
\end{subequations}
where $L=\partial \mb{j}/\partial \bm{\mu}$
is an $N\times N$ matrix, which by \eqref{posdef}, is symmetric 
positive definite, and $\mb{p}=(p_1,\cdots,p_N)^T$ is 
the vector of the active fluxes. 
The hydraulic permeability $\zeta=\partial j_{\rm w}/\partial \pi_{\rm w}$ 
is positive by \eqref{jwinc}.
The extracellular ionic concentrations 
$c_k^{\rm e}>0, k=1,\cdots,N$ and 
the amount of impermeable organic solute $A$ are assumed positive
as discussed in the previous Section.
We seek solutions $(\mb{c},v)\in \mathbb{R}_+^{N}\times \mathbb{R}_+$
where $\mathbb{R}_+$ denotes the set of positive real numbers.

Here and in the sequel, we shall often find it useful to 
refer to the pair $(\mb{c},v)$ as well as the triple 
$(\mb{c},v,\phi)\in \mathbb{R}_+^{N}\times \mathbb{R}_+\times \mathbb{R}$.
We shall often view the pair and the triple 
as being members of $\mathbb{R}_+^{N+1}$
and $\mathbb{R}_+^{N+1}\times \mathbb{R}$ respectively and write
$(\mb{c},v)\in \mathbb{R}_+^{N+1}$
and $(\mb{c},v,\phi)\in \mathbb{R}_+^{N+1}\times \mathbb{R}
\subset\mathbb{R}^{N+2}$ .
The more ``correct'' notation may be to write 
$(\mb{c}^T,v)^T=(c_1,\cdots,c_N,v)^T\in \mathbb{R}_+^{N+1},\;
(\mb{c}^T,v,\phi)^T
=(c_1,\cdots,c_N,v,\phi)^T\in \mathbb{R}_+^{N+1}\times \mathbb{R}$
given that $\mb{c}$ is a column vector. 
We will not adopt this unnecessarily ugly notation. 
Similar comments apply to the pair 
$(\mb{a},v)$ and the triple $(\mb{a},v,\phi)$ where $\mb{a}=v\mb{c}$.

For system \eqref{linsys}, we may compute $\phi$ explicitly in terms 
of $\mb{c}$ by solving the (dimensionless version of) \eqref{ENI}:
\begin{equation}\label{Lzgamma}
\phi=-\frac{\ip{\mb{z}}{L\bm{\gamma}+\mb{p}}_{\mathbb{R}^N}}
{\ip{\mb{z}}{L\mb{z}}_{\mathbb{R}^N}},\quad
\bm{\gamma}=(\gamma_1,\cdots,\gamma_N)^T, \; 
\gamma_k=\ln\paren{\frac{c_k}{c_k^{\rm e}}},
\end{equation}
where $\mb{z}=(z_1,\cdots,z_N)^T$.
Substituting the above expression for $\phi$ into \eqref{linsys1},
we obtain:
\begin{equation}\label{whLqdef}
\begin{split}
\D{}{t}(v\mb{c})&=-\wh{L}(\bm{\gamma}+\mb{q}), \quad \wh{L}_{kl}=L_{kl}
-\frac{(L\mb{z})_k(L\mb{z})_l}{\ip{\mb{z}}{L\mb{z}}_{\mathbb{R}^N}},\\
\mb{q}&=(q_1,\cdots,q_N)^T,\quad \mb{q}=L^{-1}\mb{p},
\end{split}
\end{equation}
where $L_{kl}, \wh{L}_{kl}$ are the $kl$ entries
 of the $N\times N$ matrix $L, \wh{L}$ and $(L\mb{z})_k$ is the 
$k$-th component of the vector $L\mb{z}\in \mathbb{R}^N$.
Note that $L^{-1}$ exists given that $L$ is positive definite.
Replacing \eqref{linsys1} with \eqref{whLqdef}, 
we now have a system of ordinary differential equations (ODEs)
for $\mb{c}$ and $v$ only.
Applying the standard existence and uniqueness theorem for 
ODEs, we conclude that a unique solution always exist for 
sufficiently short time so long as the solution 
remains in $(\mb{c},v)\in \mathbb{R}_+^{N+1}$.

Even though our ODE system 
is for $N+1$ variables $\mb{c}=(c_1,\cdots,c_N)^T$ 
and $v$, our dynamical system is only $N$
dimensional, since the dynamics is constrained by  
the electroneutrality condition \eqref{linsysEN}.
The initial value problem for \eqref{linsys} can thus only 
be solved if the initial values satisfy \eqref{linsysEN}.
In the proof of Proposition \ref{linmain}, we will find it 
useful to make a change of variables to remove this constraint.

\subsection{Existence of Steady States and Asymptotic Stability}
\label{linsteadglobal}

Our first observation is the following.
\begin{proposition}\label{linextcond}
Consider the function:
\begin{equation}\label{fphidef}
f(\phi)=\sum_{k=1}^N c_k^{\rm e}\paren{\exp(-q_k-z_k\phi)-1}, \; 
\end{equation}
where $q_k, k=1,\cdots,N$ was defined in \eqref{whLqdef}.
The function $f(\phi), \phi\in \mathbb{R}$ has a unique 
minimizer $\phi=\phi_{\rm min}$.
System \eqref{linsys} has a unique steady state if this 
minimum value is negative:
\begin{equation}\label{fphimin}
f_{\rm min}(\mb{q},\mb{c}^{\rm e}, \mb{z})\equiv f(\phi_{\rm min})<0.
\end{equation}
Otherwise, the system does not have any steady states.
\end{proposition}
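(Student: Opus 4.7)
My plan is to reduce the steady-state conditions to a single scalar equation in $\phi$ and then analyze that equation using the convexity of $f$.

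First I would verify the basic properties of $f$. Strict convexity is immediate from
\begin{equation*}
f''(\phi)=\sum_{k=1}^N c_k^{\rm e} z_k^2 \exp(-q_k-z_k\phi)>0,
\end{equation*}
with strictness because at least one $z_k\neq 0$. Coercivity $f(\phi)\to+\infty$ as $|\phi|\to\infty$ follows from extracellular electroneutrality $\sum_k z_k c_k^{\rm e}=0$, which forces both $z_k>0$ and $z_k<0$ to occur among the species; each end is then dominated by an exponentially growing term. This yields the unique minimizer $\phi_{\rm min}$.

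Next I would reduce the steady-state system to a scalar equation. Setting the right-hand side of \eqref{whLqdef} to zero and noting that the symmetric matrix $\wh{L}$ has kernel exactly $\text{span}(\mb{z})$ (since $L$ is symmetric positive definite), we obtain $\bm{\gamma}+\mb{q}=-\phi\mb{z}$, which is consistent with \eqref{Lzgamma} and gives $c_k=c_k^{\rm e}\exp(-q_k-z_k\phi)$. The osmotic balance condition $\pi_{\rm w}=0$ from \eqref{linsys2} then reads $A/v=-f(\phi)$, so a positive volume requires $f(\phi)<0$ and thereby determines $v$ uniquely; in particular, if $f_{\rm min}\geq 0$ then no admissible $\phi$ exists and the system has no steady state. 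Using $\sum_k z_k c_k^{\rm e}=0$, the cellular electroneutrality condition \eqref{linsysEN} becomes the single scalar equation
\begin{equation*}
f'(\phi)+z f(\phi)=0.
\end{equation*}

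It remains to show that, when $f_{\rm min}<0$, this equation has a unique root in the set $\{f<0\}$. I would introduce the auxiliary function $g(\phi)=e^{z\phi}f(\phi)$, so the equation reads precisely $g'(\phi)=0$. The key computation is that at any critical point of $g$ with $f(\phi)<0$,
\begin{equation*}
g''(\phi)=e^{z\phi}\bigl(f''(\phi)-z^2 f(\phi)\bigr)>0,
\end{equation*}
since $f''>0$ strictly and $-z^2 f\geq 0$. Hence every critical point of $g$ in the (convex) sublevel set $\{f<0\}$ is a strict local minimum, whence uniqueness follows: two such critical points would force an intermediate local maximum of $g$ in that same set, itself a critical point with $f<0$, contradicting strict local minimality. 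For existence, strict convexity and coercivity of $f$ make $\{f<0\}$ a nonempty bounded open interval $(\phi_-,\phi_+)$ with $f(\phi_\pm)=0$; on this interval $g<0$ while $g(\phi_\pm)=0$, so $g$ attains its (negative) minimum in the interior, producing the required critical point. The main technical point is the Hessian calculation for $g$, which couples the convexity $f''>0$ with the sign constraint $f<0$; the degenerate case $z=0$ reduces harmlessly to $g=f$ with the unique root being $\phi_{\rm min}$ itself.
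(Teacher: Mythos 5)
Your proof is correct, and the endgame is genuinely different from the paper's. Both arguments share the same setup: from $L(\bm{\mu}+\mb{q})=0$ you get $c_k=c_k^{\rm e}\exp(-q_k-z_k\phi)$, and the two remaining scalar conditions are $f(\phi)+A/v=0$ and $-f'(\phi)+zA/v=0$. The paper eliminates $\phi$: it solves the electroneutrality equation for $\phi=\varphi(v)$ via the implicit function theorem and shows that $R(v)=f(\varphi(v))+A/v$ is strictly decreasing with $R(0^+)=+\infty$, so a root exists iff $\lim_{v\to\infty}R(v)=f(\phi_{\rm min})<0$. You eliminate $v$ instead, arriving at the single equation $f'(\phi)+zf(\phi)=0$ on the sublevel set $\{f<0\}$, and handle it with the integrating factor $g=e^{z\phi}f$: the identity $g''=e^{z\phi}(f''-z^2f)>0$ at critical points where $f<0$ makes every admissible critical point a strict local minimum of $g$, which kills multiplicity, while $g<0$ on $(\phi_-,\phi_+)$ with $g(\phi_\pm)=0$ gives existence. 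Your route is more elementary in that it avoids the implicit function $\varphi(v)$ and reduces everything to a one-variable root count; the sign computation $f'+zf=0$ and the kernel argument for $\wh{L}$ both check out, and the $z=0$ degeneration is handled correctly. What the paper's parametrization by $v$ buys in exchange is the explicit monotone function $R(v)$ and the limit $\phi_{\rm min}=\lim_{v\to\infty}\varphi(v)$, which are reused downstream (the sensitivity formulas \eqref{vA}--\eqref{vQA} and the identification of the limiting concentrations in Theorem \ref{burst}); your argument would need a small supplement to recover those.
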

The above condition can be interpreted as follows. At 
steady state, it is easily seen that the concentrations $c_k$ must be equal to 
$c_k^*=c_k^{\rm e}\exp(-q_k-z_k\phi^*)$ where $\phi^*$ is the 
value of $\phi$ at steady state (see \eqref{ckststate}).
We need $f(\phi^*)=\sum_{k=1}^N (c_k^*-c_k^{\rm e})<0$ since 
there must be ``osmotic room'' for the impermeable solutes. 
This is only possible if the minimum of 
$f(\phi), \phi\in \mathbb{R}$ is negative. 
We also point out that the above condition depends only 
on $\mb{q}=L^{-1}\mb{p},\mb{c}^{\rm e}$ and $\mb{z}$ and does not depend 
on $z$ or $A$. 
\begin{proof}[Proof of Proposition \ref{linextcond}]
Set the right hand side of \eqref{linsys1} and \eqref{linsys2} to zero. 
We have:
\begin{equation}
\mb{q}=L^{-1}\mb{p}=-\bm{\mu}, \quad \pi_{\rm w}=0.
\end{equation}
Solving for $c_k$ in the first expression we have:
\begin{equation}\label{ckststate}
c_k=c_k^{\rm e}\exp(-q_k-z_k\phi),\quad k=1\cdots,N.
\end{equation}
Substitute this into $\pi_{\rm w}=0$ and \eqref{linsysEN}. We have:
\begin{align}
f(\phi)+\frac{A}{v}&=0,
\label{fphi}\\
-\D{f}{\phi}+\frac{zA}{v}&=
\sum_{k=1}^Nz_kc_k^{\rm e}\exp(-q_k -z_k\phi)
+\frac{zA}{v}=0,\label{Qphi}
\end{align}
where $f(\phi)$ is given by \eqref{fphidef}.
We must find solutions $\phi$ and $v>0$ to the above system.
Note that:
\begin{equation}\label{Pphisign}
\DD{2}{f}{\phi}=\sum_{k=1}^Nz_k^2c_k^{\rm e}\exp(-q_k-z_k\phi)>0, \quad
\lim_{\phi\pm \infty} \D{f}{\phi}=\pm \infty.
\end{equation}
The second property comes from the fact that there are ions with 
negative and positive valences among the $N$ species of ions 
and that $c_k^{\rm e}>0$. We can thus solve \eqref{Qphi} for 
$\phi$ uniquely in terms of $v$. Let this function be
$\phi=\varphi(v)$. We have:
\begin{equation}\label{phiv}
\D{\varphi}{v}=-\paren{\DD{2}{f}{\phi}}^{-1}\frac{zA}{v^2}.
\end{equation} 
Consider the left hand side of \eqref{fphi} and substitute $\phi=\varphi(v)$
into this expression:
\begin{equation}\label{Rveq}
R(v)\equiv f(\varphi(v))+\frac{A}{v}=0.
\end{equation}
Our problem of finding steady states is reduced to the question of 
whether the above equation in $v$ has a positive solution.
We have:
\begin{equation}
\D{R}{v}=\D{f}{\phi}\D{\varphi}{v}-\frac{A}{v^2}
=-\paren{\DD{2}{f}{\phi}}^{-1}\frac{(zA)^2}{v^3}-\frac{A}{v^2}
\leq -\frac{A}{v^2}<0
\end{equation}
where we used \eqref{Qphi} and \eqref{phiv} in the second equality
and \eqref{Pphisign} in the first inequality. 
Therefore, $R(v)$ is monotone decreasing. Note that:
\begin{equation}
R(\epsilon)
=R(1)-\int_{\epsilon}^1 \paren{\D{R}{v}}dv
\geq R(1)+\int_{\epsilon}^1 \paren{\frac{A}{v^2}}dv=R(1)
+A(\epsilon^{-1}-1).
\end{equation}
Therefore, $R(v)\to \infty$ as $v$ tends to $0$ from above. 
Thus, \eqref{Rveq} has a unique positive solution if:
\begin{equation}\label{Rvneg}
\lim_{v\to \infty} R(v)<0,
\end{equation} 
and otherwise, there is no solution.
Let $\varphi_\infty=\lim_{v \to \infty}\varphi(v)$.
Note that this limit exists since, by \eqref{phiv}, $\varphi(v)$
is monotone if $z\neq 0$ and constant if $z=0$. 
Taking the limit $v\to \infty$ on both sides of \eqref{Qphi}, 
we see that $\varphi_\infty$ is the unique solution to $df/d\phi=0$
as an equation for $\phi$. Given \eqref{Rveq},
condition \eqref{Rvneg} can be written as $f(\varphi_\infty)<0$.
The statement follows by taking $\phi_{\rm min}=\varphi_\infty$.
\end{proof}
Condition \eqref{fphimin}
applied to \eqref{NaKCl} yields condition 
\eqref{NaKClcond}. Since condition \eqref{fphimin} is valid regardless 
of the value of $z$, we may 
lift the restriction $z\leq -1$ found in \citep{KS}.

Fix $\mb{q}=L^{-1}\mb{p}$ and $\mb{c}^{\rm e}$
so that \eqref{fphimin} is satisfied. Since condition \eqref{fphimin}
does not depend on $z$ or $A$, 
a unique steady state $(\mb{c},v,\phi)=(\mb{c}^*,v^*,\phi^*)$ 
exists for any value of $z$ and $A>0$. We may thus view
$(\mb{c}^*,v^*,\phi^*)$
as a function of $A$ and $Q_A=zA$, defined for $A>0$ and $Q_A\in \mathbb{R}$.
We can compute the dependence of $v^*$ on $Q_A$ and $A$
as follows. 
\begin{align}
\PD{v^*}{A}&=\paren{\paren{\frac{Q_A}{v^*}}^2
+\DD{2}{f}{\phi}\frac{A}{v^*}}^{-1}
\DD{2}{f}{\phi}>0,\label{vA}\\
\PD{v^*}{Q_A}&=\paren{\paren{\frac{Q_A}{v^*}}^2+\DD{2}{f}{\phi}\frac{A}{v^*}}^{-1}\frac{Q_A}{v^*}.\label{vQA}
\end{align}
From a biophysical standpoint, 
\eqref{vA} is reasonable since more
impermeable solute leads to greater 
osmotic pressure. Note that \eqref{vQA} says that the $v^*$ increases 
if the absolute amount of charge (whether negative or positive) increases. 
This is also biophysically reasonable since more charge 
on the impermeable solute leads to a greater amount of intracellular 
counterions, thus increasing intracellular osmotic pressure. 

We now turn to the question of stability. Let:
\begin{equation}\label{icS}
\mathcal{S}=\lbrace (\mb{c},v)\in \mathbb{R}_+^{N+1}|
\sum_{k=1}^N z_kc_k +zA/v=0\rbrace.
\end{equation} 
The dynamical system defined by \eqref{linsys} lives on this set.
We must thus modify the definition of stability accordingly.
A steady state of \eqref{linsys} is {\em stable} 
if all solutions with initial values in $\mathcal{S}$ 
and near the steady state stay close to the 
steady state. 
A steady state is {\em asymptotically stable} 
if it is stable and if all solutions with initial values
in $\mathcal{S}$ and 
near the steady state
converge to the steady state as $t\to \infty$. 
A steady state is {\em globally asymptotically stable} 
if it is stable and if all solutions starting from initial 
values in $\mathcal{S}$ converge to 
the steady state as $t\to \infty$.
If we make a change of variables 
to obtain an $N$-dimensional dynamical system without 
the implicit constraint of electroneutrality,
the above definitions of stability reduce to the usual ones
for ODEs.

We saw in Proposition \ref{FEE} that
in the absence of active currents, the free energy $G$ defined 
in \eqref{sigdef} is decreasing. 
We now construct a free energy like quantity that is 
decreasing in the presence of active currents. 
Define:
\begin{equation}\label{wtG}
\wt{G}
=v\sum_{k=1}^N 
\paren{c_k\paren{\ln \paren{\frac{c_k}{c_k^{\rm e}}}-1+q_k}+c_k^{\rm e}}
+ A\paren{\ln \paren{\frac{A}{v}}-1}
\end{equation}
where $q_k$ was defined in \eqref{fphidef}.
For $\wt{G}$, we have the following analogue of Proposition \ref{FEE}.
\begin{lemma}\label{FEq}
Let $\mb{c},v, \phi$ satisfy system \eqref{linsys}. We have:
\begin{equation}
\begin{split}\label{GJdef}
\D{\wt{G}}{t}&=-J,\\
J&=\ip{\bm{\mu}+\mb{q}}
{L(\bm{\mu}+\mb{q})}_{\mathbb{R}^N}
+\zeta \pi_{\rm w}^2=
\ip{\bm{\gamma}+\mb{q}}{\wh{L}(\bm{\gamma}+\mb{q})}_{\mathbb{R}^N}
+\zeta \pi_{\rm w}^2,
\end{split}
\end{equation}
where $\ip{\cdot}{\cdot}_{\mathbb{R}^N}$ 
is the inner product in $\mathbb{R}^N$. The function $\wt{G}$
is thus a Lyapunov function in the sense that it is non-increasing 
in time.
\end{lemma}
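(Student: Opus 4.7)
The plan is to reduce the lemma to the free energy identity of Proposition \ref{FEE} by exhibiting $\wt{G}$ as a simple perturbation of $G$ and exploiting the symmetry/positivity of $L$.

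First I would observe that
\begin{equation*}
\wt{G} - G = v\sum_{k=1}^N c_k q_k = \sum_{k=1}^N a_k q_k, \quad a_k = vc_k,
\end{equation*}
so that
\begin{equation*}
\D{\wt{G}}{t} = \D{G}{t} + \sum_{k=1}^N q_k \D{a_k}{t}.
\end{equation*}
The first term is handled by Proposition \ref{FEE}, which gives $dG/dt = -\sum_k \mu_k(j_k+p_k) - \pi_{\rm w}j_{\rm w}$. In the linear setting we have $\mb{j} = L\bm{\mu}$, $\mb{p} = L\mb{q}$, and $j_{\rm w} = \zeta\pi_{\rm w}$, and \eqref{linsys1} gives $d\mb{a}/dt = -L\bm{\mu} - \mb{p} = -L(\bm{\mu}+\mb{q})$. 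Substituting everything,
\begin{equation*}
\D{\wt{G}}{t} = -\ip{\bm{\mu}}{L\bm{\mu}}_{\mathbb{R}^N} - \ip{\bm{\mu}}{L\mb{q}}_{\mathbb{R}^N} - \zeta\pi_{\rm w}^2 - \ip{\mb{q}}{L(\bm{\mu}+\mb{q})}_{\mathbb{R}^N}.
\end{equation*}
Now the key algebraic fact: because $L$ is symmetric (by \eqref{posdef}), the four quadratic-in-$(\bm{\mu},\mb{q})$ terms assemble into a perfect square, giving
\begin{equation*}
\D{\wt{G}}{t} = -\ip{\bm{\mu}+\mb{q}}{L(\bm{\mu}+\mb{q})}_{\mathbb{R}^N} - \zeta\pi_{\rm w}^2,
\end{equation*}
which is the first form of $J$ in \eqref{GJdef}. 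Positivity of $J$ then follows from $L$ being positive definite and $\zeta>0$.

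To pass to the second form involving $\wh{L}$ and $\bm{\gamma}+\mb{q}$, I would write $\bm{\mu} = \bm{\gamma} + \phi\mb{z}$ and substitute the explicit formula \eqref{Lzgamma} for $\phi$, namely $\phi = -\ip{\mb{z}}{L(\bm{\gamma}+\mb{q})}_{\mathbb{R}^N}/\ip{\mb{z}}{L\mb{z}}_{\mathbb{R}^N}$ (using $\mb{p}=L\mb{q}$). Setting $\mb{w}=\bm{\gamma}+\mb{q}$, an elementary expansion using symmetry of $L$ yields
\begin{equation*}
\ip{\mb{w}+\phi\mb{z}}{L(\mb{w}+\phi\mb{z})}_{\mathbb{R}^N} = \ip{\mb{w}}{L\mb{w}}_{\mathbb{R}^N} - \frac{\ip{\mb{z}}{L\mb{w}}_{\mathbb{R}^N}^2}{\ip{\mb{z}}{L\mb{z}}_{\mathbb{R}^N}} = \ip{\mb{w}}{\wh{L}\mb{w}}_{\mathbb{R}^N}
\end{equation*}
by the definition of $\wh{L}$ in \eqref{whLqdef}. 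This matches the alternative expression for $J$.

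There is no real obstacle here; the only nontrivial point is recognizing that the precise choice of the extra term $v\sum_k c_k q_k$ in $\wt{G}$ is engineered so that the passive contribution $-\ip{\bm{\mu}}{L\bm{\mu}}$ coming from $dG/dt$, the active contribution $-\ip{\bm{\mu}}{\mb{p}}$ (also from $dG/dt$), and the new contribution $-\ip{\mb{q}}{L(\bm{\mu}+\mb{q})}$ from differentiating the added term combine into a single perfect square $-\ip{\bm{\mu}+\mb{q}}{L(\bm{\mu}+\mb{q})}$. Symmetry of $L$ (i.e.\ the Onsager condition \eqref{posdef}) is essential for this completion of the square; without symmetry one would only get the symmetric-part contribution and a residual skew term.
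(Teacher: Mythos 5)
Your proof is correct and is essentially the paper's argument made explicit: the paper's own proof simply says it is ``almost identical to the proof of Proposition \ref{FEE}'' with the second form of $J$ coming from \eqref{whLqdef}, and your decomposition $\wt{G}=G+\sum_k q_k a_k$ followed by an application of Proposition \ref{FEE} is exactly that computation, cleanly organized. One small correction to your closing remark: symmetry of $L$ is \emph{not} needed to assemble the first perfect square, since the two cross terms arise once each in opposite order and bilinearity alone gives $-\ip{\bm{\mu}}{L(\bm{\mu}+\mb{q})}_{\mathbb{R}^N}-\ip{\mb{q}}{L(\bm{\mu}+\mb{q})}_{\mathbb{R}^N}=-\ip{\bm{\mu}+\mb{q}}{L(\bm{\mu}+\mb{q})}_{\mathbb{R}^N}$; symmetry is genuinely used only in the second step, where you identify $\ip{\mb{w}}{L\mb{w}}_{\mathbb{R}^N}-\ip{\mb{z}}{L\mb{w}}_{\mathbb{R}^N}^2/\ip{\mb{z}}{L\mb{z}}_{\mathbb{R}^N}$ with $\ip{\mb{w}}{\wh{L}\mb{w}}_{\mathbb{R}^N}$ (and in replacing $\ip{\mb{w}}{L\mb{z}}_{\mathbb{R}^N}$ by $\ip{\mb{z}}{L\mb{w}}_{\mathbb{R}^N}$ when substituting the formula for $\phi$).
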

\begin{proof}
The proof is almost identical to the proof of Proposition \ref{FEE}.
The second equality in the definition of $J$ comes from \eqref{whLqdef}.
\end{proof}
If system \eqref{linsys} has a steady state, we may rewrite 
\eqref{GJdef} as follows. 
Let $(\mb{c}^*,v^*,\phi^*)$ be the 
steady state of \eqref{linsys}. 
Note that $\pi_{\rm w}=0$ at steady state, and 
that $q_k=-\mu_k^*$ where $\mu_k^*$ is the evaluation of 
the chemical potential at steady state. Using this, and 
the fact that $\mb{c}$ satisfies the electroneutrality constraint 
\eqref{linsysEN}, we find, after some calculation:
\begin{equation}\label{hatG}
\begin{split}
&\wh{G}(\mb{c},v)\equiv \wt{G}(\mb{c},v)-\wt{G}(\mb{c}^*,v^*)\\
=&v\sum_{k=1}^N 
\paren{c_k\paren{\ln \paren{\frac{c_k}{c_k^*}}-1}+c_k^*}
+ A\paren{\ln \paren{\frac{v^*}{v}}-1+\frac{v}{v^*}}.
\end{split}
\end{equation}
The quantity $\wh{G}$ can be interpreted as being the total free 
energy of the system relative to the steady state.
Since $\wt{G}$ and $\wh{G}$ differ only by a constant, 
we may replace $\wt{G}$ with $\wh{G}$ in \eqref{GJdef}:
\begin{equation}\label{Ghateq}
\begin{split}
\D{\wh{G}}{t}&=-J=-\ip{\wh{\bm{\mu}}}{L\wh{\bm{\mu}}}_{\mathbb{R}^N}-\zeta \pi_{\rm w}^2,\\
\wh{\bm{\mu}}&=(\wh{\mu}_1,\cdots,\wh{\mu}_N)^T,\; 
\wh{\mu}_k=\ln \paren{\frac{c_k}{c_k^*}}+z_k(\phi-\phi^*),
\end{split}
\end{equation}
where we used $q_k=-\mu_k^*$ to rewrite $J$.
Thus, if the system has a steady state, Lemma \ref{FEq} says 
that the free energy relative to the steady state is always 
decreasing at a rate that is controlled by $\wh{\bm{\mu}}$,
the vector of chemical potential relative to the steady state. 
We shall find both $\wt{G}$ and $\wh{G}$ useful 
depending on context.

In studying stability, it is sometimes convenient to use a new set of 
variables $\mb{a}=(a_1,\cdots,a_N)^T=v\mb{c}, v$ and $\phi$ rather 
than $\mb{c}, v$ and $\phi$. 
Rewriting \eqref{linsys} in 
the new variables, $\mb{a}$ satisfies the differential equations:
\begin{equation}
\D{\mb{a}}{t}=-L(\bm{\mu}+\mb{q}).\label{Ceq}
\end{equation}

We now state a result on the function $\wt{G}$.
Note that, although the solutions $(\mb{c},v)$ of \eqref{linsys}
is defined only for $(\mb{c},v)\in \mathbb{R}_+^{N+1}$, 
the function $\wt{G}(\mb{c},v)$ is well-defined 
on $\overline{\mathbb{R}_+}^N\times \mathbb{R}_+$ where 
$\overline{\mathbb{R}_+}$ is the set of non-negative real numbers
(overline of a set will henceforth denote its closure).
The same comment applies for $\wt{G}$ viewed as a function of 
$(\mb{a},v)$ and for $\wh{G}$.

\begin{lemma}\label{propG}
\begin{enumerate}
\item Consider the function $\wt{G}$ defined in \eqref{wtG} and 
view this as a function of $(\mb{a},v)\in \overline{\mathbb{R}_+}^N\times \mathbb{R}_+$
where $\mb{a}=v\mb{c}$.
The Hessian of $\wt{G}$ is 
positive definite at each point in $\mathbb{R}_+^{N+1}$ and 
$\wt{G}$ is thus a convex function.
\item Suppose condition \eqref{fphimin} is satisfied. 
View $\wt{G}$ as a function of 
$(\mb{c},v)\in \overline{\mathbb{R}_+}^N\times \mathbb{R}_+$
and let the unique steady 
state of \eqref{linsys} be given by $(\mb{c}^*,v^*)$.
Then, $(\mb{c}^*,v^*)$ is the unique minimizer of $\wt{G}$
restricted to $\overline{\mathcal{S}}$, where $\mathcal{S}$ is 
given in \eqref{icS}.
\end{enumerate}
\end{lemma}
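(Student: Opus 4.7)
My plan for Part 1 is to work in the coordinates $(\mb{a},v)$ with $\mb{a}=v\mb{c}$ and compute the Hessian of $\wt{G}$ directly. Substituting $c_k=a_k/v$ into \eqref{wtG}, the first partials come out as $\partial \wt{G}/\partial a_k=\ln(a_k/(vc_k^{\rm e}))+q_k$ and $\partial \wt{G}/\partial v=\pi_{\rm w}$, so the Hessian takes the block form
\[
H=\begin{pmatrix} D & \mb{b}\\ \mb{b}^T & h\end{pmatrix},\quad D=\mathrm{diag}\paren{\frac{1}{a_1},\ldots,\frac{1}{a_N}},\quad \mb{b}=-\frac{1}{v}(1,\ldots,1)^T,\quad h=\frac{A+\sum_{k=1}^N a_k}{v^2}.
\]
Since $D$ is positive and diagonal, the Schur complement criterion reduces positive definiteness of $H$ to the scalar inequality $h-\mb{b}^T D^{-1}\mb{b}>0$, which collapses immediately to $A/v^2>0$. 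This establishes strict convexity of $\wt{G}$ on $\mathbb{R}_+^{N+1}$ in the $(\mb{a},v)$ variables.

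The crucial structural observation for Part 2 is that the electroneutrality constraint \eqref{linsysEN}, when multiplied through by $v$, becomes the affine equation $\sum_{k=1}^N z_k a_k+zA=0$. Thus the smooth bijection $(\mb{c},v)\mapsto(\mb{a},v)$ carries $\overline{\mathcal{S}}$ onto the intersection of $\overline{\mathbb{R}_+}^N\times\mathbb{R}_+$ with a single affine hyperplane. Combined with Part 1, this forces the restriction of $\wt{G}$ to $\overline{\mathcal{S}}$ to be strictly convex, so it can admit at most one minimizer.

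To establish existence and identify that minimizer, I will verify coercivity and rule out boundary minimizers. As $v\to 0^+$ the term $A\ln(A/v)$ drives $\wt{G}\to+\infty$; as $v\to\infty$ the linear term $v\sum_k c_k^{\rm e}$ does the same; and the entropic terms $\sum_k a_k\ln(a_k/v)$ provide coercivity when $\abs{\mb{a}}\to\infty$ with $v$ bounded. Moreover $\partial \wt{G}/\partial a_k\to-\infty$ as $a_k\to 0^+$, so no face $\{a_k=0\}$ of the orthant can host a minimizer. A unique interior minimizer therefore exists on $\overline{\mathcal{S}}$, and by Lagrange multipliers it must satisfy $\ln(c_k/c_k^{\rm e})+q_k=\lambda z_k$ together with $\pi_{\rm w}=0$; setting $\phi=-\lambda$ recovers exactly \eqref{ckststate} and the osmotic balance condition, whose unique solution is $(\mb{c}^*,v^*,\phi^*)$ by Proposition \ref{linextcond}. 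The main obstacle will be this boundary bookkeeping: the Lagrange calculation is vacuous unless one first knows the minimum is attained in the open orthant, and that conclusion relies on combining the logarithmic blow-up of $\partial_{a_k}\wt{G}$ at $a_k=0$ with the barriers at $v\to 0^+$ and $v\to\infty$.
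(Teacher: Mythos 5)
Your Part 1 is correct and is essentially the paper's own computation in different packaging: the Schur complement identity $h-\mb{b}^TD^{-1}\mb{b}=A/v^2$ is exactly what the paper obtains by writing $\ip{\mb{x}}{H_G\mb{x}}_{\mathbb{R}^{N+1}}$ as a sum of squares plus $(A/v^2)x_v^2$. Likewise, the observation that electroneutrality becomes the affine constraint $\sum_k z_ka_k+zA=0$ in the $(\mb{a},v)$ variables, and the Lagrange computation identifying the multiplier with $-\phi$, both match the paper.

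The gap is in your existence step for Part 2. You split coercivity into $v\to 0^+$, $v\to\infty$, and $\abs{\mb{a}}\to\infty$ with $v$ bounded, and in the second case you assert that the linear term $v\sum_k c_k^{\rm e}$ forces $\wt{G}\to+\infty$. This ignores the regime where $\mb{a}$ grows proportionally with $v$, in which the entropic brackets contribute at order $-v$ and can overwhelm $v\sum_k c_k^{\rm e}$. Concretely, along $a_k=v\,c_k^{\rm e}\exp(-q_k-z_k\phi_{\rm min})$ one computes $\wt{G}=-f_{\rm min}\,v+A(\ln(A/v)-1)$ (using $df/d\phi=0$ at $\phi_{\rm min}$), and translating onto the constraint hyperplane changes this only by lower-order terms. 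So $\wt{G}$ is coercive in this direction precisely when $f_{\rm min}<0$, and when $f_{\rm min}\geq 0$ the infimum over $\overline{\mathcal{S}}$ is not attained at all --- consistent with Proposition \ref{linextcond} and Theorem \ref{burst}, which say no steady state exists then. Since your argument does not invoke condition \eqref{fphimin} until the final identification step, as written it would ``prove'' the existence of an interior minimizer, hence of a steady state, even when \eqref{fphimin} fails; that is false, so the coercivity step is where the hypothesis must do real work, and it needs an argument covering the joint limit $v,\abs{\mb{a}}\to\infty$.

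The paper sidesteps this entirely: under \eqref{fphimin}, Proposition \ref{linextcond} already supplies the interior stationary point of the Lagrangian, and for a function convex on the intersection of the convex set $\overline{\mathbb{R}_+}^N\times\mathbb{R}_+$ with a hyperplane, an interior stationary point is automatically the global minimizer (and unique, by strict convexity) --- no coercivity and no boundary analysis are needed. Your route can be repaired by proving coercivity on the constraint set under \eqref{fphimin} (minimize each bracket in $a_k$ for fixed $v$ and relate the resulting linear-in-$v$ coefficient to $f$), but the stationary-point-plus-convexity argument is both shorter and the one the paper uses.
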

\begin{proof}
Let $H_G$ denote the $(N+1)\times (N+1)$
Hessian matrix of $\wt{G}(\mb{a},v)$. The Hessian matrix is well-defined 
for $(\mb{a},v)\in \mathbb{R}_+^{N+1}$.
For any vector $\mb{x}=(x_1,\cdots,x_N,x_v)\in \mathbb{R}^{N+1}$ 
we have:
\begin{equation}
\ip{\mb{x}}{H_G\mb{x}}_{\mathbb{R}^{N+1}}=\sum_{k=1}^N 
\paren{\frac{1}{\sqrt{a_k}}x_k-\frac{\sqrt{a_k}}{v}x_v}^2+\frac{A}{v^2}x_v^2,
\end{equation}
The Hessian matrix $H_G$ is thus
positive definite at every point in $(\mb{a},v)\in \mathbb{R}_+^{N+1}$, 
and $\wt{G}$ is thus a convex function.

To prove the second item, we first rephrase the assertion 
in terms of $(\mb{a},v)$.
We must show that $(\mb{a}^*,v^*)=(v^*\mb{c}^*,v^*)$
is the unique minimizer of 
$\wt{G}(\mb{a},v)$, 
$(\mb{a},v)\in \overline{\mathbb{R}_+}^N\times \mathbb{R}_+$ 
when restricted to the hyperplane:
\begin{equation}\label{QCk}
\sum_{k=1}^N z_ka_k+A=0 
\end{equation}
where $\mb{a}=(a_1,\cdots,a_N)^T$.

We seek stationary points of $\wt{G}$ restricted to the hyperplane \eqref{QCk}.
Consider:
\begin{equation}
\wt{G}_\lambda=\wt{G}+\lambda\paren{\sum_{k=1}^N z_ka_k +A}
\end{equation}
where $\lambda$ is the Lagrange multiplier. The condition 
for a stationary point is given by:
\begin{equation}
\begin{split}
\PD{\wt{G}_\lambda}{a_k}&=\ln\paren{\frac{a_k/v}{c^{\rm e}_k}}+z_k\lambda
+q_k=0,\\
\PD{\wt{G}_\lambda}{v}&=\pi_{\rm w}=0.
\end{split}
\end{equation}
If we identify $\lambda$ with $\phi$, the membrane potential, 
the above condition is nothing other than the condition for steady 
state of system \eqref{linsys}. Since condition \eqref{fphimin} is satisfied, 
by Proposition \ref{linextcond}, the above system has a unique solution and 
$(\mb{a},v,\lambda)=(\mb{a}^*,v^*,\phi^*)$ where $\phi^*$
is the value of $\phi$ at the unique steady state of \eqref{linsys}.
Since $\wt{G}$ is a convex function on $\overline{\mathbb{R}_+}^N\times \mathbb{R}_+$, 
its restriction to the hyperplane \eqref{QCk} is also a convex function.
Thus, this stationary point is the unique minimizer.
\end{proof}

We may now state our first stability result.
\begin{proposition}\label{linmain}
Suppose condition \eqref{fphimin} is satisfied. Then, 
the unique steady state of \eqref{linsys}
is asymptotically stable. Moreover, the decay to 
the steady state is exponential, and the  
linearized operator around the steady state is diagonalizable 
with real and negative eigenvalues.
\end{proposition}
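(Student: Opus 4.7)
The plan is to recognize system \eqref{linsys} as a constrained gradient flow of $\wt{G}$ on the affine hyperplane encoding electroneutrality, and then extract the spectral information at the steady state from the Hessian of $\wt{G}$ provided by Lemma \ref{propG}.

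In the $(\mb{a},v)$ variables with $\mb{a}=v\mb{c}$, a direct calculation from \eqref{wtG} gives
\begin{equation*}
\PD{\wt{G}}{a_k}=\gamma_k+q_k,\qquad \PD{\wt{G}}{v}=\pi_{\rm w},
\end{equation*}
so that equations \eqref{Ceq} and \eqref{linsys2} take the form
\begin{equation*}
\D{\mb{a}}{t}=-L\paren{\PD{\wt{G}}{\mb{a}}+\phi\mb{z}},\qquad \D{v}{t}=-\zeta\PD{\wt{G}}{v}.
\end{equation*}
Here $\phi$ plays the role of a Lagrange multiplier enforcing the affine constraint $\mathcal{H}=\{(\mb{a},v):\sum_k z_k a_k+zA=0\}$; using \eqref{Lzgamma} one checks that the flow is tangent to $\mathcal{H}$. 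Thus the dynamics on $\mathcal{H}$ is the gradient flow of $\wt{G}|_{\mathcal{H}}$ with respect to the constant Riemannian metric whose matrix is the symmetric positive definite block-diagonal matrix $\operatorname{diag}(L^{-1},\zeta^{-1})$.

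Linearizing at $(\mb{a}^*,v^*)$ and using any linear identification of $T_{(\mb{a}^*,v^*)}\mathcal{H}$ with $\mathbb{R}^N$, the linearized operator reads $-M^{-1}H$, where $H$ is the matrix of the Hessian of $\wt{G}|_{\mathcal{H}}$ at the steady state and $M$ is the coordinate expression of the metric. By Lemma \ref{propG}(1) the Hessian of $\wt{G}$ on the ambient space is symmetric positive definite, so its restriction $H$ to the tangent hyperplane is too; and $M$ is symmetric positive definite by construction. Conjugating $M^{-1}H$ by $M^{1/2}$ exhibits it as similar to the symmetric positive definite matrix $M^{-1/2}HM^{-1/2}$, hence $M^{-1}H$ is diagonalizable with real strictly positive spectrum. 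Therefore $-M^{-1}H$ is diagonalizable with real strictly negative eigenvalues, which is precisely the spectral claim of the proposition; asymptotic stability with exponential decay then follows from the standard Lyapunov-stability-by-linearization theorem (alternatively, from a direct argument with $\wh{G}$ using that both $H$ and the quadratic form $J$ of \eqref{Ghateq} are positive definite and non-degenerate near the steady state).

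The principal technical nuisance is the bookkeeping associated with the electroneutrality constraint: one must verify that the flow stays on $\mathcal{H}$, choose coordinates on $T\mathcal{H}$ that render the symmetry of the linearized operator manifest, and confirm that restricting an SPD matrix to a subspace yields an SPD matrix (which is immediate). Once these formalities are dispensed with, the spectral analysis reduces to the elementary linear-algebraic fact that the product of two symmetric positive definite matrices is diagonalizable with strictly positive spectrum.
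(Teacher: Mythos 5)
Your proposal is correct and follows essentially the same route as the paper: both identify \eqref{linsys} as a gradient flow of $\wt{G}$ on the electroneutral hyperplane and obtain the spectral claim by conjugating the product of two symmetric positive definite matrices. The only cosmetic difference is that you describe the flow via the metric $\operatorname{diag}(L^{-1},\zeta^{-1})$ restricted to the tangent hyperplane while the paper works with the mobility matrix $L_u=\operatorname{diag}(\wh{L}_b^{\perp},\zeta)$ in explicit orthonormal coordinates; these agree by the Schur-complement identity $(B^{T}L^{-1}B)^{-1}=B^{T}\wh{L}B$ for an orthonormal basis $B$ of $\mb{z}^{\perp}$, and you conjugate by $M^{1/2}$ where the paper conjugates by $(H_u^*)^{1/2}$, which is immaterial.
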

The linearized operator above refers to the linearization when 
\eqref{linsys} is seen as an ODE system on the $N$-dimensional 
submanifold defined by the electroneutrality constraint \eqref{linsysEN}.
We note that asymptotic stability is in fact 
an immediate consequence of Lemma \ref{propG}
by a Lyapunov stability argument. Thus, if we are only interested 
in asymptotic stability, there is no need to study the linearization.
A Lyapunov stability argument will be used to study the global behavior of 
solutions in the proof of Theorem \ref{globalstab}.
\begin{proof}[Proof of Proposition \ref{linmain}]
It suffices to prove this claim by studying the 
dynamics of \eqref{linsys} in the variables $(\mb{a},v)$
instead of $(\mb{c},v)$.
Since condition \eqref{fphimin} is satisfied, 
there is a unique steady state by  
Proposition \ref{linextcond}.
To study the linearization around this steady 
state we must change variables to remove the implicit constraint 
\eqref{linsysEN} (or equivalently, \eqref{QCk}) 
and obtain an $N$-dimensional ODE system.

From \eqref{whLqdef}, we have:
\begin{equation}\label{eqnina}
\D{\mb{a}}{t}=-\wh{L}\nabla_a \wt{G}, 
\end{equation}
where $\nabla_a\wt{G}$ is the gradient of $\wt{G}(\mb{a},v)$ 
with respect to $\mb{a}$ while keeping $v$ fixed.
Let us examine the matrix $\wh{L}$. Take a vector $\mb{w}\in \mathbb{R}^N$.
By \eqref{whLqdef}, we have:
\begin{equation}\label{whatLw}
\ip{\mb{w}}{\wh{L}\mb{w}}_{\mathbb{R}^N}=
\ip{\mb{w}}{L\mb{w}}_{\mathbb{R}^N}-\frac{
\paren{\ip{\mb{w}}{L\mb{z}}_{\mathbb{R}^N}}^2}{\ip{\mb{z}}{L\mb{z}}_{\mathbb{R}^N}}.
\end{equation}
Since $L$ is symmetric positive definite, 
the above quantity is non-negative by the Cauchy-Schwarz inequality and is 
equal to $0$ if and only if $\mb{w}$ is a constant multiple of $\mb{z}$. 
Thus, $\wh{L}$ is a symmetric positive semi-definite matrix whose 
eigenspace corresponding to the eigenvalue $0$ is spanned by $\mb{z}$. 
The restriction of $\wh{L}$ to 
the orthogonal complement of this eigenspace is thus positive definite. 

Consider a change of coordinates from $\mb{a}$ to 
an orthonormal coordinate system $\mb{b}$ satisfying:
\begin{equation}
\mb{b}=(b_1,\cdots,b_N)^T, \; b_N=\frac{1}{\abs{\mb{z}}}\ip{\mb{z}}{\mb{a}}_{\mathbb{R}^N}.
\end{equation}
The $N$-th coordinate axis is thus
parallel to $\mb{z}$. Note that the electroneutrality constraint \eqref{QCk}
can be written as $b_N=\text{constant}$.
Let $\mb{b}=U\mb{a}$, where 
$U$ is the orthogonal coordinate transformation matrix. 
In the $\mb{b}$ coordinate system, 
$\wh{L}$ transforms to $\wh{L}_b=U\wh{L}U^{-1}$. 
Given the above properties of $\wh{L}$, 
$\wh{L}_b$ has the form:
\begin{equation}\label{whLU}
\wh{L}_b=\begin{pmatrix}
\wh{L}_b^\perp & \mb{0}_{N-1}\\
\mb{0}_{N-1}^T & 0
\end{pmatrix},
\end{equation}
where $\wh{L}_b^{\perp}$ is a positive definite matrix and $\mb{0}_{N-1}$
is the zero column vector of length $N-1$. 
Rewriting \eqref{eqnina} in terms of $\mb{b}$, we have: 
\begin{equation}
\D{\mb{b}}{t}=-\wh{L}_b\nabla_{b} \wt{G}
\end{equation}
where $\nabla_b \wt{G}$ is the gradient of $\wt{G}$ (keeping $v$ fixed) 
seen as a function of $\mb{b}$. From \eqref{whLU}, we see that 
$b_N$ remains constant. Let $\wh{\mb{b}}=(b_1,\cdots,b_{N-1})$.
We have:
\begin{equation}\label{LUG}
\D{\wh{\mb{b}}}{t}=-\wh{L}_b^\perp\nabla_{\wh{b}}\wt{G}
\end{equation}
where $\nabla_{\wh{b}}$ is the gradient of $\wt{G}$ with respect to 
$\wh{\mb{b}}$ while keeping $b_N$ and $v$ fixed. 
We have thus reduced the system \eqref{linsys} 
to \eqref{LUG} and to \eqref{linsys2}
which can be written as:
\begin{equation}
\D{v}{t}=-\zeta \PD{\wt{G}}{v}.
\end{equation}
Letting $\mb{u}=(b_1,\cdots,b_{N-1},v)^T$, 
we may write our system as follows:
\begin{equation}\label{ELE}
\D{\mb{u}}{t}=-L_u\nabla_u \wt{G}, \quad
L_u=
\begin{pmatrix}
\wh{L}_b^\perp & \mb{0}_{N-1}\\
\mb{0}_{N-1}^T & \zeta
\end{pmatrix},
\end{equation}
where $\nabla_u$ is the gradient of $\wt{G}$ with respect to $\mb{u}$
while keeping $b_N$ fixed.
We have thus obtained the requisite $N$-dimensional ODE system in the 
variables $\mb{u}$; the electroneutrality constraint $b_N=\text{constant}$ 
only appears as a parameter of the system.

Let $H_u$ be the Hessian of $\wt{G}$ with respect to $\mb{u}$. 
Given Lemma \ref{propG}, $H_u$ is symmetric positive definite.
Indeed, the quadratic form defined by $H_u$ is just the restriction 
of $H_G$ (defined in Lemma \ref{propG}) to the subspace of $\mathbb{R}^{N+1}$
orthogonal to $(\mb{z}^T,0)^T$.
The linearized operator of \eqref{ELE} around 
steady state is thus given by $-L_uH_u^*$ where $H_u^*$
is the evaluation of $H_u$ at the steady state. 
Note that $L_uH_u^*$ is similar 
to $(H_u^*)^{1/2}L_u(H_u^*)^{1/2}$ where $(H_u^*)^{1/2}$ is the 
positive square root of $H_u^*$, which exists thanks to positive 
definiteness of $H_u^*$. Since $(H_u^*)^{1/2}L_u(H_u^*)^{1/2}$
is a symmetric matrix and since $L_u$
is symmetric positive definite, so is 
$(H_u^*)^{1/2}L_u(H_u^*)^{1/2}$. Thus, $-L_uH_u^*$ is diagonalizable 
with real negative eigenvalues. The steady state is asymptotically 
stable and the approach to steady state is exponential.
\end{proof}
By rewriting \eqref{linsys} as \eqref{ELE}, we see that
the system is a gradient flow on the hyperplane defined by 
the electroneutrality constraint where the metric is given by $L_u^{-1}$.
This led us to the conclusion that the linearization is 
diagonalizable with real negative eigenvalues. Note that 
we made essential use of the {\em symmetry} of the matrix $L$, 
which came from the Onsager reciprocity principle (see \eqref{posdef}).
The gradient structure of our system combined with the convexity 
of $\wt{G}$ has another interesting consequence as we shall see 
in Lemma \ref{propJ}.

\subsection{Global Behavior}\label{glbehavior}

We now state the main result of this Section.
\begin{theorem}\label{globalstab}
Suppose condition \eqref{fphimin} is satisfied. Then, 
\eqref{linsys} has a unique steady state and 
it is globally asymptotically stable.
The linearization around steady state is diagonalizable with 
real negative eigenvalues.
\end{theorem}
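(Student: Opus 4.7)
The existence and uniqueness of a steady state under \eqref{fphimin}, as well as the local asymptotic stability and the real, negative, diagonalizable character of the linearized operator, are already provided by Propositions \ref{linextcond} and \ref{linmain}. The only new content is global attractivity, and my plan is to combine two Lyapunov functions in order to get precompactness of orbits, after which a LaSalle-type argument closes the proof.

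The primary Lyapunov function is the relative free energy $\wh{G}$ from \eqref{hatG}. By Lemma \ref{propG}, $\wh{G}\geq 0$ on $\overline{\mathcal{S}}$ with equality only at $(\mb{c}^*,v^*)$, and by Lemma \ref{FEq} (in the form \eqref{Ghateq}) it is non-increasing along trajectories. Direct inspection of \eqref{hatG} shows that $\wh{G}(\mb{c},v)\to\infty$ as $v\to 0^+$, $v\to\infty$, or any $c_k\to\infty$ with $v$ bounded below. Consequently, the orbit is trapped in a set where $v\in[v_m,v_M]\subset(0,\infty)$ and $c_k\le M_c$ for each $k$. The problem is that $\wh{G}$ does \emph{not} control $c_k$ from below, so sublevel sets of $\wh{G}$ in $\mathcal{S}$ are bounded but not closed in $\mathbb{R}_+^{N+1}$. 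The main obstacle in the proof is exactly this: ruling out that some $c_k$ reaches $0$ (in finite or infinite time), since otherwise $\bm{\mu}$ becomes singular.

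To overcome this, I will use a second Lyapunov function $J$ itself. The gradient-flow representation \eqref{ELE} of the reduced system, namely $d\mb{u}/dt=-L_u\nabla_u\wt{G}$ with $L_u$ symmetric positive definite and $\wt{G}$ convex (Lemma \ref{propG}), gives
\begin{equation}
\D{J}{t}=-2\ip{\nabla_u\wt{G}}{L_uH_uL_u\nabla_u\wt{G}}_{\mathbb{R}^N}\le 0,
\end{equation}
where $H_u$ is the (positive semidefinite) Hessian of $\wt{G}$ with respect to $\mb{u}$; this is the content of the Lemma \ref{propJ} foreshadowed in Section \ref{linsteadglobal}. Hence $J(t)\le J(0)$ for all $t\ge 0$. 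From $J\ge\lambda_{\min}(L)\abs{\bm{\mu}+\mb{q}}^2$ one obtains a uniform bound on $\mu_k=\ln(c_k/c_k^{\rm e})+z_k\phi$. Combined with the extracellular electroneutrality $\sum_k z_kc_k^{\rm e}=0$, which forces both positive and negative valences to appear, a short sign argument excludes $c_k\to 0$: if $\gamma_{k}\to-\infty$ for some $k$, the uniform bound on $\mu_k$ forces $z_k\phi\to+\infty$, which in turn forces $c_l\to\infty$ for any $l$ of opposite valence, contradicting the upper bound from $\wh{G}$.

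With $(\mb{c}(t),v(t))$ confined to a compact subset $K$ of $\mathbb{R}_+^{N+1}\cap\mathcal{S}$, global-in-time existence is immediate. Integrating $d\wh{G}/dt=-J$ gives $\int_0^\infty J\,dt\le\wh{G}(\mb{c}(0),v(0))<\infty$, and since $J$ is non-increasing this forces $J(t)\to 0$. Because $L$ is positive definite and $\zeta>0$, $J=0$ forces $\bm{\mu}=-\mb{q}$ and $\pi_{\rm w}=0$, which together with electroneutrality uniquely characterise the steady state by Proposition \ref{linextcond}. A standard LaSalle invariance argument on $K$ then shows that the $\omega$-limit set of every orbit is exactly $\{(\mb{c}^*,v^*)\}$, giving global convergence. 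Combined with the local stability from Proposition \ref{linmain}, this is global asymptotic stability, and the eigenvalue statement carries over verbatim from that proposition.
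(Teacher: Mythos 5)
Your proof is correct, and its overall architecture is the same as the paper's: the relative free energy $\wh{G}$ gives the upper bounds on $c_k$ and the two-sided bound on $v$, the second Lyapunov function $J$ (whose monotonicity you derive, as the paper does, from the gradient-flow form \eqref{ELE} together with convexity of $\wt{G}$) rules out $c_k\to 0$, and compactness plus dissipation gives convergence. You diverge from the paper in two technical steps, both legitimately. First, to keep $\mb{c}(t)$ away from $\partial\mathbb{R}_+^N$ the paper proves (item 2 of Lemma \ref{propJ}) that $J_c(\mb{c})=\ip{\bm{\gamma}+\mb{q}}{\wh{L}(\bm{\gamma}+\mb{q})}_{\mathbb{R}^N}\to+\infty$ at the boundary; since $\wh{L}$ is only positive \emph{semi}definite with kernel spanned by $\mb{z}$, this requires the decomposition $\bm{\gamma}_1+\bm{\gamma}_2$ and a uniform lower bound for $\ip{\mb{w}}{\wh{L}\mb{w}}_{\mathbb{R}^N}$ over the nonnegative orthant. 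You instead use the positive-definite form $J=\ip{\bm{\mu}+\mb{q}}{L(\bm{\mu}+\mb{q})}_{\mathbb{R}^N}+\zeta\pi_{\rm w}^2$ to get $\abs{\bm{\mu}+\mb{q}}^2\le J(0)/\lambda_{\min}(L)$ and then argue by signs on $\phi$ using $\sum_k z_kc_k^{\rm e}=0$; this is shorter and avoids the semidefinite analysis entirely (you should add one line for the case $z_k=0$, where the bound on $\mu_k=\gamma_k$ excludes $c_k\to 0$ directly, but that is trivial). Second, for the convergence step the paper shows $J\ge K_J>0$ on $\mathcal{K}\setminus\mathcal{A}_\delta$ and derives a contradiction from $\wh{G}<M-K_Jt$, whereas you integrate $d\wh{G}/dt=-J$, combine $\int_0^\infty J\,dt<\infty$ with monotonicity of $J$ to conclude $J(t)\to 0$, and identify the $\omega$-limit set with the unique zero of $J$ on $\mathcal{S}$; both arguments are standard and equally rigorous. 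The one thing the paper's formulation buys is that the boundary blow-up of $J_c$ is packaged as a lemma that is reused verbatim in Steps 2--3 of the proof of Theorem \ref{burst}; your $\phi$-sign argument would serve there as well but is not set up for reuse.
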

Existence of the unique steady state was proved in Proposition \ref{linextcond}.
Asymptotic stability and the property of the linearized 
operator was proved in Proposition \ref{linmain}.
We have thus only to prove that all solutions with initial 
value in $\mathcal{S}$ (see \eqref{icS}) 
converge to the steady state as $t\to \infty$.
Implicit in this assertion is that these solutions 
are global (defined for all positive time).
Once this is established, we use the fact that $\wh{G}$
is a Lyapunov function to obtain the 
desired result.
 
To prove that all solutions are global, 
we must rule out two possibilities.
The first is that the solution may grow unbounded 
in finite time. To show that this is not possible, 
we make use of the function $\wh{G}$.
The second possibility is that one or more of the 
concentrations $c_k$ or the cell volume $v$ may come
arbitrarily close to $0$ in finite time. 
To show that this cannot happen, we examine the free energy 
dissipation function $J$ defined in \eqref{GJdef}.
\begin{lemma}\label{propJ}
View $J$ defined in \eqref{GJdef} as a function of 
$(\mb{c},v)\in \mathbb{R}_+^N$:
\begin{equation}\label{Jcdef}
J(\mb{c},v)=J_c(\mb{c})+\zeta(\pi_{\rm w}(\mb{c},v))^2, \; 
J_c(\mb{c})=\ip{\bm{\gamma}+{\mb{q}}}{\wh{L}(\bm{\gamma}+\mb{q})}_{\mathbb{R}^N}.
\end{equation}
\begin{enumerate}
\item Consider any solution $(\mb{c}(t),v(t))$ of system \eqref{linsys}.
We have:
\begin{equation}\label{LyapJ}
\begin{split}
\D{}{t}J(\mb{c}(t),v(t))&=
-\frac{2}{v}\sum_{k=1}^N
\paren{\frac{\rho_k}{\sqrt{c_k}}-\zeta \pi_{\rm w}\sqrt{c_k}}^2
-\frac{2A}{v^2}(\zeta\pi_{\rm w})^2,\\
\bm{\rho}&=(\rho_1,\cdots,\rho_N)^T=\wh{L}(\bm{\gamma}+\mb{q}).
\end{split}
\end{equation}
The function $J$ is thus a Lyapunov function in the sense that 
it is monotone non-increasing.
\item The function $J_c(\mb{c})$ defined in \eqref{Jcdef} 
tends to $+\infty$ as $\mb{c}$ approaches 
any point on $\partial \mathbb{R}^N_+$ 
where $\partial \cdot$ denotes the boundary of a set.
\end{enumerate}
\end{lemma}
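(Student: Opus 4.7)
The plan is to prove the two parts by quite different means. Part 1 will be a direct differentiation of $J$ along trajectories, exploiting three clean identities: from \eqref{linsys1} and \eqref{whLqdef} one has $\D{\mb{a}}{t} = -L(\bm{\mu}+\mb{q}) = -\wh{L}(\bm{\gamma}+\mb{q}) = -\bm{\rho}$, from \eqref{linsys2} one has $\D{v}{t} = -\zeta\pi_{\rm w}$, and from the zero-current condition \eqref{ENI} one has $\sum_k z_k \D{a_k}{t} = 0$. Part 2 will be a proof by contradiction that leverages the structure of $\wh{L}$ (its kernel is spanned by $\mb{z}$, it is strictly positive on $\mb{z}^{\perp}$) together with extracellular electroneutrality, which guarantees the coexistence of species with positive and negative valence.

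For Part 1, I would differentiate $J$ and write
$\D{J}{t} = 2\ip{L(\bm{\mu}+\mb{q})}{\D{\bm{\mu}}{t}}_{\mathbb{R}^N} + 2\zeta\pi_{\rm w}\D{\pi_{\rm w}}{t}$,
substitute $L(\bm{\mu}+\mb{q}) = -\D{\mb{a}}{t}$ from above, and expand $\D{\mu_k}{t} = (\D{c_k}{t})/c_k + z_k\D{\phi}{t}$. The $\D{\phi}{t}$ contribution then vanishes identically because $\sum_k z_k \D{a_k}{t}=0$. Using $\D{a_k}{t} = v\D{c_k}{t} + (\D{v}{t})c_k$ and $\D{\pi_{\rm w}}{t} = -\sum_k \D{c_k}{t} + A(\D{v}{t})/v^2$ together with $\D{v}{t} = -\zeta\pi_{\rm w}$, the cross terms $\pm 2(\D{v}{t})\sum_k \D{c_k}{t}$ cancel and one is left with $\D{J}{t} = -2v\sum_k (\D{c_k}{t})^2/c_k - 2A(\D{v}{t})^2/v^2$. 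Finally, substituting $\D{c_k}{t} = (-\rho_k + \zeta\pi_{\rm w}c_k)/v$ turns the first sum into $(1/v)\sum_k (\rho_k/\sqrt{c_k} - \zeta\pi_{\rm w}\sqrt{c_k})^2$, yielding \eqref{LyapJ} exactly; monotonicity is then immediate because both terms are non-positive.

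For Part 2, suppose toward contradiction that a sequence $\mb{c}^{(n)}\to\mb{c}^*\in\partial\mathbb{R}_+^N$ keeps $J_c(\mb{c}^{(n)})$ bounded. Decompose $\bm{\gamma}^{(n)}+\mb{q}=\alpha^{(n)}\mb{z}+\bm{\delta}^{(n)}$ with $\bm{\delta}^{(n)}\perp\mb{z}$; because $\wh{L}$ is symmetric positive semi-definite with kernel exactly $\mathrm{span}(\mb{z})$ (as shown in the proof of Proposition \ref{linmain}), it is strictly positive on $\mb{z}^{\perp}$, so $J_c\geq\lambda_{\min}\norm{\bm{\delta}^{(n)}}^2$ and $\bm{\delta}^{(n)}$ is bounded. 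Choose any $k_0$ with $c_{k_0}^*=0$, so that $\gamma_{k_0}^{(n)} = \alpha^{(n)}z_{k_0} + \delta_{k_0}^{(n)} - q_{k_0}\to-\infty$. If $z_{k_0}=0$ this forces $\delta_{k_0}^{(n)}\to-\infty$, contradicting boundedness of $\bm{\delta}^{(n)}$. If $z_{k_0}\neq 0$, then $\alpha^{(n)}$ must diverge; extracellular electroneutrality $\sum_k z_k c_k^{\rm e}=0$ with $c_k^{\rm e}>0$ then supplies an index $k_1$ with sign of $z_{k_1}$ opposite to that of $z_{k_0}$, and for this $k_1$ one obtains $\gamma_{k_1}^{(n)}=\alpha^{(n)}z_{k_1}+\delta_{k_1}^{(n)}-q_{k_1}\to+\infty$, so $c_{k_1}^{(n)}\to\infty$, contradicting convergence of $\mb{c}^{(n)}$ to the finite point $\mb{c}^*$. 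The main obstacle is precisely this case split: one must rule out the possibility that $\bm{\gamma}+\mb{q}$ escapes entirely along the null direction $\mb{z}$ of $\wh{L}$, and the non-obvious ingredient that prevents it is the coexistence of oppositely charged species forced by extracellular electroneutrality.
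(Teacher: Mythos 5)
Your proof is correct. For part 1 you are doing exactly what the paper does: the paper's own proof simply asserts that the identity \eqref{LyapJ} ``is most easily done by direct calculation,'' and your computation is that calculation carried out correctly --- the symmetry of $L$, the vanishing of the $d\phi/dt$ contribution via $\sum_k z_k\,da_k/dt=0$, and the cancellation of the cross terms $\pm 2(dv/dt)\sum_k dc_k/dt$ are precisely the ingredients needed to land on \eqref{LyapJ}. For part 2 your route is genuinely different. The paper argues directly: it splits $\bm{\gamma}+\mb{q}$ \emph{by coordinates} into a part $\bm{\gamma}_1$ supported on the indices where $c_k\to 0$ (whose entries tend to $+\infty$ after a sign flip) and a bounded remainder $\bm{\gamma}_2$, observes that $\ip{\mb{u}}{\wh{L}\mb{u}}\geq K_w\abs{\mb{u}}^2$ for every componentwise non-negative $\mb{u}$ --- because $\mb{z}$ has entries of both signs, no non-zero non-negative vector is parallel to it --- and concludes $J_c\geq \paren{K_w\abs{\bm{\gamma}_1}-2\abs{\wh{L}\bm{\gamma}_2}}\abs{\bm{\gamma}_1}+\ip{\bm{\gamma}_2}{\wh{L}\bm{\gamma}_2}\to+\infty$. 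You instead decompose orthogonally against the kernel of $\wh{L}$ and argue by contradiction; your case split on $z_{k_0}$, with extracellular electroneutrality supplying an oppositely charged species, is exactly the right way to rule out escape of $\bm{\gamma}+\mb{q}$ along the null direction $\mathrm{span}(\mb{z})$, and it rests on the same underlying fact (both signs present in $\mb{z}$) that powers the paper's coercivity bound. The paper's version buys a quantitative divergence rate (quadratic in $\abs{\bm{\gamma}_1}$), which is not actually needed elsewhere; your version is only qualitative but isolates more transparently the unique failure mode and kills it with the finiteness of the limit point, so both are perfectly adequate for the use made of the lemma in Theorems \ref{globalstab} and \ref{burst}.
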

That $J$ is a Lyapunov function can be seen as follows.
View $J$ as a function of $\mb{u}$ introduced in the proof of 
Proposition \ref{linmain}. First, note that:
\begin{equation}
\D{\wt{G}}{t}
=-\ip{\nabla_u \wt{G}}{\D{\mb{u}}{t}}_{\mathbb{R}^N}
=-\ip{\nabla_u \wt{G}}{L_u\nabla_u\wt{G}}_{\mathbb{R}^N}=-J
\end{equation}
where we used \eqref{ELE} in the second equality. We thus have:
\begin{equation}\label{DtJconvex}
\begin{split}
\D{J}{t}&
=2\ip{L_u\nabla_u \wt{G}}{\D{}{t}\paren{\nabla_u \wt{G}}}_{\mathbb{R}^N}
=2\ip{L_u\nabla_u \wt{G}}{H_u\D{\mb{u}}{t}}_{\mathbb{R}^N}\\
&=-2\ip{L_u\nabla_u \wt{G}}{H_u L_u\nabla_u \wt{G}}_{\mathbb{R}^N}
\end{split}
\end{equation}
where $H_u$ is the Hessian matrix of $\wt{G}$ with respect to $\mb{u}$.
We used \eqref{ELE} in the last equality.
As we saw in the proof of Proposition \ref{linmain},
$H_u$ is positive definite. Therefore, $J$ is monotone non-increasing.
We see that the Lyapunov property of $J$ is a 
general consequence of the fact our system, in suitable variables, 
is a gradient flow of the convex energy function $\wt{G}$.
\begin{proof}[Proof of Lemma \ref{propJ}]
We saw above that $dJ/dt$ is non-positive, 
but we have not obtained the right hand expression in \eqref{LyapJ}.
This is most easily done by direct calculation.
We turn to the second claim.  
Take any point $\mb{c}^{\rm b}\in \partial \mathbb{R}^N_+$ 
and assume without loss 
of generality that the first $1\leq l\leq N$ components of $\mb{c}^{\rm b}$ are $0$:
\begin{equation}
\mb{c}^{\rm b}=(\underbrace{0,\cdots,0}_{l},
c_{l+1}^{\rm b},\cdots,c_N^{\rm b}).
\end{equation}
Decompose the vector $\bm{\gamma}+\mb{q}$ in the following fashion.
\begin{equation}\label{gamma12def}
\begin{split}
\bm{\gamma}+\mb{q}&=-(\bm{\gamma}_1+\bm{\gamma}_2),\\
\bm{\gamma}_1&=-(\gamma_1+q_1,\cdots,\gamma_l+q_l,\underbrace{0,\cdots,0}_{N-l})^T,\\
\quad
\bm{\gamma}_2&=-(\underbrace{0,\cdots,0}_l,\gamma_{l+1}+q_{l+1},\cdots,\gamma_{N}+q_N)^T.
\end{split}
\end{equation}
Now, consider a sequence of points $\mb{c}^n\in \mathbb{R}^N_+, n=1,2,\cdots$ such that 
$\mb{c}^n\to \mb{c}^{\rm b}$ as $n\to \infty$.
Given that:
\begin{equation}
\gamma_k+q_k=\ln \paren{\frac{c_k}{c_k^{\rm e}}}+q_k, \; k=1,\cdots,N,
\end{equation}
each of the first $l$ non-zero components of $\bm{\gamma}_1$ goes to $+\infty$
as $\mb{c}^n\to \mb{c}^{\rm b}$ whereas $\bm{\gamma}_2$
remains bounded as $\mb{c}^n\to \mb{c}^{\rm b}$. 
Now, take an arbitrary vector 
\begin{equation}
\mb{w}=(w_1,\cdots,w_N)\in \mathbb{R}^N, \; \abs{\mb{w}}=1, \; w_k\geq 0,\;  k=1,\cdots,N.
\end{equation}
Recall from \eqref{whatLw} and the subsequent discussion
that $\ip{\mb{w}}{\wh{L}\mb{w}}_{\mathbb{R}^N}$ is positive if $\mb{w}$
is not parallel to $\mb{z}$. The vectors $\mb{w}$ and $\mb{z}$
are indeed not parallel since $\mb{z}$ must have at least 
one component that is negative (there is at least one ionic 
species with negative valence). Therefore,
\begin{equation}
\min_{w_k\geq 0,k=1,\cdots,N, 
\abs{\mb{w}}=1}\ip{\mb{w}}{\wh{L}\mb{w}}_{\mathbb{R}^N}\equiv K_w>0 
\end{equation}
given that the set satisfying $w_k\geq 0, k=1,\cdots,N, \abs{\mb{w}}=1$ is compact.
Therefore, for any vector $\mb{u}$ whose components are non-negative, we have:
\begin{equation}\label{Kwineq}
\ip{\mb{u}}{\wh{L}\mb{u}}_{\mathbb{R}^N}\geq K_w\abs{\mb{u}}^2.
\end{equation}
Now, let us take the limit of $J_c(\mb{c})$ as $\mb{c}^n \to \mb{c}^{\rm b}$.
If $\mb{c}^n$ is sufficiently close to $\mb{c}^{\rm b}$, the first 
$l$ components of $\bm{\gamma}_1$ as defined in \eqref{gamma12def} 
are positive.
Therefore, we have:
\begin{equation}
\begin{split}
J_c(\mb{c}^n)&\geq \ip{\bm{\gamma}_1+\bm{\gamma}_2}{\wh{L}(\bm{\gamma}_1+\bm{\gamma}_2)}_{\mathbb{R}^N}\\
&\geq \paren{K_w\abs{\bm{\gamma}_1}-2\abs{\wh{L}\bm{\gamma}_2}}\abs{\bm{\gamma}_1}
+\ip{\bm{\gamma}_2}{\wh{L}\bm{\gamma}_2}_{\mathbb{R}^N}
\end{split}
\end{equation}
where we used \eqref{Kwineq} and the Cauchy-Schwarz inequality. 
Since $\abs{\bm{\gamma}_1}\to +\infty$ and $\bm{\gamma}_2$ remains bounded as 
$\mb{c}^n\to \mb{c}^{\rm b}$, $J_c(\mb{c}^n)$ tends to $+\infty$.
\end{proof}

\begin{proof}[Proof of Theorem \ref{globalstab}]
Take an arbitrary initial value $(\mb{c}^0,v^0)\in \mathcal{S}$
where $\mathcal{S}$ was defined in \eqref{icS}.
We first show that the solution to \eqref{linsys} starting from 
$(\mb{c}^0,v^0)$ is defined for all $t>0$.

View $\wh{G}$ of \eqref{hatG} as a function of $(\mb{c},v)$.
Consider the set:
\begin{equation}\label{Ac}
\mathcal{A}_M=\lbrace (\mb{c},v)\in \mathcal{S}|
\wh{G}(\mb{c},v)< M \rbrace,
\end{equation}
where we choose $M$ so that $M>\wh{G}(\mb{c}^0,v^0)$.
Given \eqref{Ghateq}, the solution stays 
within $\mathcal{A}_M$ so long as the solution is defined.
We first show that the set $\mathcal{A}_M$ is bounded and 
that it is bounded away from the hyperplane $v=0$. 

Any element in $\mathcal{A}_M$ satisfies:
\begin{equation}\label{Mexp}
\sum_{k=1}^N \paren{c_k\paren{\ln \paren{\frac{c_k}{c_k^*}}-1}
+c_k^*}
<-\frac{A}{v}\paren{\ln \paren{\frac{v^*}{v}}-1}-\frac{A}{v^*}+\frac{M}{v}.
\end{equation}
It is easily seen that the left hand side of the above is greater 
than or equal to $0$. Therefore the right hand side must be 
greater than $0$, from which we obtain:
\begin{equation}
A\paren{\ln\paren{\frac{v^*}{v}}-1+\frac{v}{v^*}}<M.
\end{equation}
Therefore, $v$ must satisfy
$0<v_-<v<v_+<\infty$
for some constants $v_+$ and $v_-$.
Let $\overline{M}$ be the 
supremum of the right hand side of \eqref{Mexp} over 
$v_-<v<v_+$. This $\overline{M}$ is clearly finite. Thus, 
\begin{equation}
\sum_{k=1}^N \paren{c_k\paren{\ln \paren{\frac{c_k}{c_k^*}}-1}
+c_k^*}\leq \overline{M}.
\end{equation} 
We thus see that $c_k$ must be bounded above by a constant 
$c_+$ that depends only on $M$. Therefore, we have:
\begin{equation}\label{vbound}
0<v_-<v<v_+, \quad c_k<c_+, k=1,\cdots, N.
\end{equation}

Let $(\mb{c}(t),v(t))$ be the 
solution to \eqref{linsys} with initial data $(\mb{c}^0,v^0)$.
Since the solution stays within $\mathcal{A}_M$, we know that 
the solution satisfies the bound \eqref{vbound}.
We now show that the concentrations $c_k(t)$ are bounded away 
from $0$. Recall from \eqref{LyapJ} of Lemma \ref{propJ} that the function 
$J$ is a non-increasing function in time. 
Thus, $J(\mb{c}(t),v(t))\leq J(\mb{c}^0,v^0)=M_J$.
Since $J_c(\mb{c})\leq J(\mb{c},v)$, we have $J_c(\mb{c}(t))\leq M_J$.
By the second item in Lemma \ref{propJ}, the set:
\begin{equation}
\lbrace \mb{c}=(c_1,\cdots,c_N)\in \mathbb{R}_+^N| J_c(\mb{c})\leq M_J, 
c_k<c_+, k=1,\cdots N\rbrace
\end{equation}
must be bounded away from $\partial\mathbb{R}_+^N$. Therefore, we have:
\begin{equation}
0<c_-<c_k(t)<c_+, k=1,\cdots, N,
\end{equation}
where $c_-$ is a constant that depends only on $M$ and $M_J$.
This, together with \eqref{vbound}, implies that the solution 
$(\mb{c}(t),v(t))$ lies in a compact subset $\mathcal{K}$ of 
$\mathcal{S}$.
This shows that the solution must be defined for all time. 

We now show that the solution $(\mb{c}(t),v(t))$ converges to the 
steady state $(\mb{c}^*,v^*)$. Take an arbitrary $\epsilon>0$
and let $\mathcal{B}_\epsilon\subset\mathbb{R}^{N+1}$ be the open ball of radius $\epsilon$
centered at $(\mb{c}^*,v^*)$.
We must show that 
$(\mb{c}(t),v(t))\in \mathcal{B}_\epsilon$ after finite time. 
Observe that we can make $\delta>0$ sufficiently small 
so that $\mathcal{A}_\delta\subset \mathcal{B}_\epsilon$
($\mathcal{A}_\delta$ is defined by replacing $M$ with $\delta$
in \eqref{Ac}).  
This is clear since, by Lemma \ref{propG},  
$(\mb{c}^*,v^*)$ is the unique minimizer 
of $\wh{G}$ over $\overline{\mathcal{S}}$.

Take $\delta$ so small that $\mathcal{A}_\delta\in \mathcal{B}_\epsilon$.
If $M\leq \delta$, $\mathcal{A}_M\subset \mathcal{A}_\delta\subset \mathcal{B}_\epsilon$. 
Since the solution is contained in $\mathcal{A}_M$, there is
nothing to prove. Assume $M>\delta$. We would like to show
that the solution is contained in $\mathcal{A}_\delta \subset \mathcal{B}_\epsilon$
after finite time. We prove this by contradiction. Suppose the solution 
never enters $\mathcal{A}_\delta$. Recall that the solution 
$(\mb{c}(t),v(t))$ was 
contained in a compact set $\mathcal{K}\subset\mathcal{S}$.
The function
$J(\mb{c},v)$ is clearly positive on $\mathcal{K}\backslash\mathcal{A}_\delta$,
since $(\mb{c}^*,v^*)\in \mathcal{A}_\delta$ is the only point at 
which $J=0$.
Since $\mathcal{K}\backslash\mathcal{A}_\delta$ is a compact set, 
$J>K_J>0$ on $\mathcal{K}\backslash\mathcal{A}_\delta$ where $K_J$ is a 
positive constant.
By \eqref{GJdef} of Lemma \ref{FEq} (or equivalently, \eqref{Ghateq}),
we see that:
\begin{equation}
\wh{G}(\mb{c}(t),v(t))< M-K_Jt.
\end{equation}
This implies that the solution will be in the set 
$\mathcal{A}_\delta$ for $t>(M-\delta)/K_J$, a contradiction.
\end{proof}
Theorem \ref{globalstab} shows that system \eqref{linsys} has 
the following remarkable robustness property. Suppose the pump rates 
$\mb{p}$ and the extracellular concentrations $\mb{c}^{\rm e}$
are perturbed within the bounds of condition \eqref{fphimin}.
Then, the cell will approach the new global steady state.  

The next theorem shows that 
when condition \eqref{fphimin} is not met, 
the cell volume $v(t)\to \infty$ as $t\to \infty$.
There is thus a dichotomy in the behavior 
of system \eqref{linsys} depending on whether condition 
\eqref{fphimin} is satisfied.
\begin{theorem}\label{burst}
Suppose condition \eqref{fphimin} is not satisfied so that 
$f_{\rm min}(\mb{q},\mb{c}^{\rm e},\mb{z})\geq 0$.
Take any solution $(\mb{c}(t),v(t))$ to \eqref{linsys}.
\begin{enumerate}
\item Suppose $f_{\rm min}(\mb{q},\mb{c}^{\rm e},\mb{z})>0$.
Then, 
\begin{equation}\label{vtoinftyin1stcase}
\lim_{t \to \infty} v(t)=\infty.
\end{equation}
\item Suppose $f_{\rm min}(\mb{q},\mb{c}^{\rm e},\mb{z})=0$.
Let $\phi_{\rm min}$ be as in Proposition \ref{linextcond}.
Then,
\begin{equation}\label{vtoinfty2ndcase}
\begin{split}
\lim_{t\to \infty}c_k(t)&= c_k^{\rm e}\exp(-q_k-z_k\phi_{\rm min}), 
\; k=1,\cdots N,\\
\lim_{t\to \infty}v(t)&= \infty.
\end{split}
\end{equation}
\end{enumerate}
\end{theorem}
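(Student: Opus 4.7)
The plan is to run the same Lyapunov machinery as in Theorem \ref{globalstab}, but now exploit the fact that $J$ has no zero on $\mathcal{S}$ to drive $v(t)$ to infinity. First I would establish that any solution is defined for all $t>0$. The argument mimics the one in Theorem \ref{globalstab}: by Lemma \ref{propJ}, $J$ is non-increasing, so $J_c(\mb{c}(t))\leq J(\mb{c}^0,v^0)$, and the blowup of $J_c$ on $\partial\mathbb{R}_+^N$ bounds $c_k(t)$ away from $0$ uniformly. Positive definiteness of $L$ converts this into a uniform bound on $|\bm{\mu}+\mb{q}|$, and via \eqref{Ceq} this gives $|v(t)c_k(t)|\leq v(0)c_k(0)+Ct$, ruling out finite-time blowup of $vc_k$. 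A lower bound on $v(t)$ follows because the term $-A/v$ in $\pi_{\rm w}$ would make $dv/dt=-\zeta\pi_{\rm w}$ large and positive if $v$ ever became small while $\sum c_k$ remains bounded on a finite interval.

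For the long-time behavior, I would exploit that $\wt{G}$ is monotone non-increasing, so $\wt{G}(t)$ has a limit in $[-\infty,\wt{G}(0)]$. Suppose first that this limit is finite. Then $\int_0^\infty J\,dt<\infty$, and combined with the monotonicity of $J$ from Lemma \ref{propJ}, this forces $J(t)\to 0$, hence $\bm{\mu}+\mb{q}\to 0$ and $\pi_{\rm w}\to 0$. The first condition yields $c_k(t)-c_k^{\rm e}\exp(-q_k-z_k\phi(t))\to 0$, and combined with the electroneutrality equation \eqref{linsysEN} determines $\phi(t)$ asymptotically via $-df/d\phi+zA/v=0$; this converges to $\phi_{\min}$ precisely when $v(t)\to\infty$. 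The condition $\pi_{\rm w}\to 0$ then reduces in the limit to $f_{\min}+\lim(A/v)=0$, which given $f_{\min}\geq 0$ forces $v\to\infty$ and $f_{\min}=0$, the situation of case (b). Suppose instead that $\wt{G}(t)\to-\infty$. Since $\wt{G}$ is bounded below on every bounded subset of $\mathcal{S}$ that stays away from $v=0$, the trajectory must exit every such subset; combined with the uniform positive lower bounds on $c_k$ and $v$ established in the global existence step, the only admissible escape direction is $v(t)\to\infty$.

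Case (a) is now immediate: $f_{\min}>0$ rules out the first alternative, so $\wt{G}(t)\to-\infty$ and hence $v(t)\to\infty$. In case (b), the first alternative gives both conclusions of \eqref{vtoinfty2ndcase} directly, while the second only gives $v(t)\to\infty$ and still requires identifying the limit of $c_k(t)$. Pinning down this limit is the main obstacle I anticipate: without $\wt{G}$ bounded below, the naive Lyapunov argument does not rule out oscillation of $c_k$. My plan is to pass to the ``cell at infinity'' dynamics, where $A/v\to 0$ and the electroneutrality constraint reduces to $\sum z_kc_k=0$, show that $J$ restricted to this limiting dynamics still acts as a Lyapunov function whose unique zero is the target concentration $c_k^{\rm e}\exp(-q_k-z_k\phi_{\min})$, and then close the argument by a LaSalle-type invariance principle on this reduced system combined with the explicit dissipation formula \eqref{LyapJ}.
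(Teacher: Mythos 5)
Your overall strategy is the same as the paper's: use $\wt{G}$ and $J$ as Lyapunov functions and pass to a rescaled ``cell at infinity'' system. Your treatment of part (a) is essentially correct once two points are tightened. First, to invoke Lemma \ref{propJ}(2) and bound $c_k(t)$ away from $0$ you need a \emph{uniform-in-time upper bound} on $c_k(t)$ beforehand: the level set $\{J_c\leq M_J\}$ by itself is not bounded away from $\partial\mathbb{R}_+^N$ (one coordinate can go to $0$ while another goes to $\infty$), and your linear-in-$t$ bound on $vc_k$ does not supply this. The bound comes instead from the sublevel set $\{\wt{G}<M\}$ written in the variables $(\mb{c},w)$, $w=1/v$, which also gives the uniform lower bound on $v$ that you need for the long-time argument, not just on finite intervals. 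Second, ``the trajectory exits every bounded set'' only yields $\limsup_{t\to\infty}v(t)=\infty$; to get the full limit use that $\wt{G}$ is monotone together with $\wt{G}\geq -mv+A(\ln(A/v)-1)$, whose right-hand side is bounded below on each set $\{v\leq V\}$ and tends to $-\infty$ only as $v\to\infty$. With these repairs, your dichotomy on $\lim\wt{G}$ and the computation showing $J\to 0$ forces $\bm{\mu}+\mb{q}\to 0$, $\pi_{\rm w}\to0$, hence $R(v)\to0$ --- impossible when $f_{\rm min}>0$ --- is a legitimate repackaging of the paper's proof that $\inf_{\mathcal{O}}J>0$.

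The genuine gap is in part (b). Note first that your ``first alternative'' ($\lim\wt{G}$ finite) cannot be relied upon: since $A\ln(A/v)\to-\infty$ when $v\to\infty$, one expects $\wt{G}\to-\infty$ here as well, so the entire burden falls on the LaSalle step, which is precisely where the difficulty sits and which your sketch leaves unspecified. A LaSalle/invariance argument on the rescaled system \eqref{linsysmod} requires knowing that the largest invariant subset of $\{\Psi=0\}$ inside $\{w=0,\ \sum_k z_kc_k=0\}$ is the single point $(\mb{c}^*,0)$. But the zero set of \eqref{ststatecond} is only shown to reduce to $\{\mb{c}^*\}$ \emph{locally}, via the computation that the Jacobian $B^*$ has rank $N-1$ with kernel spanned by $\mb{z}_c$, which is transverse to the hyperplane \eqref{ststateEN}; the paper explicitly remarks after the proof that \emph{global} uniqueness of solutions of \eqref{ststatecond}--\eqref{ststateEN} is unclear. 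Consequently you must first prove that the orbit enters an arbitrarily small neighborhood of $(\mb{c}^*,0)$ (the paper's Step 5: away from such a neighborhood $\rho(\mb{c})>0$ on the constraint set, so $\wt{G}$ is bounded below there, while $J\geq\underline{J}_\eta>0$ forces $\wt{G}$ to decrease linearly --- a contradiction), and only then run the local Lyapunov argument with $J$ and its dissipation $\Psi$. Finally, since the original vector field \eqref{linsysincw} is degenerate at $w=0$, invariance of the $\omega$-limit set holds only for the rescaled flow, and you need the explicit time reparametrization $t\mapsto\tau$ (the paper's Step 6) to transfer convergence back to the original time variable. Without these three ingredients the plan for identifying $\lim_{t\to\infty}c_k(t)$ does not close.
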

\begin{proof}
We shall work with the variables $\mb{c}$ and $w=1/v$.
In these variables, \eqref{linsys} can be written as:
\begin{subequations}\label{linsysincw}
\begin{align}
\D{\mb{c}}{t}&=w(-\wh{L}(\bm{\gamma}+\mb{q})+\zeta \pi_{\rm w}\mb{c}),
\label{linsysincw1}\\
0&=\sum_{k=1}^N z_kc_k+zAw=\sum_{k=1}^N z_kc_k^{\rm e},\label{linsysincwEN}\\
\D{w}{t}&=w^2\zeta \pi_{\rm w}.\label{linsysincw2}
\end{align}
\end{subequations}
The solutions are defined on the set:
\begin{equation}\label{defT}
\mathcal{T}=\lbrace (\mb{c},w)\in \mathbb{R}_+^{N+1}| 
\sum_{k=1}^N z_kc_k+zAw=0 \rbrace.
\end{equation}
Note that $\mathcal{T}$ is just 
the set $\mathcal{S}$ of \eqref{icS} written in 
the $(\mb{c},w)$ coordinates. 
Take any initial data $(\mb{c}^0,w^0)\in \mathcal{T}$
and let $(\mb{c}(t),w(t))$ be the solution to \eqref{linsysincw}
starting from this point. Showing that $v(t)\to \infty$ is 
equivalent to showing that $w(t)\to 0$.
We divide the proof into several steps.

{\em Step 1}:
View $\wt{G}$ defined in \eqref{GJdef} as a function of $(\mb{c},w)$.
Consider the set:
\begin{equation}
\mathcal{A}_M=\lbrace (\mb{c},w)\in \mathbb{R}_+^{N+1}| 
\sum_{k=1}^N z_kc_k +zAw=0, \;\wt{G}(\mb{c},w)< M \rbrace.
\end{equation}
This is the same set as \eqref{Ac} except that we use the function 
$\wt{G}(\mb{c},w)$ instead 
of $\wh{G}(\mb{c},v)$.
We prove that $\mathcal{A}_M$ is a bounded set.

For any point in $\mathcal{A}_M$, we have:
\begin{equation}\label{Ckvcke}
\sum_{k=1}^N \paren{c_k\paren{\ln\paren{\frac{c_k}{c_k^{\rm e}}}-1+q_k}+c_k^{\rm e}}
<Mw-Aw\paren{\ln(Aw)-1}.
\end{equation}
Since the left hand side is bounded 
from below, there is some positive constant $m$, 
independent of $M$, such that:
\begin{equation}\label{mMcond}
-m<Mw-Aw(\ln(Aw)-1).
\end{equation}
Let:
\begin{equation}
g(w)=-\frac{m}{w}+A(\ln(Aw)-1).
\end{equation}
The function $g(w)$ is a monotone increasing function in $w$ such that 
$g(w)\to -\infty$ as $w\to 0$ and $g(w)\to \infty$ as $w\to \infty$. 
Let $w_+(M)=g^{-1}(M)$. Given \eqref{mMcond}, we have:
\begin{equation}\label{v_-M}
0<w<w_+(M) \text{ for any } (\mb{c},w)\in \mathcal{A}_M.
\end{equation}
Since $w$ is bounded between $0$ and $w_+(M)$, we see from \eqref{Ckvcke}
that $c_k$ must also be bounded in $\mathcal{A}_M$:
\begin{equation}\label{ckBMbound}
0<c_k<c_+(M), \; k=1,\cdots N \text{ for any } (\mb{c},w)\in \mathcal{A}_M.
\end{equation}

{\em Step 2}:
We prove that the solution $(\mb{c}(t),w(t))$ is defined for all positive time.
Choose $M=M_0$ so that $M_0>\wt{G}(\mb{c}_0,w_0)$. 
Suppose that the solution exists only up to $t<T_0, T_0<\infty$.
Since $\wt{G}$ is monotone non-increasing, we have
$(\mb{c}(t),w(t))\in \mathcal{A}_{M_0}, t<T_0$. 
Since $\mathcal{A}_{M_0}$ is bounded by \eqref{v_-M} and \eqref{ckBMbound},
there is a sequence of times 
$t_1<t_2\cdots \to T_0$ such that 
$\mb{x}(t_n)=(\mb{c}(t_n),w(t_n))\to 
\mb{x}^{\rm b}=(\mb{c}^{\rm b},w^{\rm b})\in \partial \mathcal{A}_{M_0}$
as $n\to \infty$. 
The limit point $\mb{x}^{\rm b}$ cannot be in $\mathbb{R}_+^{N+1}$ since, 
if so, the solution can be continued beyond time $T_0$.
We also see that $c_k, k=1,\cdots,N$ must stay away from $0$
by an argument using Lemma \ref{propJ}
similarly to the proof of Theorem \ref{globalstab}.
This implies that $\mb{c}^{\rm b}\in \mathbb{R}_+^N$
and $w^{\rm b}=0$. By \eqref{linsys2}, we have:
\begin{equation}\label{bound1/w}
\D{}{t}\paren{\frac{1}{w}}=\zeta\paren{\sum_{k=1}^N \paren{c_k-c_k^{\rm e}}+Aw}.
\end{equation}
Since the right hand side of the above is bounded in $\mathcal{A}_{M_0}$
by \eqref{v_-M} and \eqref{ckBMbound}, 
$1/w(t)$ remains finite in finite time. Thus, $w(t_n)\to 0$ is impossible 
as $t_n \to T_0<\infty$. 
We have a contradiction.

{\em Step 3}: 
Let $\mathcal{O}$ be the orbit:
\begin{equation}
\mathcal{O}=\lbrace (\mb{c}(t),w(t))\in \mathcal{T}, t\geq 0\rbrace.
\end{equation}
We would like to see whether 
\begin{equation}
\underline{J}=\inf_{(\mb{c},w)\in \mathcal{O}} J(\mb{c},w)
\end{equation}
is positive. 
Recall that $J=0$ in $\mathcal{T}$
if and only if the point $(\mb{c},w)$ is a steady state of \eqref{linsys}. 
Given our assumption that \eqref{fphimin} is not satisfied, by 
Proposition \ref{linextcond}, a steady state does not exist. 
Therefore, $J>0$ in $\mathcal{O}\subset \mathcal{T}$. 
Thus, if $\underline{J}=0$, 
since $\mathcal{O}$ is a bounded set,
there is a sequence of points $\mb{x}^n\in \mathcal{O}, n=1,2,\cdots$
that approaches a point $\mb{x}^\infty=(\mb{c}^\infty,w^\infty)
\in \overline{\mathcal{T}}$
such that $J(\mb{x}^n)\to 0$ as $n\to \infty$. The limit point $\mb{x}^\infty$
cannot be in $\mathcal{T}\in \mathbb{R}_+^{N+1}$ since $J>0$ there. Since 
$\mb{c}(t)$ stays away from $\partial \mathbb{R}_+^{N}$,
$\mb{c}^\infty\notin\partial\mathbb{R}_+^N$. Thus,
$\mb{c}^\infty\in \mathbb{R}_+^N$ and $w^\infty=0$. 
Since the function $J$ is continuous up to points 
$(\mb{c},w)=(\mb{c}^\infty,0), \mb{c}^\infty\in \mathbb{R}_+^N$, 
we examine the positivity of $J$ on the set:
\begin{equation}
\mathcal{R}=
\lbrace\mb{x}=(\mb{c},w)\in \partial{\mathcal{T}}
|\mb{c}\in \mathbb{R}_+^N,\; w=0
\rbrace. 
\end{equation}
On $\mathcal{R}$, $J$ can be written as:
\begin{equation}\label{JPsipi}
J(\mb{c},w=0)=J_c(\mb{c})+\zeta(\pi_{\rm w}^0(\mb{c}))^2, \; 
\pi_{\rm w}^0(\mb{c})=\sum_{k=1}^N \paren{c_k^{\rm e}-c_k}.
\end{equation}
We see that $J=0$ if and only if $J_c(\mb{c})=0$ and $\pi_{\rm w}^0=0$.
It is easily seen that $J_c(\mb{c})=0$ in $\mathcal{R}$
if and only if $\mb{c}=\mb{c}^*$ where $\mb{c}^*$ is given by:
\begin{equation}\label{ckqkzkphimin}
\mb{c}^*=(c_1^*,\cdots,c_N^*)^T,\;
c_k^*\equiv c_k^{\rm e}\exp(-q_k-z_k\phi_{\rm min}), \; k=1,\cdots N, 
\end{equation} 
where $\phi_{\rm min}$ is as defined in the statement of Proposition 
\ref{linextcond}. Let us evaluate 
$\pi_{\rm w}^0$ at this point. Substituting \eqref{ckqkzkphimin} into 
\eqref{JPsipi} and recalling the definition of $f_{\rm min}$ in \eqref{fphimin},
\begin{equation}
\pi_{\rm w}^0(\mb{c}^*)=-f_{\rm min}(\mb{q},\mb{c}^{\rm e},\mb{z}).
\end{equation}
Therefore, we have:
\begin{equation}\label{JM>0}
 \underline{J}>0 \text{ if } f_{\rm min}>0.
\end{equation}
When $f_{\rm min}=0$, we have the following.
Let $\mathcal{B}_\eta$ be the open ball 
of radius $\eta>0$ centered at $(\mb{c}^*,0)$.
Then,
\begin{equation}\label{KJ}
\underline{J}_\eta \equiv \inf_{(\mb{c},w)\in \mathcal{O}\backslash \mathcal{B}_\eta}
J(\mb{c},w)>0.
\end{equation}

{\em Step 4}: We prove our claim when $f_{\rm min}>0$.
Given that $(\mb{c}(t),w(t))\in \mathcal{O}$, by Lemma \ref{FEq} 
we have:
\begin{equation}
\wt{G}(\mb{c}(t),w(t))<M_0-\underline{J}t.
\end{equation}
By \eqref{v_-M}, we have:
\begin{equation}\label{wsandwich}
0<w(t)<w^+(M_0-\underline{J}t).
\end{equation}
Note that $\underline{J}>0$ by \eqref{JM>0}.
Since $w^+(M)\to 0$ as $M\to -\infty$, we see from \eqref{wsandwich}
that $w(t) \to 0$ as $t \to \infty$. 
This proves \eqref{vtoinftyin1stcase}.

{\em Step 5}: 
In the rest of the proof, we study the $f_{\rm min}=0$ case. 
As an initial step, we prove the following. 
For any $\eta>0$, there is a time $t_\eta\geq 0$
such that the point 
$(\mb{c}(t_\eta),w(t_\eta))$ is in $\mathcal{B}_\eta$.
We prove this by contradiction. Suppose otherwise.
Then, there is an $\eta>0$ such that 
$\mathcal{B}_\eta\cap \mathcal{O}$ is empty. 

We first show that $\wt{G}$ is bounded from below in 
$\mathcal{O}$ by a constant $M_\eta$.
Suppose otherwise. Then, 
there is a sequence of points $\mb{x}^n=(\mb{c}^n,w^n)\in 
\mathcal{O}, n=1,2,\cdots$ converging 
to $\mb{x}^\infty=(\mb{c}^\infty,w^\infty)\in 
\overline{\mathcal{O}}$ such that 
$\wt{G}(\mb{c}^n,w^n)\to -\infty$. Since $\wt{G}(\mb{c},w)$ is a continuous 
function for $\mb{c}\in \overline{\mathbb{R}_+}^N, w>0$, the
only possibility is that $w^\infty=0$.
Write $\wt{G}$ as:
\begin{equation}\label{writeGrho}
\begin{split}
\wt{G}(\mb{c},w)&=\frac{1}{w}\rho(\mb{c})+A(\ln(Aw)-1), \\
\rho(\mb{c})&=\sum_{k=1}^N
\paren{c_k\paren{\ln \paren{\frac{c_k}{c_k^{\rm e}}}-1+q_k}+c_k^{\rm e}}.
\end{split}
\end{equation}
It suffices to show that $\rho(\mb{c}^\infty)>0$. If this is true, we see 
from \eqref{writeGrho} that $\wt{G}(\mb{c}^n,w^n)\to \infty$,
contradicting our assumption that $\wt{G}(\mb{c}^n,w^n)\to -\infty$.
It is easily seen by a calculation identical to the proof of Lemma 
\ref{propG} that the unique minimizer of $\rho(\mb{c})$ under 
the constraint 
\begin{equation}\label{zkckconstraint}
\sum_{k=1}^N z_kc_k=0.
\end{equation} 
is $\mb{c}=\mb{c}^*$, at 
which point $\rho(\mb{c}^*)=0$. Given that 
$(\mb{c}^\infty,0)\notin \mathcal{B}_\eta$, 
we see that $\mb{c}^\infty\neq \mb{c}^*$,
and thus $\rho(\mb{c}^\infty)>0$.

By Lemma \ref{FEq} and using \eqref{KJ}, we have:
\begin{equation}
\wt{G}(\mb{c}(t),w(t))\leq M_0-\underline{J}_\eta t 
\end{equation}
so long as $(\mb{c}(t),w(t))\notin \mathcal{B}_\eta$. 
Note that $\underline{J}_\eta>0$ by \eqref{KJ}.
Thus, if $t>(M_0-M_\eta)/\underline{J}_\eta$, then
$\wt{G}<M_\eta$, which contradicts our result that 
$\wt{G}$ must be greater than $M_\eta$ on $\mathcal{O}$.

{\em Step 6}:
We would like to show that there is a positive number $\eta>0$
such that any solution with initial data in $\mathcal{B}_\eta\cap \mathcal{T}$ 
will converge to $(\mb{c}^*,0)$ as $t\to \infty$. 
If this is true, we can combine this with the result of 
Step 5 to immediately conclude that 
that all solutions of \eqref{linsysincw} converge to 
$(\mb{c},w)=(\mb{c}^*,0)$ 
as $t\to \infty$. This would prove \eqref{vtoinfty2ndcase}.

The vector field defined by the 
right hand sides of \eqref{linsysincw1} and \eqref{linsysincw2} 
is degenerate at $w=0$. Rescaling the vector field
by a positive scalar factor 
does not alter the solution orbits, so we shall study the 
behavior of an appropriately rescaled system 
\citep{chicone1999,benson2010general}.  
Rescale \eqref{linsysincw1} and \eqref{linsysincw2} by 
a factor of $1/w$. This removes the degeneracy at $w=0$:
\begin{subequations}\label{linsysmod}
\begin{align}
\D{\mb{c}}{\tau}&=-\wh{L}(\bm{\gamma}+\mb{q})+\zeta \pi_{\rm w}\mb{c},\\
\D{w}{\tau}&=w\zeta \pi_{\rm w}.
\end{align}
\end{subequations}
We have taken the time parameter to be $\tau$ to distinguish the 
solutions of this system with those of \eqref{linsysincw}.
We consider \eqref{linsysmod} on the set:
\begin{equation}
\mathcal{T}'=\lbrace (\mb{c},w)\in \mathbb{R}_+^{N}\times 
\overline{\mathbb{R}_+}| 
\sum_{k=1}^N z_kc_k+zAw=0 \rbrace.
\end{equation}
The difference between $\mathcal{T}$ and $\mathcal{T}'$ is whether or not
$w$ is allowed to be equal to $0$.

If we can show that any solution to \eqref{linsysmod} with 
initial data in $\mathcal{B}_\eta\cap \mathcal{T}$ 
converges to $(\mb{c}^*,0)$ as $\tau\to \infty$, the 
same is true for \eqref{linsysincw} as $t\to \infty$. 
This can be seen as follows.
Let $(\mb{c}(t),w(t))$ be the solution to \eqref{linsysincw} with 
initial data $(\mb{c}^0,w^0)\in \mathcal{B}_\eta\cap \mathcal{T}$
and $(\wt{\mb{c}}(\tau),\wt{w}(\tau))$ be the solution to  \eqref{linsysmod}
with the same initial conditions. Define the function $\xi(t)$ with:
\begin{equation}\label{xidef}
t=\int_0^{\xi(t)}\frac{1}{\wt{w}(\tau)}d\tau.
\end{equation}
This function is well-defined since $\wt{w}(\tau)>0$. This positivity 
is a simple consequence of the backward uniqueness of solutions. 
The expressions
$\wt{\mb{c}}(\xi(t))$ and $\wt{w}(\xi(t))$ satisfy \eqref{linsysincw}, and 
thus, by uniqueness of solutions:
\begin{equation}
(\mb{c}(t),w(t))=(\wt{\mb{c}}(\xi(t)),\wt{w}(\xi(t))).
\end{equation}
By \eqref{xidef} and the fact 
that $\wt{w}(\tau) \to 0$ as $\tau \to \infty$, 
we see that $\xi \to \infty$ whenever $t\to \infty$. Therefore, 
\begin{equation}
\lim_{t\to \infty}(\mb{c}(t),w(t))=
\lim_{\tau \to \infty} (\wt{\mb{c}}(\tau),\wt{w}(\tau)).
\end{equation}

{\em Step 7}:
We now show that any solution of \eqref{linsysmod} with initial 
conditions in $\mathcal{B}_\eta\cap \mathcal{T}'$ converges to $(\mb{c}^*,0)$
if $\eta$ is taken small enough. 
This will conclude the proof. 

Let $\mb{c}(\tau),w(\tau)$ be a solution to \eqref{linsysmod}. 
Then, by Lemma \ref{propJ}, we have:
\begin{equation}\label{LyapJmod}
\D{}{\tau}J(\mb{c},w)=-2\sum_{k=1}^N
\paren{\frac{\rho_k}{\sqrt{c_k}}-\zeta \pi_{\rm w}\sqrt{c_k}}^2
-2Aw(\zeta\pi_{\rm w})^2\equiv \Psi(\mb{c},w).
\end{equation}
Since we have removed one factor of $w$ in \eqref{linsysmod} 
compared with \eqref{linsysincw}, one factor of $w$ is removed 
accordingly from the right hand side of \eqref{LyapJ}.

First, we take $r>0$ small enough so that 
$J>0$ and $\Psi>0$ for
$\overline{\mathcal{B}_{r}\cap \mathcal{T}'}\backslash\{(\mb{c}^*,0)\}$.
By \eqref{KJ}, this is possible for $J$. 
Consider $\Psi$. If $w>0$, 
the condition $\Psi=0$ if and only if $\rho_k=\pi_{\rm w}=0$. This 
is equivalent to the condition for \eqref{linsysincw}
to have a steady state in $\mathcal{T}$. By Proposition \ref{linextcond},
such a point does not exist if $f_{\rm min}=0$. For $w=0$, 
the $\Psi(\mb{c},0)=0$ if and only if:
\begin{align}\label{ststatecond}
\wh{L}(\bm{\gamma}+\mb{q})-\zeta \pi_{\rm w}^0\mb{c}&=0,\\
\sum_{k=1}^N z_kc_k&=0,\label{ststateEN}
\end{align}
where $\pi_{\rm w}^0$ was given in \eqref{JPsipi}.
It is clear that the point $\mb{c}=\mb{c}^*$ satisfies the above.
We show that $\mb{c}=\mb{c}^*$ is the only point that 
satisfies both \eqref{ststatecond} and \eqref{ststateEN}
in a neighborhood of $\mb{c}=\mb{c}^*$ in $\mathbb{R}^N$.
We use the implicit 
function theorem. Compute 
the Jacobian matrix of the left hand side with respect to 
$\mb{c}$ and evaluate this at $\mb{c}^*$:
\begin{equation}
B^*_{kl}=\wh{L}_{kl}\frac{1}{c^*_l}+\zeta c^*_k.
\end{equation}
Here, $B^*_{kl}$ is the $kl$ entry of the $N\times N$ Jacobian matrix $B^*$.
The rank of $B^*$ is the same as the rank of
$\wt{B}^*$ whose $kl$ entry is given by:
\begin{equation}
\wt{B}^*_{kl}=\wh{L}_{kl}+\zeta c^*_kc^*_l.
\end{equation}
Since $\wh{L}$ is symmetric positive semidefinite with rank $N-1$ (see 
proof of Proposition \ref{linmain}) and $\zeta>0$, $\wt{B}^*$, 
and thus $B^*$ is at least rank $N-1$. It is easily checked 
that $\mb{z}_c=(z_1c_1^*,\cdots,z_Nc_N^*)^T$ 
is an eigenvector of $B^*$ with $0$ eigenvalue.
Therefore, all the 
points that satisfy \eqref{ststatecond} near $\mb{c}=\mb{c}^*$
lie on a one-dimensional manifold in $\mathbb{R}^N$ that is 
tangent to $\mb{z}_c$ at $\mb{c}=\mb{c}^*$. 
Since $\ip{\mb{z}}{\mb{z}_c}_{\mathbb{R}^N}\neq 0$,  
the only common point between this one dimensional 
manifold and the hyperplane \eqref{ststateEN} is $\mb{c}=\mb{c}^*$.

Define the set:
\begin{equation}
\mathcal{D}_\delta=\lbrace (\mb{c},v)\in \mathcal{B}_r\cap \mathcal{T}'| 
J(\mb{c},v)<\delta \rbrace.
\end{equation}
Take $\delta=\delta_0$ small enough so that 
$\overline{\mathcal{D}_{\delta_0}}\subset \mathcal{B}_r$. It is clear by 
\eqref{LyapJmod} that any solution in $\mathcal{D}_{\delta_0}$ stays within 
this set. Choose $\eta$ small enough so that 
$\mathcal{B}_\eta\cap\mathcal{T}' 
\subset \mathcal{D}_{\delta_0}$. This is clearly possible since 
$J$ is a non-negative continuous function on 
$\mathcal{B}_r\cap\mathcal{T}'$ which is $0$
only at $(\mb{c},w)=(\mb{c}^*,0)$. Take any $\epsilon<\eta$. 
We may choose a $\delta_1>0$ so that $\mathcal{D}_{\delta_1}\subset \mathcal{B}_\epsilon$.
Given that $\Psi>0$ on 
$\overline{\mathcal{D}_{\delta_0}\backslash\mathcal{D}_{\delta_1}}$, 
any solution in $\mathcal{B}_\eta\subset \mathcal{D}_{\delta_0}$ 
will be in $\mathcal{B}_\epsilon\supset \mathcal{D}_{\delta_1}$ in finite time.
\end{proof}
In the case of $f_{\rm min}>0$, we do not have a statement on the 
limiting value of $\mb{c}(t)$ as $t\to \infty$. 
The limit point 
$(\mb{c},w)=(\mb{c}^*,0)$ may well exist and 
the limiting value $\mb{c}^*$ should 
satisfy \eqref{ststatecond}. This follows simply by setting 
the right hand side of \eqref{linsysmod} to $0$. If there 
is only one such point, it should be possible to show, with the aid of the 
Lyapunov function $J$, that this is the single limit 
point to which all solutions converge. This uniqueness, 
however, is not clear.

Even though $v(t)\to \infty$ as $t\to \infty$
regardless of whether $f_{\rm min}>0$ or $f_{\rm min}=0$, the rate at which 
$v(t)$ grows is different. When $f_{\rm min}>0$, 
we see, by combining \eqref{bound1/w} and \eqref{wsandwich}
and the definition of $w_+(M)$ that $v(t)$
grows at most linearly with time and faster than any power 
$t^\alpha, \alpha<1$. When $f_{\rm min}=0$, we expect the 
growth of cell volume to scale like $t^{1/2}$, as can be seen by taking the 
special case $\mb{p}=0$ with initial conditions $\mb{c}=\mb{c}^{\rm e}$.

\subsection{Epithelial Models}\label{epimod}

We briefly consider a simple epithelial model. Suppose we have 
a single layer of epithelial cells which separate the serosal and mucosal 
sides. The concentrations of electrolytes in the serosal and mucosal 
solutions are assumed constant. Let these concentrations be denoted 
$c_k^{\rm s}$ and $c_k^{\rm m}$ respectively. The voltages 
in the mucosal and serosal sides are also fixed at $\phi^{\rm s}$
and $\phi^{\rm m}$. We can write down the following model 
for electrolyte and water balance for an epithelial cell in this layer:
\begin{subequations}\label{epi}
\begin{align}
\D{(v\mb{c})}{t}&
=-L_{\rm m}\bm{\mu}^{\rm m}-L_{\rm s}\bm{\mu}^{\rm s}
-\mb{p}^{\rm m}-\mb{p}^{\rm s},\\
\D{v}{t}&=-\zeta_{\rm m}\pi_{\rm w}^{\rm m}-\zeta_{\rm s}\pi_{\rm w}^{\rm s},\\
\sum_{k=1}^N z_kc_k+\frac{zA}{v}&=\sum_{k=1}^N z_kc_k^{\rm m}
=\sum_{k=1}^N z_kc_k^{\rm s}=0.
\end{align}
\end{subequations}
The definition of the cellular variables $\mb{c}$ 
and $v$ are the same as before.
In the above, $L_{\rm m,s}$ are symmetric positive 
semi-definite matrices and $\mb{p}^{\rm m,s}$ 
are the vector of active currents residing on the mucosal 
and serosal membrane respectively which we assume constant. 
The chemical potentials 
$\bm{\mu}^{\rm m,s}=(\mu_1^{\rm m,s},\cdots,\mu_N^{\rm m,s})^T$ 
are given by:
\begin{equation}
\mu_k^{\rm m,s}=\ln \paren{\frac{c_k}{c_k^{\rm m,s}}}+z_k(\phi-\phi^{\rm m,s}),
\end{equation}
where $\phi$ is the electrostatic potential inside the cell.
The osmotic pressure $\pi_{\rm w}^{\rm m,s}$ is given by:
\begin{equation}
\pi_{\rm w}^{\rm m,s}=\sum_{k=1}^N c_k^{\rm m,s}
-\paren{\sum_{k=1}^N c_k+\frac{A}{v}}
\end{equation}
and the hydraulic permeabilities $\zeta^{\rm m,s}$ are non-negative constants.

The above problem is in fact mathematically identical to 
system \eqref{linsys}. 
Let:
\begin{equation}\label{zetaL}
\zeta=\zeta_{\rm m}+\zeta_{\rm s}, \; L=L_{\rm m}+L_{\rm s},
\end{equation} 
and suppose that $\zeta>0$ and $L$ is symmetric positive definite.
Define:
\begin{equation}\label{ckepbeta}
\begin{split}
c_k^{\rm e}&=\zeta^{-1}\paren{\zeta_{\rm m}c_k^{\rm m}+\zeta_{\rm s}c_k^{\rm s}},\\
\mb{p}&=\mb{p}^{\rm m}+\mb{p}^{\rm s}-L_{\rm m}\bm{\beta}^{\rm m}-L_{\rm m}\bm{\beta}^{\rm s},\\
\bm{\beta}^{\rm m,s}&=(\beta_1^{\rm m,s},\cdots,\beta_N^{\rm m,s})^T, \; 
\beta^{\rm m,s}_k=\ln \paren{\frac{c_k^{\rm m,s}}{c_k^{\rm e}}}+z_k\phi^{\rm m,s}.
\end{split}
\end{equation}
Then, the triple $(\mb{c},v,\phi)$ satisfies \eqref{epi} if and only if 
it satisfies 
\eqref{linsys} with 
$\zeta,L,\mb{c}^{\rm e}=(c_1^{\rm e},\cdots,c_N^{\rm e})^T$ 
and $\mb{p}$ prescribed as in \eqref{zetaL}
and \eqref{ckepbeta}.
We thus have the following result.
\begin{theorem}
Consider system \eqref{epi}.
Suppose $L_{\rm m}+L_{\rm s}$ is symmetric positive definite and 
$\zeta_{\rm m}+\zeta_{\rm s}>0$. Define $f_{\rm min}$ as in 
\eqref{fphimin}
in which $\mb{q}=(q_1,\cdots,q_N)=L^{-1}\mb{p}$ 
and $c_k^{\rm e}, L$ and $\mb{p}$ are prescribed as in 
\eqref{zetaL} and \eqref{ckepbeta}. 
If $f_{\rm min}<0$, the conclusions of Theorem \ref{globalstab} hold.
If $f_{\rm min}\geq 0$, the conclusions of Theorem \ref{burst} hold.
\end{theorem}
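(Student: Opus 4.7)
The plan is to show that the epithelial system \eqref{epi} is, after an algebraic rearrangement, a special instance of \eqref{linsys} with the effective parameters $\zeta, L, \mb{c}^{\rm e}, \mb{p}$ given by \eqref{zetaL} and \eqref{ckepbeta}. Once that identification is established, Theorem \ref{globalstab} and Theorem \ref{burst} apply verbatim to yield the conclusion. So this is essentially a verification, and the work lies in collecting the right constants.

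The first step is to re-express the two chemical potentials in terms of the single chemical potential $\mu_k = \ln(c_k/c_k^{\rm e}) + z_k\phi$ attached to the averaged bath concentrations $c_k^{\rm e}$ defined in \eqref{ckepbeta}. A direct manipulation gives
\begin{equation}
\mu_k^{\rm m,s} = \mu_k - \beta_k^{\rm m,s},
\end{equation}
with $\beta_k^{\rm m,s}$ as in \eqref{ckepbeta}. Substituting into the ion balance in \eqref{epi} and using $L = L_{\rm m}+L_{\rm s}$ yields
\begin{equation}
\D{(v\mb{c})}{t} = -L\bm{\mu} - \paren{\mb{p}^{\rm m}+\mb{p}^{\rm s} - L_{\rm m}\bm{\beta}^{\rm m} - L_{\rm s}\bm{\beta}^{\rm s}} = -L\bm{\mu} - \mb{p},
\end{equation}
which is precisely \eqref{linsys1} (note that the $-L_{\rm m}\bm{\beta}^{\rm s}$ in \eqref{ckepbeta} should read $-L_{\rm s}\bm{\beta}^{\rm s}$; this is a minor typo).

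Next, I would check the water equation. Expanding $\pi_{\rm w}^{\rm m,s}$ and using the definition of $c_k^{\rm e}$ in \eqref{ckepbeta},
\begin{equation}
\zeta_{\rm m}\pi_{\rm w}^{\rm m} + \zeta_{\rm s}\pi_{\rm w}^{\rm s} = \sum_{k=1}^N(\zeta_{\rm m}c_k^{\rm m}+\zeta_{\rm s}c_k^{\rm s}) - \zeta\paren{\sum_{k=1}^N c_k + \frac{A}{v}} = \zeta\pi_{\rm w},
\end{equation}
so the water flow reduces to \eqref{linsys2}. For the electroneutrality condition on the effective bath, a convex combination of the mucosal and serosal electroneutrality relations gives $\sum_k z_k c_k^{\rm e}=0$, matching \eqref{linsysEN}. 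The hypothesis $L$ symmetric positive definite and $\zeta>0$ is exactly what is assumed in Section \ref{lin}.

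Having verified that the triple $(\mb{c},v,\phi)$ solves \eqref{epi} if and only if it solves \eqref{linsys} with the prescribed parameters, the dichotomy from Theorems \ref{globalstab} and \ref{burst} transfers immediately: when $f_{\rm min}<0$ there is a unique globally asymptotically stable steady state with diagonalizable linearization having real negative eigenvalues, and when $f_{\rm min}\geq 0$ the cell volume tends to infinity. There is no genuine obstacle here; the only point requiring a modicum of care is keeping track of the shift $\bm{\beta}^{\rm m,s}$ that absorbs the extracellular potentials and concentration ratios into the effective active flux $\mb{p}$.
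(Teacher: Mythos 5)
Your proof is correct and takes essentially the same route as the paper, which establishes the theorem by exhibiting exactly this algebraic identification of \eqref{epi} with \eqref{linsys} via the effective parameters \eqref{zetaL} and \eqref{ckepbeta} and then invoking Theorems \ref{globalstab} and \ref{burst}. Your observation that the term $-L_{\rm m}\bm{\beta}^{\rm s}$ in \eqref{ckepbeta} should read $-L_{\rm s}\bm{\beta}^{\rm s}$ is also right; it is a typo in the paper, and your verification of the ion, water, and electroneutrality equations is the intended argument.
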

Note that this epithelial model also enjoys the robustness property 
described after the end of the proof of Theorem \ref{globalstab}. 
We may argue that this is advantageous for an epithelial cell,
which must withstand large changes in extracellular ionic concentrations.

\section{Results in the General Case}\label{general}

In the previous Section, we assumed that the passive transmembrane 
ionic flux $j_k$ is linear in $\bm{\mu}$. 
In this Section, we establish results that are valid when we
only assume conditions \eqref{noflux}, \eqref{jwinc} and \eqref{posdef}
for $j_k$ and $j_{\rm w}$. In particular, this will apply to the 
case when the Goldman equation \eqref{jkGHK} is used for $j_k$.
We also relax the assumption that the pump rates $p_k$ be constant.
We consider the system:
\begin{subequations}\label{gensys}
\begin{align}
\D{}{t}(v\mb{c})&=-\mb{j}(\phi,\bm{\mu})-\alpha \mb{p}(\phi,\bm{\mu}),
\label{ionsgen}\\
0&=\sum_{k=1}^N z_kc_k+z\frac{A}{v}=\sum_{k=1}^N z_kc_k^{\rm e},
\label{ENgen}\\
\D{v}{t}&=-j_{\rm w}(\pi_{\rm w}).\label{volgen}
\end{align}
\end{subequations} 
The extracellular ionic concentrations $c_k^{\rm e}, k=1,\cdots,N$
and the amount of impermeable organic solute $A$ are positive.
We assume that $\mb{j},\mb{p}$ and $j_{\rm w}$ are $C^1$ functions 
of their arguments.
The only difference between this and system \eqref{system} is that 
we replaced $\mb{p}$ (or $p_k$) in \eqref{ionsdless} with 
$\alpha \mb{p}$ (or $\alpha p_k$) 
where $\alpha$ is a pump strength parameter. 
We shall find it useful to vary this parameter in the 
statements to follow.

\subsection{Solvability}

We first discuss what we mean by a solution to the initial value 
problem for \eqref{gensys}. 
Consider the two constraints \eqref{ENgen} and (the dimensionless 
form of) \eqref{ENI}, 
which we reproduce here for convenience:
\begin{align}\label{Qconstraint}
Q(\mb{c},v)&\equiv \sum_{k=1}^Nz_kc_k+\frac{zA}{v}=0,\\
I(\mb{c},\phi,\alpha)
&\equiv \sum_{k=1}^Nz_k(j_k(\mb{c},\phi)+\alpha p_k(\mb{c},\phi))=
\ip{\mb{z}}{\mb{j}+\alpha \mb{p}}_{\mathbb{R}^N}=0.
\end{align}
Define the following set:
\begin{equation}
\Gamma_\alpha=\lbrace 
\mb{y}=
(\mb{c},v,\phi)\in \mathbb{R}_+^N\times \mathbb{R}_+\times \mathbb{R}\;
| \;Q(\mb{c},v)=I_\alpha(\mb{c},\phi)=0\rbrace.
\end{equation}
We shall often omit the dependence of $I$ and $\Gamma$ on $\alpha$.

\begin{definition}
Let 
$\mb{y}^0=(\mb{c}^0,v^0,\phi^0)\in \Gamma$ where 
$\mb{c}^0=(c_1^0,\cdots,c_N^0)^T$.
Let $\mb{c}(t)=(c_1(t),\cdots,c_N(t))^T, \; v(t),\; t\geq 0$ 
be $C^1$ functions 
and $\phi(t), \; t\geq 0$ be a continuous function of $t$.
The function $\mb{y}(t)=(\mb{c}(t), v(t), \phi(t))$ is a solution 
to \eqref{gensys} with initial values 
$\mb{y}^0$ if it satisfies \eqref{gensys} and 
$\mb{y}(0)=\mb{y}^0$.
The solution may or may not be defined for all positive time.
\end{definition}
Since we are solving a differential algebraic system, we must 
specify initial conditions that satisfy the constraints.  
Note that we require $\phi(t)$ to be a continuous function of $t$. 

We have the following on the solvability of \eqref{gensys}.
\begin{lemma}\label{gensolexist}
Let $\mb{y}_0=(\mb{c}^0,v^0,\phi^0)\in \Gamma$ 
and suppose $\partial{I}/\partial{\phi}\neq 0$ at this point.
Let $\mathcal{B}_r
\subset\mathbb{R}^{N+2}$ 
be the open ball of radius $r$ centered at $\mb{y}_0$.
Then, there is a $K>0$ such that $\mathcal{B}_K$ has the following property.
\begin{enumerate}
\item  The set
$\mathcal{B}_K\cap \Gamma$ 
is an $N$-dimensional submanifold of $\mathbb{R}^{N+2}$. 
There is a $C^1$ function $\Phi$ such that any point
$\mb{y}\in \mathcal{B}_K\cap \Gamma$ can be written as 
$\mb{y}=(\mb{c},v,\Phi(\mb{c}))$.
\item Take any point $\mb{y}^1=(\mb{c}^1,v^1,\phi^1)\in 
\mathcal{B}_K\cap \Gamma$.
There is a unique solution $\mb{y}(t)=(\mb{c}(t),v(t),\phi(t))$
to \eqref{gensys} with initial values $\mb{y}^1$
for short times. For short times, 
$\phi(t)=\Phi(\mb{c}(t))$, and thus $\phi(t)$
is a $C^1$ function.
\end{enumerate}
\end{lemma}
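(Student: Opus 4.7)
My plan is to reduce the differential-algebraic system to a genuine ordinary differential equation by using the implicit function theorem to eliminate $\phi$ through the current constraint $I=0$, and then to apply Picard--Lindel\"of to the reduced system. The electroneutrality constraint $Q=0$ will be handled not as an independent equation to be enforced, but as a conserved quantity along the flow.

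For the first claim, the hypothesis $\partial I/\partial\phi\neq 0$ at $\mb{y}^0$ together with the $C^1$ regularity of $\mb{j}$ and $\mb{p}$ lets the implicit function theorem produce a $C^1$ function $\Phi$, defined on a neighborhood $U$ of $\mb{c}^0$ and taking values in a neighborhood $V$ of $\phi^0$, with $I(\mb{c},\Phi(\mb{c}),\alpha)=0$ and with $\Phi(\mb{c})$ the unique solution in $V$ for each $\mb{c}\in U$. I then shrink $K$ so that $\mathcal{B}_K\subset\mathbb{R}_+^{N+1}\times\mathbb{R}$ and so that its $\mb{c}$- and $\phi$-projections sit inside $U$ and $V$ respectively. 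Any triple in $\mathcal{B}_K\cap\Gamma$ satisfies $I=0$ with $(\mb{c},\phi)\in U\times V$, so uniqueness forces $\phi=\Phi(\mb{c})$. Meanwhile the $N+1$ variables $(\mb{c},v)$ are constrained only by the single equation $Q(\mb{c},v)=0$, whose gradient $(\mb{z}^T,-zA/v^2)$ is nonzero (since at least one $z_k\neq 0$), so $\{Q=0\}$ cuts out an $N$-dimensional $C^1$ submanifold. Together with the graph map $(\mb{c},v)\mapsto(\mb{c},v,\Phi(\mb{c}))$ this identifies $\mathcal{B}_K\cap\Gamma$ as an $N$-dimensional submanifold of $\mathbb{R}^{N+2}$.

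For the second claim, substituting $\phi=\Phi(\mb{c})$ into \eqref{ionsgen} and \eqref{volgen} yields an ordinary differential equation for $(\mb{c},v)$ with $C^1$ right-hand side, so Picard--Lindel\"of gives a unique short-time solution $(\mb{c}(t),v(t))$ starting from $(\mb{c}^1,v^1)$. Defining $\phi(t):=\Phi(\mb{c}(t))$ produces a $C^1$ function satisfying the current constraint automatically. The key verification is that $Q=0$ is preserved: differentiating $vQ=\sum_k z_k(vc_k)+zA$ along the ODE and using $(vc_k)'=-(j_k+\alpha p_k)$ together with $A'=0$ gives
\begin{equation*}
\D{(vQ)}{t}=-\ip{\mb{z}}{\mb{j}+\alpha\mb{p}}_{\mathbb{R}^N}=-I(\mb{c},\Phi(\mb{c}),\alpha)=0,
\end{equation*}
so $vQ$ is conserved, equals zero initially, and hence $Q(\mb{c}(t),v(t))=0$ throughout. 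For uniqueness of the original differential-algebraic problem, note that any candidate solution $(\mb{c}(t),v(t),\phi(t))$ with $\phi$ merely continuous must, by continuity, remain in $\mathcal{B}_K$ for short time, and satisfies $I(\mb{c}(t),\phi(t),\alpha)=0$ with $(\mb{c}(t),\phi(t))\in U\times V$; the local uniqueness clause of the implicit function theorem then forces $\phi(t)=\Phi(\mb{c}(t))$, so $(\mb{c},v)$ solves the reduced ODE and uniqueness is inherited from Picard--Lindel\"of.

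The conceptual content is entirely in the implicit function theorem reduction; no step involves a hard estimate. The only place where one can slip is in the choice of $K$, which must be made small enough to guarantee simultaneously that $\mathcal{B}_K$ stays in the physical region $c_k,v>0$, that the implicit function $\Phi$ is defined and unique throughout, and that the short-time solution of the reduced ODE from any initial point in $\mathcal{B}_K\cap\Gamma$ does not leave $\mathcal{B}_K$ before $\Phi$ can be evaluated---each of these shrinkings is an elementary compactness or continuity argument but they must be done in the right order.
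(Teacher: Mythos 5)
Your proof is correct and follows essentially the same route as the paper: apply the implicit function theorem to $I=0$ to obtain $\phi=\Phi(\mb{c})$, reduce to an ODE in $(\mb{c},v)$, and invoke standard existence/uniqueness, with uniqueness for the original system recovered from the continuity of $\phi(t)$. Your explicit verification that $vQ$ is conserved (which the paper leaves implicit) is a worthwhile addition but does not change the structure of the argument.
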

\begin{proof}[Proof of Lemma \ref{gensolexist}]
The first item is a straightforward consequence of the implicit 
function theorem.
For the second item, substitute $\phi=\Phi(\mb{c})$ into \eqref{gensys}.
Solve this ODE with initial values $(\mb{c}^1,v^1)$ and 
let $\mb{c}(t)$ and $v(t)$ be the resulting solution. 
It is clear that $(\mb{c}(t),v(t), \Phi(\mb{c}(t)))$ is a solution 
to \eqref{gensys} with initial values $(\mb{c}^1,v^1,\phi^1)$ for short times.
This solution is the unique solution, since
$\phi(t)$ must be continuous and thus, must remain within $\mathcal{B}_K$
for short times. 
\end{proof}
The same statement clearly holds if we replace $\mathcal{B}_K$ with 
a neighborhood of $\mb{y}_0$.
Note that, when $z\neq 0$, any point in $\mathcal{B}_K\cap \Gamma$
can be written as $(\mb{c},V(\mb{c}),\Phi(\mb{c}))$ where $V(\mb{c})$ is found 
by solving $Q(\mb{c},v)=0$ for $v$. Thus, when $z\neq 0$, $\mb{c}$
serves as a local coordinate system on $\mathcal{B}_K\cap \Gamma$.

Given the structure conditions on $j_k$ we have the following 
solvability result.
\begin{proposition}\label{wellposed}
Suppose $j_k$ satisfies \eqref{noflux} and \eqref{posdef}.
Let
\begin{equation}
\mathcal{C}_r=\lbrace \mb{y}=(\mb{c},v,\phi)\in \mathbb{R}^{N+2}\; |\; 
\abs{\mb{c}-\mb{c}^{\rm e}}<r, \; \abs{\phi}<r\rbrace,
\end{equation} 
where $\abs{\cdot}$ for a vector in $\mathbb{R}^N$ denotes its 
Euclidean norm. 
There are positive constants $K_\alpha$ and $K$ with the 
following properties. 
\begin{enumerate}
\item Take any $\abs{\alpha}<K_\alpha$. 
The set
$\mathcal{C}_K\cap \Gamma_\alpha$ is an (unbounded) $N$-dimensional 
submanifold of $\mathbb{R}^{N+2}$ such that any 
$\mb{y}\in \mathcal{C}_K\cap \Gamma_\alpha$ can be written as
$\mb{y}=(\mb{c},v,\Phi(\mb{c},\alpha))$
where $\Phi(\mb{c},\alpha)$ is a $C^1$ function of $\mb{c}$ and $\alpha$.
\item System \eqref{gensys} with initial values 
$\mb{y}^0=(\mb{c}^0,v^0,\phi^0)\in \mathcal{C}_K\cap \Gamma_\alpha$ 
has a unique solution $\mb{y}(t)=(\mb{c}(t), v(t), \phi(t))$
for short times. For short times, $\phi(t)=\Phi(\mb{c}(t),\alpha)$.
\end{enumerate}
\end{proposition}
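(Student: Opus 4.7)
The plan is to apply Lemma \ref{gensolexist} in a uniform fashion over $\mathcal{C}_K$, and the crucial point is to verify the non-degeneracy condition $\partial I/\partial \phi \neq 0$ at a convenient reference point, then extend it by continuity and compactness to the whole slab $\mathcal{C}_K$ for small enough $K$ and $K_\alpha$. The natural reference point to work at is $(\mb{c},\phi,\alpha)=(\mb{c}^{\rm e},0,0)$, where $\bm{\mu}=\mb{0}$ and hence $\mb{j}=\mb{0}$ by \eqref{noflux}, so that $I=0$ is satisfied for free.

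The core computation is $\partial I/\partial \phi$ at this reference point. Here I view $I$ as a function of $(\mb{c},\phi,\alpha)$; since $j_k$ is most naturally a function of $(\phi,\bm{\mu})$ but $I$ is written in terms of $(\mb{c},\phi)$, one applies the chain rule
\begin{equation*}
\left.\PD{j_k}{\phi}\right|_{\mb{c}}
=\left.\PD{j_k}{\phi}\right|_{\bm{\mu}}
+\sum_{l=1}^N\PD{j_k}{\mu_l}\cdot z_l.
\end{equation*}
Using \eqref{djkdphi}, the first term vanishes at $\bm{\mu}=\mb{0}$, while by \eqref{posdef} the matrix $\partial\mb{j}/\partial\bm{\mu}$ at $\bm{\mu}=\mb{0}$ coincides with a symmetric positive definite $L$. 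Thus $\partial j_k/\partial\phi|_{\mb{c}}$ evaluated at the reference point equals $(L\mb{z})_k$, and therefore
\begin{equation*}
\left.\PD{I}{\phi}\right|_{(\mb{c}^{\rm e},0,0)}
=\ip{\mb{z}}{L\mb{z}}_{\mathbb{R}^N}>0,
\end{equation*}
since at least one $z_k$ is nonzero. The $\alpha\,\partial(\mb{z}\cdot\mb{p})/\partial\phi$ contribution vanishes at $\alpha=0$.

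By continuity of $\partial I/\partial \phi$ in $(\mb{c},\phi,\alpha)$ and compactness of the closure of any bounded neighborhood, one can choose $K>0$ and $K_\alpha>0$ so that $\partial I/\partial\phi\geq c_0>0$ uniformly on the set $\{\abs{\mb{c}-\mb{c}^{\rm e}}\le K,\;\abs{\phi}\le K,\;\abs{\alpha}\le K_\alpha\}$ (and, by shrinking $K$ further, on a set bounded away from $\partial\mathbb{R}^N_+$). Strict monotonicity of $\phi\mapsto I(\mb{c},\phi,\alpha)$ on $[-K,K]$ combined with $I(\mb{c}^{\rm e},0,0)=0$ and an intermediate-value argument (after possibly shrinking $K$ and $K_\alpha$ once more so that $I(\mb{c},\pm K,\alpha)$ have definite signs uniformly) produces a unique $\phi=\Phi(\mb{c},\alpha)$ with $\abs{\Phi}<K$ solving $I=0$. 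The usual implicit function theorem then gives $\Phi\in C^1$. This establishes the parameterization in the first claim; the submanifold structure then follows because on $\mathcal{C}_K$ the auxiliary constraint $Q(\mb{c},v)=0$ is transverse: one has $\partial Q/\partial v=-zA/v^2\neq 0$ if $z\neq 0$, while if $z=0$ then $Q=\sum z_kc_k$ has nonvanishing gradient in $\mb{c}$ because at least one $z_k\neq 0$.

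For the second claim, substitute $\phi=\Phi(\mb{c},\alpha)$ into \eqref{ionsgen} and \eqref{volgen} to obtain a $C^1$ ODE system in $(\mb{c},v)$ alone; standard Picard–Lindel\"of then yields a unique short-time solution starting from any $(\mb{c}^1,v^1)$ compatible with $Q=0$, and continuity of $\mb{c}(t)$ keeps $(\mb{c}(t),\alpha)$ inside the domain of $\Phi$ for short times, so $\phi(t)=\Phi(\mb{c}(t),\alpha)$ is $C^1$ and gives the unique continuous selection of $\phi$ satisfying the algebraic constraint. The main obstacle in this plan is the transition from the pointwise implicit function theorem of Lemma \ref{gensolexist} to a uniform version valid on the entire unbounded slab $\mathcal{C}_K$; this is handled by exploiting the fact that $I$ does not depend on $v$ at all, so the relevant continuity-plus-compactness argument is genuinely over a compact set in $(\mb{c},\phi,\alpha)$.
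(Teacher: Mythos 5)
Your proposal is correct and follows essentially the same route as the paper: both verify $I=0$ at the reference point $(\mb{c},\phi,\alpha)=(\mb{c}^{\rm e},0,0)$ via \eqref{noflux}, compute $\partial I/\partial\phi=\ip{\mb{z}}{(\partial\mb{j}/\partial\bm{\mu})\mb{z}}_{\mathbb{R}^N}>0$ there using the chain rule together with \eqref{djkdphi} and \eqref{posdef}, and then invoke the implicit function theorem to obtain $\Phi(\mb{c},\alpha)$ before reducing to an ODE as in Lemma \ref{gensolexist}. Your added compactness/monotonicity discussion (exploiting that $I$ is independent of $v$) merely makes explicit the uniformity over the slab $\mathcal{C}_K$ that the paper leaves implicit.
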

\begin{proof}
To construct the function $\Phi(\mb{c},\alpha)$, we 
use the implicit function theorem around 
$\mb{c}=\mb{c}^{\rm e}, \phi=0, \alpha=0$
on $I$.
Note that:
\begin{equation}
I(\phi=0,\mb{c}=\mb{c}^{\rm e}, \alpha=0)
=\ip{\mb{z}}{\mb{j}(\phi=0,\bm{\mu}=0)}_{\mathbb{R}^N}=0\label{Iat0}
\end{equation}
where we used the definition of $\bm{\mu}$ in the first 
equality and \eqref{noflux} in the second equality.
Take the derivative of $I$ with respect to $\phi$:
\begin{equation}
\PD{I}{\phi}=
\ip{\mb{z}}{\PD{\mb{j}}{\phi}+\PD{\mb{j}}{\bm{\mu}}\mb{z}}_{\mathbb{R}^N}
+\alpha \PD{p}{\phi}.
\end{equation}
In the above, $\mb{j}$ 
is viewed as a function of $\phi$ and $\bm{\mu}$ whereas 
$\mb{p}$ is viewed as a function of $\phi$ and $\mb{c}$. We have used 
the definition of $\bm{\mu}$ to obtain the second term in the above. 
We see that
\begin{equation}
\PD{I}{\phi}(\phi=0, \mb{c}=\mb{c}^{\rm e}, \alpha=0)
=\ip{\mb{z}}{\PD{\mb{j}}{\bm{\mu}}\mb{z}}_{\mathbb{R}^N}>0.\label{dIdphiat0}
\end{equation}
where we used \eqref{djkdphi}.
Positivity follows from \eqref{posdef}.
With \eqref{Iat0} and \eqref{dIdphiat0}, we can use
the implicit function theorem to obtain a $C^1$ function $\Phi$
satisfying
\begin{equation}
I(\Phi(\mb{c},\alpha),\mb{c},\alpha)=0, 
\; \Phi(\mb{c}^{\rm e}, 0)=0,
\end{equation}
in a neighborhood of $\mb{c}=\mb{c}^{\rm e}, \alpha=0$.
The rest of the proof is the same as that of Lemma 
\ref{gensolexist}.
\end{proof}

\subsection{Existence of Steady States and Asymptotic Stability}

A point $\mb{y}=(\mb{c},v,\phi)\in \Gamma$ is a steady state of \eqref{gensys}
if the right hand side of \eqref{ionsgen} and \eqref{volgen} is 
$0$ at that point. 
We have the following result on the existence of steady states.
This should be seen as a generalization of condition \eqref{32NaK}.
\begin{proposition}\label{existence}
Suppose $j_k$ and $j_{\rm w}$ satisfy \eqref{noflux}, \eqref{jwinc}
and \eqref{posdef}. Then, \eqref{gensys} has a steady state with 
$v>0$ for all sufficiently small $\alpha>0$ so long 
as the following condition is met:
\begin{equation}\label{ckejkpl}
\at{\ip{\mb{c}^{\rm e}}{\paren{\PD{\mb{j}}{\bm{\mu}}}^{-1}\mb{p}}_{\mathbb{R}^N}}{\phi=0, \bm{\mu}=\mb{0}}>0,
\end{equation}
where $(\partial \mb{j}/\partial \bm{\mu})^{-1}$ is the inverse of the Jacobian matrix $\partial \mb{j}/\partial \bm{\mu}$.
\end{proposition}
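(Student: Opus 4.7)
The plan is to use the implicit function theorem, but at a regularized version of the steady state system, since the "natural" limit as $\alpha\to 0^+$ is $v\to\infty$ (the cell bursts in the absence of pumping).

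First I would introduce the variable $w=1/v$ and rewrite the steady-state equations, using $\bm{\mu}$ in place of $\mb{c}$ so that $c_k=c_k^{\rm e}\exp(\mu_k-z_k\phi)$. The three conditions become $\mb{j}(\phi,\bm{\mu})+\alpha\mb{p}(\phi,\bm{\mu})=0$, the osmotic balance $\sum_k c_k^{\rm e}(\exp(\mu_k-z_k\phi)-1)+Aw=0$, and the electroneutrality condition $\sum_k z_kc_k^{\rm e}\exp(\mu_k-z_k\phi)+zAw=0$. The key observation is that $(\bm{\mu},\phi,w)=(\mb{0},0,0)$ solves all three simultaneously at $\alpha=0$: the first uses \eqref{noflux}, the second uses the fact that each exponential is $1$, and the third uses the extracellular electroneutrality $\sum_k z_kc_k^{\rm e}=0$.

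Next I would apply the implicit function theorem at this base point with $\alpha$ as the parameter. The Jacobian of the left-hand sides with respect to $(\bm{\mu},\phi,w)$ at $(\mb{0},0,0,0)$ has a convenient block structure: the row derivatives of $\mb{j}$ with respect to $\phi$ vanish at $\bm{\mu}=\mb{0}$ by \eqref{djkdphi}, and the row $\partial F_1/\partial w$ vanishes trivially, so the top block is just $\partial\mb{j}/\partial\bm{\mu}$, which is nonsingular by \eqref{posdef}. The lower-right $2\times 2$ block has determinant $A\sum_k z_k^2 c_k^{\rm e}$, which is strictly positive because at least one $z_k\neq 0$ and the $c_k^{\rm e}$ are positive. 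Hence the full Jacobian is invertible, and there is a $C^1$ branch $\alpha\mapsto(\bm{\mu}(\alpha),\phi(\alpha),w(\alpha))$ of solutions for small $|\alpha|$.

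Finally I would compute $dw/d\alpha$ at $\alpha=0$. Differentiating the first equation gives $d\bm{\mu}/d\alpha=-(\partial\mb{j}/\partial\bm{\mu})^{-1}\mb{p}$ evaluated at $\phi=0,\bm{\mu}=\mb{0}$. Differentiating the osmotic-balance equation (and again using $\partial F_2/\partial\phi=0$ at the base point, since $\sum z_kc_k^{\rm e}=0$) yields
\begin{equation*}
A\,\frac{dw}{d\alpha}\bigg|_{\alpha=0}=-\ip{\mb{c}^{\rm e}}{d\bm{\mu}/d\alpha}_{\mathbb{R}^N}=\at{\ip{\mb{c}^{\rm e}}{(\partial\mb{j}/\partial\bm{\mu})^{-1}\mb{p}}_{\mathbb{R}^N}}{\phi=0,\bm{\mu}=\mb{0}}.
\end{equation*}
Condition \eqref{ckejkpl} makes this strictly positive, so $w(\alpha)>0$ for small $\alpha>0$, i.e.\ $v=1/w$ is finite and positive. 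Positivity of the $c_k$ follows automatically from continuity, since $c_k=c_k^{\rm e}\exp(\mu_k-z_k\phi)$ and $(\bm{\mu},\phi)\to(\mb{0},0)$ as $\alpha\to 0$.

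The main subtlety is the Jacobian computation: one must see that the block structure collapses to an invertible $N\times N$ block plus an invertible $2\times 2$ block, and both of the identities that make this happen—the vanishing of $\partial\mb{j}/\partial\phi$ at $\bm{\mu}=\mb{0}$ from \eqref{djkdphi} and the vanishing of $\sum z_kc_k^{\rm e}$—are consequences of the thermodynamic structure imposed earlier. Once that is in hand, the rest is a direct IFT computation, and the sign condition \eqref{ckejkpl} translates exactly into the steady-state volume being finite rather than infinite.
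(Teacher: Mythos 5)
Your proposal is correct and follows essentially the same route as the paper: an implicit function theorem argument in the variable $w=1/v$ about the $\alpha=0$ ``equilibrium'' point $(\mb{c},\phi,w)=(\mb{c}^{\rm e},0,0)$, with the sign of $dw/d\alpha$ at $\alpha=0$ giving exactly condition \eqref{ckejkpl}. The only difference is cosmetic: you work in the coordinates $(\bm{\mu},\phi,w)$, which makes the Jacobian visibly block-triangular, whereas the paper works in $(\mb{c},\phi,w)$ and verifies invertibility by showing the linearized system has only the trivial solution.
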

The idea behind this result is the following. System 
\eqref{gensys} possesses an obvious ``steady state'' when $\alpha=0$: 
$\mb{c}=\mb{c}^{\rm e}, \phi=0$ and $v=\infty$. If $\alpha$ is positive
but small, we expect this steady state to persist. In order for 
$v$ to be positive as $\alpha$ is perturbed, we need condition 
\eqref{ckejkpl}.
\begin{proof}[Proof of Proposition \ref{existence}]
Let $w=1/v$. Set the right hand side of \eqref{gensys} equal to $0$.
We have:
\begin{equation}\label{leftsideeqn}
\begin{split}
j_k(\phi,\bm{\mu})+\alpha p_k(\phi,\bm{\mu})&=0, \quad k=1,\cdots,N,\\
\sum_{k=1}^N z_kc_k +zAw&=0,\\
j_{\rm w}(\pi_{\rm w})&=0.
\end{split}
\end{equation}
View the above as an equation for $\mb{c}, \phi$ and $w$.
Note that $\mb{c}=\mb{c}^{\rm e}, \phi=0, w=0$ is a solution to the 
above system if $\alpha=0$. To apply the implicit function theorem, 
we show that the Jacobian matrix with 
respect to $\mb{c}, \phi, w$ is invertible at this point. 
This is equivalent to showing that the only solution to 
the following linear equation for 
$\widehat{\mb{c}}=(\widehat{c_1},\cdots,\widehat{c_N}), 
\widehat{\phi}$ and $\widehat{w}$ is the trivial one. 
\begin{align}
\sum_{l=1}^N 
\at{\PD{j_k}{\mu_l}}{\phi=0,\bm{\mu}=\mb{0}}
\paren{\frac{\widehat{c_l}}{c_l^{\rm e}}+z_l\widehat{\phi}}
&=0, \; k=1,\cdots, N,\label{jmulin}\\
\sum_{k=1}^N z_k\widehat{c_k} +zA\widehat{w}&=0,\label{ENlin}\\
\at{\PD{j_{\rm w}}{\pi_{\rm w}}}{\pi_{\rm w}=0}\paren{\sum_{k=1}^N \widehat{c_k}+A\widehat{w}}&=0,\label{osmlin}
\end{align}
where we used \eqref{noflux} (and its consequence \eqref{djkdphi})
to obtain \eqref{jmulin}. 
Equation \eqref{jmulin} together with condition \eqref{posdef} and
\eqref{osmlin} together with condition \eqref{jwinc} gives:
\begin{equation}\label{muklin}
\frac{\widehat{c_k}}{c_k^{\rm e}}+z_k\widehat{\phi}=0,\quad 
\sum_{k=1}^N \widehat{c_k}+A\widehat{w}=0
\end{equation}
where the first equation is holds for all $k$.
Using \eqref{muklin} to eliminate $\widehat{w}$ and 
$\widehat{c_k}$ from \eqref{ENlin}, we have:
\begin{equation}
\sum_{k=1}^N (z_k-z)z_k c_k^{\rm e}\widehat{\phi}=
\sum_{k=1}^N z_k^2 c_k^{\rm e}\widehat{\phi}=0
\end{equation}
where we used \eqref{ENgen} in the first equality. Since
$c_k^{\rm e}$ is positive and at least one of $z_k\neq 0$,
we see that $\widehat{\phi}=0$. From \eqref{muklin}, we see that 
$c_k=0$ for all $k$ and $\widehat{w}=0$ since $A>0$.
We can thus invoke the implicit function theorem to conclude 
that we have a solution $\mb{c}(\alpha), \phi(\alpha)$ and $w(\alpha)$
to \eqref{leftsideeqn} when $\alpha$ is close to $0$. To ensure that $w$ (or 
equivalently, $v$) is positive for small $\alpha>0$, we compute:
\begin{equation}\label{dwdalpha}
\at{\D{w}{\alpha}}{\alpha=0}=A^{-1}
\at{\ip{\mb{c}^{\rm e}}{\paren{\PD{\mb{j}}{\bm{\mu}}}^{-1}\mb{p}}_{\mathbb{R}^N}}{\phi=0, \bm{\mu}=\mb{0}}.
\end{equation}
Since $w=0$ at $\alpha=0$, condition \eqref{ckejkpl} will ensure that 
the $v$ is positive for $\alpha$ small and positive. 
\end{proof}
We may also compute $d\phi/d\alpha$:
\begin{equation}
\begin{split}
\at{\D{\phi}{\alpha}}{\alpha=0}&=
\paren{\sum_{k=1}^N z_k^2c_k^{\rm e}}^{-1}
\at{\ip{(z\mb{c}^{\rm e}-\mb{z}_c)}{\paren{\PD{\mb{j}}{\bm{\mu}}}^{-1}\mb{p}}_{\mathbb{R}^N}}{\phi=0, \bm{\mu}=\mb{0}},\\
\mb{z}_c&=(z_1c_1^{\rm e},\cdots,z_Nc_N^{\rm e}).
\end{split}
\end{equation}
Given \eqref{ckejkpl}, this shows that the sign of $\phi$ is the same 
as the sign of $z$ if $\abs{z}$ is large enough. In physiological situations, 
$z$ is large and negative, and thus we will have a negative membrane potential.

Condition \eqref{ckejkpl} applied to \eqref{NaKCl} yields:
\begin{equation}\label{32NaK2}
\frac{3[{\rm Na}^+]_{\rm e}}{g_{\rm Na}}-\frac{2[{\rm K}^+]_{\rm e}}{g_{\rm K}}>0,
\end{equation}
thus reproducing condition \eqref{32NaK}. 
It is interesting that we do not have any restriction on $z$ for this 
to be true. Given \eqref{dwdalpha}, the cell volume will be small
if \eqref{ckejkpl} is large. Expression \eqref{32NaK2} is
indeed large: for a typical cell, $[{\rm Na}^+]_{\rm e}\gg [{\rm K}^+]_{\rm e}$
and $g_{\rm Na}\ll g_{\rm K}$.

We now turn to the question of stability of steady states.
\begin{definition}
Suppose $\mb{y}^*=(\mb{c}^*,v^*,\phi^*)\in \Gamma$ is a steady state 
of \eqref{gensys}. 
Let $\mathcal{B}_r$ be the open ball of radius $r$ centered at $\mb{y}^*$. 
The steady state $\mb{y}^*$ is stable if the following is 
true. For any small enough $\epsilon>0$, there exists a $\delta>0$ with the 
following property. Choose any 
$\mb{y}^0\in \mathcal{B}_\delta\cap \Gamma$. Then, 
the solution(s) $\mb{y}(t)$ 
to system \eqref{gensys} with initial value 
$\mb{y}^0$ is defined for all positive time and satisfies
$\mb{y}(t)\in \mathcal{B}_\epsilon$.
The steady state is asymptotically stable if it is stable and 
all solutions with initial data $\mb{y}^0\in \Gamma$ 
sufficiently close to $\mb{y}^*$ 
approach $\mb{y}^*$ as $t\to \infty$.
\end{definition} 
Since any solution lies on $\Gamma$, we may replace $\mathcal{B}_\epsilon$
with $\mathcal{B}_\epsilon\cap \Gamma$.
The only difference between the usual definition of stability and 
the one above is that the initial data must lie on $\Gamma$. 
If $\partial{I}/\partial{\phi}\neq 0$ at $\mb{y}^*=(\mb{c}^*,v^*,\phi^*)$
and $z\neq 0$, then, by the remark after the proof of Lemma \ref{gensolexist}, 
system \eqref{gensys} can locally be written 
as an ODE for $\mb{c}$ only. The above definition of (asymptotic) stability
is then equivalent to the (asymptotic) stability of $\mb{c}^*$ for 
this ODE system. 

Let $\mb{c}^*=(c_1^{*}, \cdots,c_N^*), v^*, \phi^*$ be 
a steady state of \eqref{gensys}. 
Define the following quantities:
\begin{equation}
\begin{split}
\wh{\phi}&=\phi-\phi^*,\\
\wh{\bm{\gamma}}&=(\wh{\gamma}_1,\cdots,\wh{\gamma}_N)^T, \; 
\wh{\gamma}_k=\gamma_k-\gamma_k^*=\ln\paren{\frac{c_k}{c_k^*}}\\
\wh{\bm{\mu}}&=(\wh{\mu}_1,\cdots,\wh{\mu}_N)^T, \; 
\wh{\mu}_k=\mu_k-\mu_k^*\equiv \wh{\gamma}_k+z_k\wh{\phi},\\
\wh{\pi}_{\rm w}&=\pi_{\rm w}-\pi_{\rm w}^*
\equiv \sum_{k=1}^N c_k^*+\frac{A}{v^*}-\paren{\sum_{k=1}^N c_k+\frac{A}{v}}.
\end{split}
\end{equation}
If $j_{\rm w}$ satisfies \eqref{jwgencond}, $j_{\rm w}(\pi_{\rm w})=0$ if and only if 
$\pi_{\rm w}=0$ and thus 
$\pi_{\rm w}^*=0$. In this case, $\wh{\pi}_{\rm w}=\pi_{\rm w}$.

Let $\wh{G}$ be free energy with respect to the steady 
state defined in \eqref{hatG}.   
We have the following analogue of Proposition \ref{FEE} or Lemma \ref{FEq}.
\begin{lemma}\label{relative}
Suppose $\mb{c}^*=(c_1^{*}, \cdots,c_N^*)^T, \phi^*, v^*$ is 
a steady state of \eqref{gensys}. Then, we have:
\begin{equation}
\D{\wh{G}}{t}
=-\sum_{k=1}^N \wh{\mu}_k\paren{\wh{j}_k+\alpha\wh{p}_k}
-\wh{\pi}_{\rm w}j_{\rm w}
\end{equation}
where $\wh{j}_k=j_k-j_k^*, \; \wh{p}_k=p_k-p_k^*$ and 
$j_k^*,\; p_k^*$ are the passive and active fluxes evaluated 
at the steady state.
\end{lemma}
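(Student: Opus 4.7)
The plan is to leverage Proposition \ref{FEE} rather than recompute from scratch: since $\wh G$ and $G$ are both free energies measured against different reference states, their difference is simple enough in $(\mb c,v)$ that its time derivative is immediate. Using $a_k=vc_k$ and noting that the $\frac{A}{v}\ln\frac{A}{v}$ piece of $G$ collapses against the $\ln(v^*/v)$ piece of $\wh G$ into a function of $v$ alone, a short rearrangement gives
\begin{equation*}
\wh G-G=\sum_{k=1}^N a_k\ln\!\paren{\frac{c_k^{\rm e}}{c_k^*}}+v\sum_{k=1}^N(c_k^*-c_k^{\rm e})+\frac{Av}{v^*}+A\ln\!\paren{\frac{v^*}{A}}.
\end{equation*}
This is affine in $(a_1,\dots,a_N,v)$, so its time derivative along a solution of \eqref{gensys} can be read off directly from $\dot a_k=-(j_k+\alpha p_k)$ and $\dot v=-j_{\rm w}$. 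Adding it to Proposition \ref{FEE} (applied with $\alpha p_k$ in place of $p_k$) yields
\begin{equation*}
\D{\wh G}{t}=-\sum_{k=1}^N\!\paren{\mu_k+\ln(c_k^{\rm e}/c_k^*)}(j_k+\alpha p_k)-\!\paren{\pi_{\rm w}+\sum_{k=1}^N(c_k^*-c_k^{\rm e})+\frac{A}{v^*}}j_{\rm w}.
\end{equation*}

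Three algebraic reductions then finish the proof. First, the logarithm combines with $\mu_k=\ln(c_k/c_k^{\rm e})+z_k\phi$ into $\mu_k+\ln(c_k^{\rm e}/c_k^*)=\wh\mu_k+z_k\phi^*$; the $z_k\phi^*$ contribution drops out on summation against $j_k+\alpha p_k$ because the solution satisfies the zero-current identity $I=\ip{\mb z}{\mb j+\alpha\mb p}_{\mathbb{R}^N}=0$ at every instant, so $\phi^*I=0$. Second, inserting $\pi_{\rm w}=\sum_{k=1}^N c_k^{\rm e}-\sum_{k=1}^N c_k-A/v$ into the coefficient of $j_{\rm w}$ telescopes exactly to $-\wh\pi_{\rm w}$. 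Third, at the steady state each ion species is individually in balance, i.e.\ $j_k^*+\alpha p_k^*=0$ for every $k$ (not just the charge-weighted sum), so $j_k+\alpha p_k=\wh j_k+\alpha\wh p_k$ and the $\wh\mu_k$ sum may be rewritten with the hatted fluxes at no cost.

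I do not expect a genuine obstacle: the identity is algebraic once the two constraints have been used correctly. The one place that repays attention is the $\phi^*$ cancellation, which relies on the dynamical (not the equilibrium) form of $I=0$ and so applies along the entire trajectory rather than only at the steady state; this is the general-case analogue of how the electroneutrality constraint entered the proof of Proposition \ref{FEE}. A conceptually equivalent alternative is to mimic the proof of Proposition \ref{FEE} directly using a relative free-energy density built from $\wh\mu_k$ and a corresponding $\wh\mu_A=\mu_A-\mu_A^*$, but this requires the same three reductions at the end and so saves no work.
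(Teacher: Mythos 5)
Your proof is correct, and it reaches the identity by a slightly different decomposition than the paper. The paper's proof is the ``conceptually equivalent alternative'' you mention at the end: it uses the componentwise steady-state balance $j_k^*+\alpha p_k^*=0$ to rewrite $d(vc_k)/dt=-(\wh{j}_k+\alpha\wh{p}_k)$ and then repeats the Legendre-transform computation of Proposition \ref{FEE} with the relative potentials $\wh{\mu}_k$ in place of $\mu_k$ (noting $j_{\rm w}^*=0$ so that $\wh{j}_{\rm w}=j_{\rm w}$). You instead treat Proposition \ref{FEE} as a black box and observe that $\wh{G}-G$ is affine in $(\mb{a},v)$ with constant coefficients $\ln(c_k^{\rm e}/c_k^*)$ and $\sum_k(c_k^*-c_k^{\rm e})+A/v^*$; I checked your expression for $\wh{G}-G$ and the three reductions, and all are right. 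In particular you correctly identify that the $z_k\phi^*$ term is killed by the dynamical zero-current identity $\ip{\mb{z}}{\mb{j}+\alpha\mb{p}}_{\mathbb{R}^N}=0$ holding along the whole trajectory (this plays exactly the role that electroneutrality played in the proof of Proposition \ref{FEE}), and that passing from $j_k+\alpha p_k$ to $\wh{j}_k+\alpha\wh{p}_k$ costs nothing because the steady-state flux balance holds species by species. What your route buys is that the only new computation is the derivative of an affine function, and it makes explicit exactly which data of the steady state enter the correction term; what the paper's route buys is that it never needs the explicit formula for $\wh{G}-G$ and generalizes verbatim to any reference state at which the fluxes balance.
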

\begin{proof}
Rewrite \eqref{ionsgen} as follows:
\begin{equation}
\PD{(vc_k)}{t}=-(j_k+\alpha p_k)+\paren{j_k^*+\alpha p_k^*}=
-\sum_{k=1}^N \wh{\mu}_k\paren{\wh{j}_k+\alpha\wh{p}_k},
\end{equation}
where we used $j_k^*+\alpha p_k^*=0$.
Note that, $j_{\rm w}^*$, the water flux at steady state,  is 
equal to $0$. Thus $\wh{j}_{\rm w}
\equiv j_{\rm w}-j_{\rm w}^*=j_{\rm w}$.
The rest of 
the proof is the same as Proposition \ref{FEE}.
\end{proof}
If we apply the above lemma to system \eqref{linsys}, this 
is nothing other than \eqref{Ghateq}.
The next Lemma gives us a sufficient condition 
for asymptotic stability in terms of $\wh{G}$.
\begin{lemma}\label{stabcond}
Let $\mb{y}^*=(\mb{c}^*,v^*,\phi^*), \;\mb{c}^*=(c_1^{*}, \cdots,c_N^*)^T$ be
a steady state of \eqref{gensys}.
Suppose there is neighborhood $\mathcal{U}\subset\mathbb{R}^{N+2}$ of 
$\mb{y}^*$ such that $\mathcal{U}\cap \Gamma$ is an $N$-dimensional 
submanifold in which any point $\mb{y}
\in\mathcal{U}\cap \Gamma$ can be written as $\mb{y}=(\mb{c},v,\Phi(\mb{c}))$
where $\Phi$ is a $C^1$ function of $\mb{c}$.
Suppose any solution $\mb{y}(t)=(\mb{c}(t),v(t),\phi(t))$ in $\mathcal{U}$
(or equivalently, in $\mathcal{U}\cap \Gamma$)
satisfies:
\begin{equation}\label{vsigdec}
\D{\wh{G}}{t}\leq 
-K_*\paren{\abs{\wh{\bm{\mu}}}^2+\abs{\wh{\pi}_{\rm w}}^2}
\end{equation}
for some positive constant $K_*$. 
Then the steady state is asymptotically stable 
and the approach to the steady state is exponential in time. 
\end{lemma}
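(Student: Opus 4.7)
The plan is to use $\wh{G}$ itself as a Lyapunov function and to upgrade the dissipation inequality \eqref{vsigdec} into a coercive bound of the form $d\wh{G}/dt \leq -K\wh{G}$ in a neighborhood of $\mb{y}^*$, from which exponential decay follows by Gronwall.

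\emph{Step 1: $\wh{G}$ is positive definite near $\mb{y}^*$.} I would first work in the variables $(\mb{a},v)$ with $\mb{a}=v\mb{c}$ and repeat the Hessian calculation from Lemma \ref{propG}, which does not use linearity of the fluxes and shows that $\wh{G}$ is strictly convex on $\mathbb{R}_+^{N+1}$. A direct computation of $\nabla \wh{G}$ shows that $(\mb{a}^*,v^*)$ is an unconstrained critical point of $\wh{G}$, and therefore the unique global minimizer with $\wh{G}(\mb{y}^*)=0$. Restricting to $\mathcal{U}\cap\Gamma$ and using the local parametrization by $\mb{c}$ through $\phi=\Phi(\mb{c})$, Taylor expansion gives constants $0<C_1\le C_2$ such that $C_1\abs{\delta}^2\le\wh{G}\le C_2\abs{\delta}^2$, where $\delta$ denotes the deviation of $(\mb{c},v)$ from $(\mb{c}^*,v^*)$ measured in local coordinates on $\Gamma$.

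\emph{Step 2: $\abs{\wh{\bm{\mu}}}^2+\wh{\pi}_{\rm w}^2\gtrsim\abs{\delta}^2$ on $\Gamma$ near $\mb{y}^*$.} Since both sides vanish at $\mb{y}^*$ and are smooth, it suffices to show that the linearization of the map $(\mb{c},v)\mapsto(\wh{\bm{\mu}},\wh{\pi}_{\rm w})$ at $\mb{y}^*$, with $\phi$ eliminated via $\phi=\Phi(\mb{c})$ and $(d\mb{c},dv)$ restricted to the tangent of the electroneutrality constraint $dQ=0$, is injective. Setting $d\wh{\mu}_k=0$ for all $k$ gives $dc_k=-z_k c_k^* d\phi$; plugging this into $d\wh{\pi}_{\rm w}=0$ and into the linearized electroneutrality relation $\sum_k z_k dc_k-(zA/v^{*2})dv=0$, and using the steady-state identity $\sum_k z_k c_k^*=-zA/v^*$, produces
\begin{equation}
\paren{\sum_{k=1}^N z_k^2 c_k^*+\frac{z^2 A}{v^*}}d\phi=0.
\end{equation}
The coefficient in parentheses is strictly positive, so $d\phi=0$, and hence $dc_k=0$ and $dv=0$. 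By compactness of the unit sphere in the tangent space there is a constant $C_3>0$ with $\abs{\wh{\bm{\mu}}}^2+\wh{\pi}_{\rm w}^2\ge C_3\abs{\delta}^2$ in a neighborhood of $\mb{y}^*$ on $\Gamma$.

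\emph{Step 3: Exponential decay.} Combining the hypothesis \eqref{vsigdec} with Steps 1 and 2, I would obtain
\begin{equation}
\D{\wh{G}}{t}\le -K_*\paren{\abs{\wh{\bm{\mu}}}^2+\wh{\pi}_{\rm w}^2}\le -K_* C_3\abs{\delta}^2\le -\frac{K_* C_3}{C_2}\wh{G}
\end{equation}
along any trajectory that remains in a small enough neighborhood of $\mb{y}^*$. The standard Lyapunov argument, using the sublevel sets $\lbrace\wh{G}<\eta\rbrace\cap\Gamma$ as forward-invariant neighborhoods of $\mb{y}^*$, guarantees that trajectories starting sufficiently close to $\mb{y}^*$ stay in this region, establishing stability. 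Gronwall then yields $\wh{G}(t)\le \wh{G}(0)e^{-\lambda t}$ with $\lambda=K_*C_3/C_2$, and the lower bound $C_1\abs{\delta}^2\le \wh{G}$ converts this into exponential decay of $\abs{\delta}^2$.

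The main obstacle is Step 2: ensuring that $\abs{\wh{\bm{\mu}}}^2+\wh{\pi}_{\rm w}^2$ dominates $\abs{\delta}^2$ on $\Gamma$. The calculation above is not difficult, but it genuinely requires combining three linearized relations (the chemical potential, the osmotic pressure and the electroneutrality constraint) and leaning on the fact that $\sum_k z_k^2 c_k^*+z^2 A/v^*>0$. The convexity-based lower bound in Step 1 and the Gronwall step are routine by comparison.
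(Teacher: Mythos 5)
Your proposal is correct and follows essentially the same route as the paper's proof: two-sided quadratic bounds on $\wh{G}$ from the convexity/critical-point argument of Lemma \ref{propG}, coercivity of $\abs{\wh{\bm{\mu}}}^2+\wh{\pi}_{\rm w}^2$ over the deviation on the constraint manifold via a nondegenerate-linearization argument (the paper phrases this as $(\wh{\bm{\mu}},\wh{\pi}_{\rm w},Q)$ forming a local coordinate system and defers the Jacobian computation to Proposition \ref{existence}, whereas you carry it out explicitly), and then a Gronwall/invariant-sublevel-set argument. The only cosmetic difference is that the paper adds an explicit final remark that $\phi=\Phi(\mb{c})$ with $\Phi\in C^1$ converts exponential convergence of $(\mb{c},v)$ into exponential convergence of $\phi$, which is implicit in your parametrization.
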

\begin{proof}
As in the proof of Proposition \ref{linmain}, 
we will find it convenient to 
use the variables $\mb{a}=(a_1,\cdots,a_N)^T=v\mb{c}, v$ and $\phi$ rather 
than $\mb{c}, v$ and $\phi$. 
We shall continue to use the symbols $\mathcal{U}, \Gamma$
to denote the corresponding sets in the new coordinates. 
View $\wh{G}$ as a function of $\mb{a}$ and $v$. Note first that:
\begin{equation}
\begin{split}
\wh{G}(\mb{a}^*,v^*)&=0,\\
\quad \at{\PD{\wh{G}}{a_k}}{\mb{a}=\mb{a}^*,v=v^*}&=
\at{\paren{\gamma_k-\gamma_k^*}}{\mb{c}=\mb{c}^*}=0,\\
\at{\PD{\wh{G}}{v}}{\mb{a}=\mb{a}^*,v=v^*}
&=\at{\pi_{\rm w}}{\mb{c}=\mb{c}^*, v=v^*}=0,
\end{split}
\end{equation}
where $\mb{a}^*=(a_1^*,\cdots,a_N^*)^T =v^*\mb{c}^*$.
By Lemma \ref{propG}, $\wh{G}$
is a globally convex function on $(\mb{a},v)\in \mathbb{R}_+^{N+1}$.
The point $(\mb{a},v)=(\mb{a}^*,v^*)$ is thus the global minimizer 
of $\wh{G}$. The positive definiteness of 
the Hessian matrix of $\wh{G}$
implies that there is a neighborhood 
$\mathcal{N}\subset\mathbb{R}^{N+1}$ of $(\mb{a}^*,v^*)$ where 
\begin{equation}\label{GCvequiv}
K_G^{-1}\paren{\abs{\mb{a}-\mb{a}^*}^2+\abs{v-v^*}^2}
\leq \wh{G}(\mb{a},v)\leq 
K_G \paren{\abs{\mb{a}-\mb{a}^*}^2+\abs{v-v^*}^2}
\end{equation}
for some positive constant $K_G$. 

Now, consider $\mathbb{R}^{N+2}$ with the coordinates 
$(\mb{a},v,\phi)$. Define $
Q=\sum_{k=1}^N z_k(a_k-a_k^*)$.
This $Q$ is the same as the $Q$ in \eqref{Qconstraint}
except that it is written in terms of $\mb{a}$.
We claim that, 
in the vicinity of $(\mb{a}^*,v^*,\phi^*)$ in $\mathbb{R}^{N+2}$, 
the set of variables $(\widehat{\bm{\mu}},\wh{\pi}_{\rm w}, Q)$ defines 
a coordinate system. It is easily seen that
the variables $(\mb{c},v,\phi)$ defines a coordinate system.
We thus consider the coordinate change from 
$(\widehat{\bm{\mu}},\wh{\pi}_{\rm w}, Q)$ to $(\mb{c},v,\phi)$. 
The Jacobian matrix between these two sets of variables 
at $(\mb{c},v,\phi)=(\mb{c}^*,v^*,\phi^*)$ is non-singular. 
This computation is almost the same as the computation 
in the proof of Proposition \ref{existence},
so we omit the details. The claim follows by the implicit 
function theorem. There is therefore a neighborhood 
$\mathcal{V}\subset\mathbb{R}^{N+2}$ of $(\mb{a}^*,v^*,\phi^*)$
in which the following inequality holds:
\begin{equation}
K_\mu\paren{\abs{\mb{a}-\mb{a}^*}^2+\abs{v-v^*}^2+\abs{\phi-\phi^{*}}^2}
\leq \abs{\widehat{\bm{\mu}}}^2+\abs{\wh{\pi}_{\rm w}}^2+\abs{Q}^2
\end{equation}
where $K_\mu$ is a positive constant.
Any solution to \eqref{gensys} satisfies 
$Q=0$ (see \eqref{Qconstraint}). Thus, we have:
\begin{equation}\label{Cvmupi}
K_\mu\paren{\abs{\mb{a}-\mb{a}^*}^2+\abs{v-v^*}^2}
\leq \abs{\widehat{\bm{\mu}}}^2+\abs{\pi_{\rm w}}^2
\end{equation}
for any solution in $\mathcal{V}$.

Choose a neighborhood $\mathcal{M}\subset\mathbb{R}^{N+1}$ of $(\mb{a}^*,v^*)$
such that $(\mb{a},v,\Phi(\mb{a}/v))\in \mathcal{U}\cap \mathcal{V}$
for all $(\mb{a},v)\in \mathcal{M}$. Consider the following set:
\begin{equation}
\mathcal{G}=\lbrace (\mb{a},v)\in \mathbb{R}^{N+1}\; |\; 
\wh{G}(\mb{a},v)<M_G, M_G>0\rbrace
\end{equation}
Since $\wh{G}(\mb{a},v)$ is a convex function such that $G(\mb{a}^*,v^*)=0$, 
$\mathcal{G}$ is an open neighborhood of 
$(\mb{a}^*,v^*)$, and can be made arbitrarily small by making $M_G$ small.
Take $M_G$ so small that $\mathcal{G}\subset \mathcal{N}\cap \mathcal{M}$.
Consider the following open neighborhood of $(\mb{a}^*,v^*,\phi^*)$:
\begin{equation}
\mathcal{W}=\lbrace \mb{y}=(\mb{a},v,\phi)\in \mathbb{R}^{N+2} \; | \; 
(\mb{a},v)\in \mathcal{G}, \mb{y}\in \mathcal{U}\cap \mathcal{V}\rbrace.
\end{equation}
Any solution in $\mathcal{W}$, by definition, belongs to 
$\mathcal{W}\cap \Gamma$. For any such solution, we have: 
\begin{equation}
\begin{split}
\D{\wh{G}}{t}&\leq -K_*(\abs{\widehat{\bm{\mu}}}^2+\abs{\wh{\pi}_{\rm w}}^2)\\
&\leq -K_*K_\mu\paren{\abs{\mb{a}-\mb{a}^*}^2+\abs{v-v^*}^2}
\leq -\frac{K_*K_\mu}{K_G}\wh{G}
\end{split}
\end{equation}
where we used \eqref{vsigdec} in the first inequality, \eqref{Cvmupi}
in the second inequality and \eqref{GCvequiv} in the third inequality.
Solving the above differential inequality, we have:
\begin{equation}
\wh{G}\leq M_G\exp(-K_*K_\mu t/K_G), \;  t\geq 0.
\end{equation}
We thus see that $\mathcal{W}\cap \Gamma$ is a positively invariant 
set, and thus, all solutions starting from $\mathcal{W}\cap \Gamma$
are defined for all time. By \eqref{GCvequiv}, $(\mb{a}(t),v(t))$ approaches 
$(\mb{a}^*,v^*)$ exponentially in time. 
Since $(\mathcal{W}\cap\Gamma)\subset(\mathcal{U}\cap \Gamma)$, 
$\phi=\Phi(\mb{a}(t)/v(t))$. Since $\Phi$ is a $C^1$ function,
$\phi(t)$ also approaches $\phi^*$ exponentially in time. 
\end{proof}

We are now ready to state the main result of this Section.
\begin{theorem}\label{genmain}
Suppose $j_k$ and $j_{\rm w}$ satisfy \eqref{noflux}, \eqref{posdef}
and \eqref{jwinc} and $j_k, p_k$ and $c_k^{\rm e}$ satisfy \eqref{ckejkpl}. 
For all sufficiently small $\alpha>0$, 
the steady states found in Proposition \ref{existence} are asymptotically 
stable. The approach to steady state is exponential in time.
\end{theorem}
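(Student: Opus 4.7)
The plan is to verify the hypotheses of Lemma \ref{stabcond} at the steady state $\mb{y}^* = (\mb{c}^*, v^*, \phi^*)$ produced by Proposition \ref{existence}. From the proof of that proposition, $\bm{\mu}^*$, $\phi^*$, and $1/v^*$ are all $O(\alpha)$ as $\alpha \to 0^+$, so the Jacobian $L^* := \partial\mb{j}/\partial\bm{\mu}|_*$ lies within $O(\alpha)$ of the symmetric positive definite matrix $L(0,0)$ given by \eqref{posdef}, and $(\partial I/\partial\phi)|_*$ computed at fixed $\mb{c}$ equals $\langle \mb{z}, L(0,0)\mb{z}\rangle + O(\alpha) > 0$ for $\alpha$ small. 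Lemma \ref{gensolexist} then supplies the local parametrization $\phi = \Phi(\mb{c})$ of $\Gamma_\alpha$ near $\mb{y}^*$ required by Lemma \ref{stabcond}. What remains is the dissipation inequality \eqref{vsigdec}.

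Starting from Lemma \ref{relative} and Taylor-expanding around the steady state, one has $\wh{j}_k = (L^*\wh{\bm{\mu}})_k + b^*_k \wh{\phi} + \text{h.o.t.}$ with $\mb{b}^* := \partial\mb{j}/\partial\phi|_* = O(\alpha)$ thanks to \eqref{djkdphi}, together with $j_{\rm w} = \zeta_0\wh{\pi}_{\rm w} + O(\wh{\pi}_{\rm w}^2)$ and $\wh{p}_k = O(|\wh{\bm{\mu}}| + |\wh{\phi}|)$. Assembling the contributions gives
\begin{equation*}
-\D{\wh{G}}{t} \;\geq\; \tfrac12\lambda_{\min}|\wh{\bm{\mu}}|^2 + \tfrac12\zeta_0\wh{\pi}_{\rm w}^2 - C\alpha|\wh{\bm{\mu}}|\bigl(|\wh{\bm{\mu}}| + |\wh{\phi}|\bigr) - O(\text{cubic terms}),
\end{equation*}
where $\lambda_{\min}$ is the smallest eigenvalue of $L(0,0)$. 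The cross term $\alpha|\wh{\bm{\mu}}||\wh{\phi}|$ is where the main difficulty lies: in $(\bm{\mu},\phi)$ coordinates the gradient of $I$ at $\mb{y}^*$ is $(L\mb{z} + O(\alpha), O(\alpha))$, so the tangent plane of $\Gamma_\alpha$ tilts toward the $\phi$-axis as $\alpha \to 0$, and no bound $|\wh{\phi}| \leq C|\wh{\bm{\mu}}|$ with $C$ uniform in $\alpha$ is available. This reflects a genuine degeneracy, because at $\alpha = 0$ the constraint $I = 0$ is satisfied identically on $\bm{\mu} = 0$ for any $\phi$.

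The technical heart of the argument is to recover control on $\wh{\phi}$ from $\wh{\pi}_{\rm w}$. Linearizing $\pi_{\rm w} = \sum c_k^{\rm e} - \sum c_k - A/v$ in $(\wh{\bm{\mu}}, \wh{\phi}, \wh{v})$ and eliminating $\wh{v}$ through the linearized constraint $Q = 0$ yields, for $z \neq 0$, $\wh{\pi}_{\rm w} = \sum_k c_k^*(z_k/z - 1)\wh{\mu}_k - (E/z + O(\alpha))\wh{\phi}$ with $E = \sum z_k^2 c_k^* > 0$; when $z = 0$ the constraint $Q = 0$ itself pins $\wh{\phi}$ to a linear function of $\wh{\bm{\mu}}$. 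In either case the coefficient of $\wh{\phi}$ is $O(1)$ and bounded away from zero, giving $|\wh{\phi}| \leq C_1(|\wh{\bm{\mu}}| + |\wh{\pi}_{\rm w}|)$ uniformly in small $\alpha$. Young's inequality then converts the cross term into $C_2\alpha(|\wh{\bm{\mu}}|^2 + \wh{\pi}_{\rm w}^2)$, which for sufficiently small $\alpha$ is absorbed into the leading dissipation; the cubic remainders are handled by shrinking the neighborhood. This delivers \eqref{vsigdec} with some $K_* > 0$, and Lemma \ref{stabcond} yields the asserted asymptotic stability with exponential approach.
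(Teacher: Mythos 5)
Your proposal is correct and follows essentially the same route as the paper: reduce to the dissipation inequality of Lemma \ref{stabcond}, expand the fluxes about the steady state, use \eqref{djkdphi} together with $\bm{\mu}^*\to\mb{0}$ to make the dangerous $\partial j_k/\partial\phi$ contribution small, and control $\wh{\phi}$ by $|\wh{\bm{\mu}}|+|\wh{\pi}_{\rm w}|$ through the electroneutrality constraint with coefficients uniform in small $\alpha$ (the paper packages this as the quadratic form $\int_0^1(L+B+C)\,ds$ with $B,C\to 0$, which is the same bookkeeping as your Young's-inequality absorption). The only nitpick is that under the stated $C^1$ regularity $\partial\mb{j}/\partial\phi|_*$ is merely $o(1)$ rather than $O(\alpha)$ as $\alpha\to 0$, but this does not affect the argument.
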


In Proposition \ref{linmain}, we used the symmetry condition of \eqref{posdef}
to show that the eigenvalues of the linearized matrix around steady
state are all real. Here, we cannot prove such a statement.
In fact, the proof to follow
goes through even if we assume \eqref{weakposdef} instead of 
\eqref{posdef}.

We also point out that, unlike Proposition \ref{linmain} or Theorem \ref{globalstab}, 
we can only draw conclusions when the pump rate is small ($\alpha$ is small). 
One may wonder whether it may be possible to generalize 
Theorem \ref{genmain} to the case when the pump rate is 
not necessarily small. 
For this, one would clearly need a condition 
stronger than \eqref{jwinc} or \eqref{posdef}. 
One natural idea is to require that 
$j_k$ and $j_{\rm w}$ satisfy \eqref{jwinc} and \eqref{posdef}
not only at $\pi_{\rm w}=0$ and $\bm{\mu}=0$ but at any 
arbitrary value of $\pi_{\rm w}$ and $\bm{\mu}$:
\begin{equation}\label{jwjkstrong}
\begin{split}
\PD{j_{\rm w}}{\pi_{\rm w}}(\pi_{\rm w})&>0 \text{ for all } \pi_{\rm w},\\
\PD{\mb{j}}{\bm{\mu}}(\phi,\bm{\mu})&
\text{ is a positive definite matrix for all } \phi
\text{ and } \bm{\mu}.
\end{split}
\end{equation}
This stronger condition is indeed satisfied by the Goldman 
equation \eqref{jkGHK} (and trivially by \eqref{jkL}). 
Let us assume the pump rates $\alpha p_k$ are constant.
A natural conjecture may be that if $j_{\rm w}$ and $j_k$
satisfy \eqref{jwjkstrong}, any steady state of \eqref{gensys} 
is stable (whether or not the pump rate is small). 
This statement is true {\em 
provided that $j_k$ is only a function of $\bm{\mu}$ and not 
a function of $\phi$}. 
In the case of \eqref{jkL} or \eqref{linsys}, 
$j_k$ is indeed only a function of $\bm{\mu}$.
Unfortunately, \eqref{jkGHK} is a function of 
both $\bm{\mu}$ and $\phi$. 
The danger when $j_k$ depends on $\phi$ independently of $\bm{\mu}$
is that $\partial j_k/\partial \phi$ may 
adversely affect the stability properties imparted by condition 
\eqref{jwjkstrong}. 

When the pump strength is small ($\alpha$ small), 
$\partial j_k/\partial \phi$ is small 
provided $\alpha$ is small thanks to condition \eqref{djkdphi}. 
This is one of the key observations 
that we will use in the proof to follow.

\begin{proof}[Proof of Theorem \ref{genmain}]
Let $\mb{y}^*=(\mb{c}^*,v^*,\phi^*)$ be the steady state found 
in Proposition \ref{existence}.
As $\alpha>0$ is made small, 
$\mb{y}^*\in \mathcal{C}_K\cap \Gamma$ 
defined in Proposition \ref{wellposed}. 
We shall henceforth assume that $\mb{y}^*\in \mathcal{C}_K\cap \Gamma$.

By Proposition \ref{wellposed},  
Lemma \ref{relative} and Lemma \ref{stabcond}, it is sufficient
to show that, for sufficiently small $\alpha$, 
there is a neighborhood 
$\mathcal{U}\subset \mathbb{R}^{N+2}$
of $(\mb{c}^*,v^*,\phi^*)$ such that any $\mb{y}=(\mb{c},v,\phi)\in 
\mathcal{U}\cap \Gamma$ satisfies the following inequality:
\begin{equation}\label{suffcond}
K_*\paren{\abs{\wh{\bm{\mu}}}^2+\abs{\wh{\pi}_{\rm w}}^2}
\leq \sum_{k=1}^N \wh{\mu}_k\paren{\wh{j}_k+\alpha\wh{p}_k}
+\wh{\pi}_{\rm w}j_{\rm w}\equiv J
\end{equation}
for some $K_*>0$. The neighborhood $\mathcal{U}$ may depend on $\alpha$.

Recall from the proof of Lemma \ref{stabcond}
that $(\widehat{\bm{\mu}},\pi_{\rm w}, Q)$ defines a coordinate 
system in the vicinity $\mathcal{N}$ of $(\mb{c}^*,v^*,\phi^*)$. 
Note that the point $(\mb{c}^*,v^*,\phi^*)$ is the origin in 
the coordinate system $(\widehat{\bm{\mu}},\pi_{\rm w}, Q)$.
Let:
\begin{equation}
\mathcal{D}_r=\lbrace (\mb{c},v,\phi)\in \mathcal{N} \; | \; 
\abs{\widehat{\bm{\mu}}}^2+\wh{\pi}_{\rm w}^2+Q^2<r^2, r>0\rbrace
\end{equation}
and take $r$ small enough so that $\mathcal{D}_r\subset \mathcal{C}_K$.
Define the set:
\begin{equation}
\Gamma_Q=\lbrace \mb{y}=(\mb{c},v,\phi)\in \mathbb{R}_+^N\times 
\mathbb{R}_+\times \mathbb{R} \; | \; Q(\mb{c},v)=0\rbrace.
\end{equation}
Since $\Gamma\subset \Gamma_Q$, it is clearly sufficient if we 
can find a small enough $r>0$ such that \eqref{suffcond} holds for any 
point in $\mathcal{D}_r\cap \Gamma_Q$. 
 
We can write $\wh{\phi}=\phi-\phi^*$
as a function of $\widehat{\bm{\mu}},\pi_{\rm w}$ and $Q$ in $\mathcal{D}_r$.
In particular, on $\mathcal{D}_r\cap\Gamma_Q$, $\wh{\phi}$ is a
function of $\wh{\bm{\mu}}$ and $\wh{\pi}_{\rm w}$ only. 
Call this function $\wh{\phi}=\varphi(\wh{\bm{\mu}},\wh{\pi}_{\rm w})$.
The function $\varphi$ satisfies:
\begin{equation}
0=\sum_{k=1}^N (z_k-z)c_k^*\paren{\exp
\paren{\widehat{\mu}_k-z_k\varphi}-1}
-z\wh{\pi}_{\rm w}.\label{varphidef}
\end{equation}
This equation is obtained by expressing $c_k$ and $v$ 
in terms of $\widehat{\mu}_k, \wh{\pi}_{\rm w}$ and $\wh{\phi}$ 
and substituting this into $Q=0$.

Now, take any point 
$\mb{w}=
(\wh{\bm{\mu}}^1,\wh{\pi}_{\rm w}^1, 0)=
\in\mathcal{D}_r\cap \Gamma_Q, \wh{\bm{\mu}}^1=
(\wh{\mu}_1^1,\cdots,\wh{\mu}_N^1)^T
$ where we have expressed the point
$\mb{w}$ using the $(\wh{\bm{\mu}},\wh{\pi}_{\rm w}, Q)$ coordinate 
system. 
Let us compute the right hand side of \eqref{suffcond} at this point.
\begin{equation}\label{R}
J(\wh{\bm{\mu}}^1,\wh{\pi}_{\rm w}^1)=\sum_{k=1}^N \wh{\mu}_k^1
\paren{\wh{j}_k(\varphi(\wh{\bm{\mu}}^1,\wh{\pi}_{\rm w}^1),
\wh{\bm{\mu}}^1)
+\alpha\wh{p}_k(\varphi(\wh{\bm{\mu}}^1,\wh{\pi}_{\rm w}^1),
\wh{\bm{\mu}}^1)}
+\wh{\pi}_{\rm w}^1j_{\rm w}(\wh{\pi}_{\rm w}^1)
\end{equation}
where we took $\wh{j}_k,\;\wh{p}_k$ as functions of 
$\wh{\phi}, \; \wh{\bm{\mu}}$
and $\wh{j}_{\rm w}$ as a function of $\wh{\pi}_{\rm w}$.
For $\wh{j}_k$, we have:
\begin{equation}
\begin{split}
\wh{j}_k(\varphi(\wh{\bm{\mu}}^1,\wh{\pi}_{\rm w}^1),\wh{\bm{\mu}}^1)=&\int_0^1 \D{}{s}\wh{j}_k\paren{\varphi(s\wh{\bm{\mu}}^1,s\wh{\pi}_{\rm w}),s\wh{\bm{\mu}}^1}ds\\
=&\sum_{l=1}^N\wh{\mu}_l^1
\int_0^1 \paren{\paren{\PD{\wh{j}_k}{\wh{\phi}}\PD{\varphi}{\mu_l}+
\PD{\wh{j}_k}{\wh{\mu}_l}}(s\wh{\bm{\mu}}^1,s\wh{\pi}_{\rm w})}ds\\
&+\wh{\pi}_{\rm w}^1\int_0^1\paren{
\PD{\wh{j}_k}{\phi}\PD{\varphi}{\wh{\pi}_{\rm w}}
(s\wh{\bm{\mu}}^1,s\wh{\pi}_{\rm w})}ds.
\end{split}
\end{equation}
Performing a similar calculation for $\wh{p}_k$ and $\wh{j}_{\rm w}$
and substituting this back into \eqref{R}, we obtain the following 
expression.
\begin{equation}
\begin{split}
J(\bm{\psi})&=\ip{\bm{\psi}}{P\bm{\psi}}_{\mathbb{R}^{N+1}}, \quad
\bm{\psi}=(\wh{\bm{\mu}}^1,\wh{\pi}_{\rm w}^1),\\
P&=\int_0^1(L+B+C)(s\wh{\bm{\mu}}^1,s\wh{\pi}_{\rm w}^1)ds,
\end{split}
\end{equation}
where  
$L,B$ and $C$ are $(N+1)\times (N+1)$ matrix-valued functions
defined on $\mathcal{D}_r\cap \Gamma_Q$, given as follows.
Let $L_{kl}, B_{kl}$ and $C_{kl}$
be the $kl$ entries of these matrices.
\begin{align}
L_{kl}&=\begin{cases}
\partial{\wh{j}_k}/\partial{\wh{\mu}_l} &\text{ if } 1\leq k,l\leq N,\\
\partial{\wh{j}_{\rm w}}/\partial{\wh{\pi}_w} &\text{ if } k=l=N+1,\\
0 & \text{ otherwise},
\end{cases}\\
B_{kl}&=\begin{cases}
(\partial{\wh{j}_k}/\partial{\wh{\phi}})
(\partial \varphi/\partial \wh{\mu}_l)
&\text{ if } 1\leq k,l\leq N,\\
(\partial{\wh{j}_k}/\partial{\wh{\phi}})
(\partial \varphi/\partial \wh{\pi}_{\rm w})
&\text{ if } 1\leq k\leq N, \; l=N+1,\\
0 &\text{ otherwise},
\end{cases}\\
C_{kl}&=\begin{cases}
\alpha(\partial{\wh{p}_k}/\partial{\wh{\mu}_l}+
(\partial{\wh{p}_k}/\partial{\wh{\phi}})
(\partial \varphi/\partial \wh{\mu}_l))
&\text{ if } 1\leq k,l\leq N,\\
\alpha (\partial{\wh{p}_k}/\partial{\wh{\phi}})
(\partial \varphi/\partial \wh{\pi}_{\rm w})
&\text{ if } 1\leq k\leq N, \; l=N+1,\\
0 &\text{ otherwise}.
\end{cases}
\end{align}
To show that \eqref{suffcond} is valid in $\mathcal{D}_r\cap \Gamma_Q$, 
it is sufficient to show that $L+B+C$ is positive definite in 
$\mathcal{D}_r\cap \Gamma_Q$ in the sense that:
\begin{equation}\label{ABC}
\ip{\mb{x}}{(L+B+C)\mb{x}}_{\mathbb{R}^{N+1}}\geq K_*\abs{\mb{x}}^2
\end{equation}
for any $\mb{x}\in \mathbb{R}^{N+1}$ with a constant $K_*>0$ that 
does not depend on the point in $\mathcal{D}_r\cap \Gamma_Q$.
Since $L,B$ and $C$ are continuous functions on $\mathcal{D}_r\cap \Gamma_Q$
and we may take $r$ as small as we want, all 
we have to show is that \eqref{ABC} holds at the origin, 
$(\wh{\bm{\mu}},\wh{\pi}_{\rm w},Q)=(\mb{0},0,0)$, or equivalently, 
at the steady state.

Let $L^*,B^*$ and $C^*$ be the evaluation of the three matrices 
at steady state. Since the steady state is a function of $\alpha$, 
$L^*,B^*$ and $C^*$ are functions of $\alpha$. 
We will show that \eqref{ABC} holds for sufficiently small $\alpha>0$.

First, let us examine the behavior of $\bm{\mu}^*,\pi_{\rm w}^*,\phi^*$
as a function of $\alpha$. By Proposition \ref{existence}, 
$\mb{c}^*,\phi^*$ are $C^1$ functions of $\alpha$ that 
approach $\mb{c}^{\rm e}, 0$ respectively as $\alpha \to 0$. 
Therefore, $\bm{\mu}^*$ is a $C^1$ function of $\alpha$ 
that approaches $\mb{0}$ as $\alpha \to 0$. 
It is clear that $\pi_{\rm w}^*=0$ for any $\alpha$.

The $kl$ entry of the matrix $L^*$ is given by:
\begin{equation}
L_{kl}^*=\begin{cases}
\at{\paren{\partial{j_k}/\partial{\mu_l}}}{\phi=\phi^*,\bm{\mu}=\bm{\mu^*}} &\text{ if } 1\leq k,l\leq N,\\
\at{\paren{\partial{j_{\rm w}}/\partial{\pi_{\rm w}}}}{\pi_{\rm w}=0} 
&\text{ if } k=l=N+1,\\
0 & \text{ otherwise}.
\end{cases}
\end{equation}
Since $\bm{\mu}^* \to \mb{0}$ and $\phi^* \to 0$ as $\alpha \to 0$,
given \eqref{posdef} and \eqref{jwinc}, there is a constant $K_L>0$
such that:
\begin{equation}\label{KA}
\ip{\mb{x}}{L^*\mb{x}}_{\mathbb{R}^{N+1}}\geq K_L\abs{\mb{x}}^2
\end{equation}
for sufficiently small $\alpha>0$.

To examine $B^*$ and $C^*$, let us first compute 
$\partial{\varphi}/\partial{\wh{\mu}_l}$ and 
$\partial{\varphi}/\partial{\wh{\pi}_{\rm w}}$.
This can be computed by taking the partial derivatives of \eqref{varphidef}:
\begin{equation}
\begin{split}
\at{\PD{\varphi}{\wh{\mu}_l}}{\wh{\bm{\mu}}=\mb{0}, \wh{\pi}_{\rm w}=0}
&=(z_l-z)c_l^*\paren{\sum_{k=1}^Nz_k^2c_k^*+z^2\frac{A}{v^*}}^{-1},\\
\at{\PD{\varphi}{\wh{\pi}_{\rm w}}}{\wh{\bm{\mu}}=\mb{0}, \wh{\pi}_{\rm w}=0}
&=-z\paren{\sum_{k=1}^Nz_k^2c_k^*+z^2\frac{A}{v^*}}^{-1}.
\end{split}
\end{equation}
Since $c_k^*\to c_k^e$ and $1/v^*\to 0$ as $\alpha \to 0$, 
we see that both of the above are bounded (and in fact 
has a definite limit) as $\alpha \to 0$.

Let us examine $B^*$. For $1\leq k,l\leq N$, the $kl$ entry of 
the matrix $B^*$ is given by:
\begin{equation}
B^*_{kl}=(z_l-z)c_l^*\paren{\sum_{k=1}^Nz_k^2c_k^*+z^2\frac{A}{v^*}}^{-1}
\at{\paren{\PD{j_k}{\phi}}}{\phi=\phi^*,\bm{\mu}=\bm{\mu}^*}.
\end{equation}
Recall that $\partial{j_k}/\partial{\phi}=0$ at $\bm{\mu}=\mb{0}$ from 
\eqref{djkdphi}.
Since $\phi^* \to 0$ and $\bm{\mu}^* \to \mb{0}$ as $\alpha \to 0$, 
by \eqref{noflux}, we see that $B_{kl}^* \to 0$ as $\alpha \to 0$.
Likewise, $B_{kl}^* \to 0$ as $\alpha \to 0$ when $1\leq k\leq N$ and 
$l=N+1$. 

Now, consider $C^*_{kl}$, the $kl$ entries of the matrix $C^*$.
For $1\leq k,l\leq N$, we have: 
\begin{equation}
C^*_{kl}=\alpha\at{
\paren{\PD{p_k}{\mu_l}
+(z_l-z)c_l^*\paren{\sum_{k=1}^Nz_k^2c_k^*+z^2\frac{A}{v^*}}^{-1}
\PD{p_k}{\phi}}}{\phi=\phi^*,\bm{\mu}=\bm{\mu}^*}.
\end{equation}
Given that $\bm{\mu}^*\to \mb{0},\; \phi^* \to 0$ as $\alpha \to 0$,
the quantity inside the outer-most parentheses remains bounded 
as $\alpha \to 0$. Thus, $C^*_{kl} \to 0$ as $\alpha \to 0$. 
The same conclusion holds for $C^*_{kl},\; 1\leq k\leq N,\; l=N+1$.

Since $L^*$ satisfies \eqref{KA} from sufficiently small $\alpha$
and $B^*$ and $C^*$ both tend to the zero matrix
as $\alpha \to 0$, $L^*+B^*+C^*$ satisfies \eqref{ABC}
with $K_*=K_L/2$ for small enough $\alpha$.
\end{proof}

\section{Discussion}

In this paper, we presented what the author believes is 
the first analytical result on the stability of steady states of pump-leak 
models. %% In sections \ref{lin} and \ref{general}, 
%% the starting point in studying stability was the free 
%% energy identity \eqref{FE}. Principles of non-equilibrium 
%% thermodynamics had been used in the context of epithelial models
%% to help write the equations \citep{}. However, this seems to 
%% be the first case in which this thermodynamic structure 
%% is used to draw conclusions about the dynamics of pump-leak 
%% models.  
In Section \ref{lin}, we studied the case in which 
the flux functions are linear in the chemical potential. 
In the proof of Proposition \ref{linmain}, we saw that 
the system can be seen as a gradient flow of a convex function.
This is nothing other than the relaxation law 
postulated in linear non-equilibrium thermodynamics,
$d\mb{X}/dt=-L\nabla_X G$,
where $\mb{X}$ is the vector of extensive variables, 
$L$ is the matrix of transport coefficients 
and $\nabla_X G$ is the gradient
of the free energy $G$ with respect to $\mb{X}$
\citep{katzir1965nonequilibrium,kjelstrup2008non}. 
There are two interesting points here. 
The first point is that this gradient flow is restricted 
to flow on a submanifold on which electroneutrality
holds. The electrostatic potential, as we 
discussed in the proof of Lemma \ref{propG} can then be seen
as a Lagrange multiplier of this gradient flow.
The second point 
is that we can find a suitable modification of the free 
energy ($\wt{G}$ or $\wh{G}$) so that our system is a 
gradient flow even in the presence of active currents.
We proved the following results.
If condition \eqref{fphimin}
is satisfied, there is a unique steady state that is 
globally asymptotically stable. If not, the cell volume 
tends to infinity as time $t\to \infty$.
The system is thus robust to external perturbations in the following sense.
Suppose the cell is subject to a change in 
extracellular concentration or pump rate that stays within 
the bounds of condition \eqref{fphimin}. The cell
will eventually approach the new global steady state. 
We saw that the same conclusions hold for a simple epithelial model, 
since it could be mapped to the single cell problem. 

In Section \ref{general}, we proved that 
steady states for pump-leak models are stable so long as 
the steady state is not too far away from thermodynamic equilibrium. 
The ``stable equilibrium state'' is 
the ``state'' at which all intracellular ionic concentrations $c_k$
are equal to the extracellular concentration $c_k^{\rm e}$ and the 
cell volume $v$ is infinite.
If the pumps work in the ``right direction'' (in the sense of 
Proposition \ref{existence})
$v$ can be made finite even with a small pump rate. 
A biophysical interpretation of Theorem \ref{genmain} is that  
if the pump rate is sufficiently 
small, the new steady state of finite volume 
is still close enough to thermodynamic equilibrium 
so that the steady state inherits the stability properties of the 
equilibrium state. 
It is interesting that stability 
of  thermodynamic equilibrium, 
which may be considered the ``dead'' state, confers stability 
to the ``live'' state.

The results of Section \ref{general}, 
though applicable to general pump-leak models,
only asserts the existence of at least 
one asymptotically stable steady state 
for small pump rates. 
To draw analytical conclusions at large pump rates
without the linearity assumption of Section \ref{lin}, 
it is likely that one would need to look at
special characteristics of specific pump-leak models. 
Our current study may thus be seen as complementing 
computational investigations of stability, in which 
one is not constrained to small pump rates 
\citep{weinstein1997dynamics}. We also point out that, 
for general pump-leak models, we cannot 
rule out the possibility of multiple steady states 
or of other non-trivial asymptotic behavior.
Indeed, \citep{weinstein1992analysis} reports
an instance in which there are two stable steady states 
in a non-electrolyte model of epithelial cell volume control.

Many epithelial models include effects not 
included in model \eqref{system} or \eqref{epi}. 
Of particular importance is the incorporation of  
acid-base reactions
\citep{weinstein1983nonequilibrium,strieter_volume-activated_1990}.
It is usually assumed that the acid-base reactions are sufficiently 
fast so that these reactions are in equilibrium. This gives 
rise to additional algebraic constraints, increasing the 
co-dimension of the differential algebraic system 
\citep{weinstein2002assessing,weinstein2004modeling,weinstein2009modeling}. 
It would be interesting to see whether the analysis of this 
paper can be extended to this case. 
A starting point for an analytical study of such models will 
likely be a free energy identity. A potential complication 
is that the algebraic constraints of acid-base reactions 
are not linear in the concentrations. This may pose difficulties 
in extending the global results of Section \ref{lin}. 
The author hopes to report on such an extension in future work.

Stability of steady states is just a starting point 
in the study of homeostatic control in epithelial systems. 
In \citep{weinstein2002assessing,weinstein2004modeling,weinstein2009modeling}, 
the authors go beyond stability to study 
the optimal control of homeostatic parameters by minimizing 
a quadratic cost function along a relaxation trajectory.  
We hope that our current study will
lead to new insights into such problems.

Free energy dissipation identities similar to \eqref{FE} are present 
in many models of soft-condensed matter physics 
\citep{doi1988theory,doi2009gel}, 
and can be used as a guiding principle in formulating 
models in dissipative systems \citep{eisenberg2010energy,mori_liu_eis}. 
The present work 
owes much of its inspiration to this body of work. 
We believe that there is much to be gained by a systematic 
application of these ideas to the study of physiological 
systems. We hope that this paper will be a starting point 
in this direction.

\noindent \textbf{Acknowledgments}\newline
\noindent This work was inspired by 
the numerous discussions the author had with Chun Liu and 
Robert S. Eisenberg at the IMA (Institute for Mathematics 
and its Applications). 
This work was supported by 
NSF grant DMS-0914963,
the Alfred P. Sloan Foundation and the McKnight Foundation.

\bibliographystyle{apalike}
\bibliography{mylib}

\end{document}